\documentclass[prd,twocolumn,showpacs,showkeys,floatfix,
  superscriptaddress]{revtex4-2}
\usepackage{verbatim}
\usepackage{amsfonts} 
\usepackage{amssymb} 
\usepackage{amsmath} 
\usepackage{amsthm}  
\usepackage{graphicx} 
\usepackage{subfigure}
\usepackage{array} 
\usepackage{dcolumn} 
\usepackage{bm} 
\usepackage{latexsym} 
\usepackage{longtable} 
\usepackage{hyperref} 
\usepackage[usenames,dvipsnames]{xcolor}
\usepackage{mathrsfs}
\usepackage{comment}
\usepackage{rotating}
\usepackage{here}
\usepackage{float}
\usepackage{multirow}
\usepackage{diagbox}
\usepackage{verbatim}
\usepackage{stmaryrd}


\graphicspath{{./figs/}}

\allowdisplaybreaks


\renewcommand{\Re}{\mathrm{Re}}
\renewcommand{\Im}{\mathrm{Im}}

\newtheorem{thm}{Theorem}[section]

\providecommand{\GeV}{\;\mathrm{GeV}}

\providecommand{\fm}{\;\mathrm{fm}}


\definecolor{HLBlue}{HTML}{6599FF}
\definecolor{HLOrange}{HTML}{FF6600}

\newcommand{\Tr}{\mathrm{Tr}}


\newcommand{\com}[1]{} 








\begin{document}

\title{Chiral symmetry and taste symmetry from the eigenvalue spectrum
  of staggered Dirac operators}

%

\author{Hwancheol Jeong}
\affiliation{
  Lattice Gauge Theory Research Center, FPRD, and CTP, \\
  Department of Physics and Astronomy,
  Seoul National University, Seoul 08826, South Korea
}

\author{Chulwoo Jung}
%
%
\affiliation{
  Physics Department, Brookhaven National Laboratory,
  Upton, NY11973, USA
}

\author{Seungyeob Jwa}
\affiliation{
  Lattice Gauge Theory Research Center, FPRD, and CTP, \\
  Department of Physics and Astronomy,
  Seoul National University, Seoul 08826, South Korea
}

\author{Jangho Kim}
\affiliation{
  Institut f\"ur Theoretische Physik,
  Goethe University Frankfurt am Main,
  Max-von-Laue-Str.~1,
  60438 Frankfurt am Main, Germany 
}

\author{Jeehun Kim}
\affiliation{
  Lattice Gauge Theory Research Center, FPRD, and CTP, \\
  Department of Physics and Astronomy,
  Seoul National University, Seoul 08826, South Korea
}

\author{Nam Soo Kim}
\affiliation{
  Department of Electrical and Computer Engineering and the Institute of
  New Media and Communications \\
  Seoul National University, Seoul 08826, South Korea
}

\author{Sunghee Kim}
\affiliation{
  Lattice Gauge Theory Research Center, FPRD, and CTP, \\
  Department of Physics and Astronomy,
  Seoul National University, Seoul 08826, South Korea
}

\author{Sunkyu Lee}
\affiliation{
  Lattice Gauge Theory Research Center, FPRD, and CTP, \\
  Department of Physics and Astronomy,
  Seoul National University, Seoul 08826, South Korea
}

\author{Weonjong Lee}
\email[E-mail: ]{wlee@snu.ac.kr}
\affiliation{
  Lattice Gauge Theory Research Center, FPRD, and CTP, \\
  Department of Physics and Astronomy,
  Seoul National University, Seoul 08826, South Korea
}

\author{Youngjo Lee}
\affiliation{
  Department of Statistics,
  Seoul National University, Seoul 08826, South Korea
}

\author{Jeonghwan Pak}
\affiliation{
  Lattice Gauge Theory Research Center, FPRD, and CTP, \\
  Department of Physics and Astronomy,
  Seoul National University, Seoul 08826, South Korea
}

%
%
%

\collaboration{SWME Collaboration}

\date{\today}

\begin{abstract}
  We investigate general properties of the eigenvalue spectrum for
  improved staggered quarks.
  We introduce a new chirality operator $[\gamma_5 \otimes 1]$ and a
  new shift operator $[1 \otimes \xi_5]$, which respect the same
  recursion relation as the $\gamma_5$ operator in the continuum.
  Then we show that matrix elements of the chirality operator
  sandwiched between two eigenstates of the staggered Dirac operator
  are related to those of the shift operator by the Ward identity of
  the conserved $U(1)_A$ symmetry of staggered fermion actions.
  We perform a numerical study in quenched QCD using HYP staggered
  quarks to demonstrate the Ward identity.
  We introduce a new concept of leakage patterns which collectively
  represent the matrix elements of the chirality operator and the
  shift operator sandwiched between two eigenstates of the staggered
  Dirac operator.
  The leakage pattern provides a new method to identify zero modes
  and non-zero modes in the Dirac eigenvalue spectrum.
  This method is as robust as the spectral flow method but requires
  much less computing power.
  Analysis using a machine learning technique confirms that the
  leakage pattern is universal, since the staggered Dirac eigenmodes
  on normal gauge configurations respect it.
  In addition, the leakage pattern can be used to determine a ratio
  of renormalization factors as a by-product.
  We conclude that it might be possible and realistic to measure the
  topological charge $Q$ using the Atiya-Singer index theorem and the
  leakage pattern of the chirality operator in the staggered fermion
  formalism.
\end{abstract}
\pacs{11.15.Ha, 12.38.Gc, 12.38.Aw}
\keywords{lattice QCD, Lanczos algorithm, chiral symmetry, staggered
    fermion, taste symmetry}

\maketitle

%
\section{Introduction}
\label{sec:intr}
%
It is important to understand the low-lying eigenvalue spectrum of the
Dirac operator, which exhibits the topological Ward identity of the
Atiya-Singer index theorem \cite{ Atiyah:1963zz}, the Banks-Casher
relationship \cite{ Banks:1979yr}, and the universality of the
distribution of the near-zero modes for fixed topological charge
sectors \cite{ Shuryak:1992pi, Leutwyler:1992yt}.
Study on the eigenvalue spectrum of the Dirac operator is, by nature,
highly non-perturbative.
Hence, numerical tools available in lattice gauge theory provide a
perfect playground to study diverse properties of the Dirac eigenvalue
spectrum.

In lattice QCD, there are a number of popular methods to implement a
discrete version of the continuum Dirac operator.
We are interested in one particular class of lattice fermions that are
widely used in the lattice QCD community: improved staggered quarks
\cite{ Hasenfratz:2001hp, Lee:2002ui, Follana:2006rc}.
Here we study the eigenvalue spectrum of staggered Dirac operators in
quenched QCD to show that the small eigenvalues near zero modes of the
staggered Dirac operators reproduce the continuum properties very
closely, which was originally noticed in Refs.~\cite{ Follana:2004sz,
  Follana:2005km, Durr:2004as}.
To reach this conclusion, the authors of Refs.~\cite{ Follana:2004sz,
  Follana:2005km} performed a number of tests, verifying consistency
of lattice data with (1) the Atiya-Singer index theorem that describes
the chiral Ward identity relating the zero modes to the topological
charge; (2) the Banks-Casher relationship that relates the chiral
condensate to the density of eigenvalues at the zero modes; and (3)
the universality of the small eigenvalue spectrum in the
$\varepsilon$-regime predicted by random matrix theory.
In addition, the authors of Refs.~\cite{ Azcoiti:2014pfa, Durr:2013gp}
used the spectral flow method of Adams \cite{ Adams:2009eb} to
identify the zero modes from a mixture with non-zero modes.
The spectral flow method is robust but highly expensive in a
computational sense.

Here, we introduce a new, advanced chirality operator $[\gamma_5
  \otimes 1]$, which respects the continuum algebra of $\gamma_5$.
We show that matrix elements of this chirality operator
between eigenstates are related to those of the shift operator $[1
  \otimes \xi_5]$ through the Ward identity of the conserved $U(1)_A$
symmetry of staggered fermions.
In addition, we introduce a new concept of leakage patterns to
distinguish zero modes from non-zero modes.
Using the leakage pattern of the chirality and shift operators, we
show that one can measure the zero modes as reliably as when using the
spectral flow method.
Hence, one could determine the topological charge $Q$ using the
leakage pattern with much smaller computational cost than by using the
spectral flow.
We also show that it is possible to determine the ratio of
renormalization constants $Z_{P\times S}/ Z_{P\times P}$ using the
leakage pattern.
%

In Section \ref{sec:qc}, we briefly review the continuum theory of the
eigenvalue spectrum and its relation to the quark condensate $\langle
\bar\psi\psi \rangle$. We also review the Atiya-Singer index theorem
in brief.
In Section \ref{sec:spec-decom}, we briefly review the eigenvalue
spectrum of staggered Dirac operators that is obtained using the
Lanczos algorithm.
In Section \ref{sec:chiral}, we briefly review the conserved $U(1)_A$
symmetry in the staggered fermion formalism and explain its role in
the eigenvalue spectrum of staggered Dirac operators.  We also present
numerical examples to help readers to understand basic concepts and
notation.
In Section \ref{sec:chirality}, we define the chirality operator
$[\gamma_5 \otimes 1]$ and the shift operator $[1 \otimes \xi_5]$.
We show that they respect the continuum recursion relation of
$\gamma_5$.
Then we derive the chiral Ward identity of the $U(1)_A$ symmetry
to show that the matrix elements of the chirality operator are
related to those of the shift operator through the Ward identity.
We discuss the eigenvalue spectrum in the continuum limit and
introduce a new notation of quartet indices.
Then we introduce the concept of leakage patterns for the chirality
operator and the shift operator.
We also present numerical examples to demonstrate that the leakage
patterns of zero modes are completely different from those of non-zero
modes.
%
In Section \ref{sec:machine:learn}, we review a machine learning
technique and describe how to apply it to extract efficiently the
quartet structure of non-zero modes using leakage patterns.
%
%
In Section \ref{sec:renorm}, we explain how the leakage pattern of the
zero modes can be used to determine the ratio of the renormalization
factors non-perturbatively.
%
%
In Section \ref{sec:conclude}, we conclude.
%
%
The appendices contain technical details on Lanczos algorithms and
mathematical proofs, and more plots of leakage patterns for diverse
topological charge values.

Preliminary results of this paper are published in Refs.~\cite{
  Cundy:2016tmw, Jeong:2017kst, Jeong:2020map}.



%
\section{Quark condensate in the continuum}
\label{sec:qc}
In the continuum the quark condensate is given by
\begin{align} \label{eq:qc_naive}
  \left\langle \bar{\psi}\psi \right\rangle 
  &= \frac{1}{N_f} \sum_{f} \left\langle 0|
  \bar{\psi}_f \psi_f |0 \right\rangle 
  \\ 
  &= - \frac{1}{V N_f} \int d^4 x
  \ \Tr \left( \frac{1}{ D + m } \right) \,,
\end{align}
where $D$ is the Dirac operator, $m$ is the quark mass, $x$ is the
space-time coordinate, $V$ is the volume, and $N_f$ is the number of
flavors with the same mass $m$.
The trace is a sum over spin and color.
Let us think of the eigenvalues of the Dirac operator.
$D$ is anti-Hermitian, so its eigenvalues are purely imaginary or zero.
\begin{align}
& D^\dagger = -D 
\\
& D u_\lambda(x) = i\lambda u_\lambda(x)
\end{align}
where $\lambda$ is a real eigenvalue, and $u_\lambda(x)$ is the
corresponding eigenvector.

By spectral decomposition \cite{Leutwyler:1992yt},
\begin{align}
  S_f(x,y) &= \langle \psi_f(x) \bar{\psi}_f(y) \rangle
  \nonumber \\
  &= \sum_{\lambda} \frac{1}{i\lambda + m} u_\lambda(x) u^\dagger_\lambda(y)
  \\
  \langle \bar{\psi}\psi \rangle &= - \frac{1}{V}
  \sum_\lambda \frac{1}{i\lambda + m} \int d^4 x \  \Tr ( u_\lambda(x)
  u_\lambda^\dagger(x) )
  \\
  &= - \frac{1}{V} \sum_\lambda \frac{1}{i\lambda + m}
\end{align}
where we adopt the normalization convention
\begin{align}
\langle u_a | u_b
\rangle = \displaystyle\int d^4x \; u^\dagger_a(x) u_b(x) =
\delta_{ab} \,.
\end{align}
Thanks to the chiral symmetry,
\begin{align}
\gamma_5 D &= - D \gamma_5
\\
D \gamma_5 | u_\lambda \rangle &= -i \lambda \gamma_5 | u_\lambda \rangle 
\end{align}
Let us define $u_{-\lambda} \equiv \gamma_5 u_\lambda$, so that $D
u_{-\lambda} = -i\lambda u_{-\lambda}$.
Hence, if there exists $u_\lambda$ for $\lambda \ne 0$, then the
parity partner eigenstate $u_{-\lambda}$ with negative eigenvalue
$-i\lambda$ must also exist.
Now let us separate the zero mode contribution from the spectral
decomposition.
\begin{equation}
  \langle \bar{\psi}\psi \rangle = - \frac{1}{V} 
  \sum_{\lambda > 0} \frac{2m}{\lambda^2+m^2} - \frac{n_+ + n_-}{m V} .
\end{equation}
Here, $n_+$ ($n_-$) is the number of right-handed (left-handed) zero
modes per flavor.
Let us define the subtracted quark condensate
$\langle\bar{\psi}\psi\rangle_{\textrm{sub}}$:
\begin{align} \label{eq:qc_sub}
  \langle \bar{\psi}\psi \rangle_{\textrm{sub}}
  &= \left\langle
  \bar{\psi}\psi \right\rangle + \frac{n_+ + n_-}{m V}
  = - \frac{1}{V} \sum_{\lambda > 0} \frac{2m}{\lambda^2+m^2} \,.
  \\
  &= - \frac{1}{V}\sum_{n} \frac{2m}{\lambda_n^2+m^2}
  \quad \text{with} \quad \lambda_n > 0
  \\
  &= - \int_{-\infty}^{+\infty} d\lambda \, \frac{m}{\lambda^2 + m^2}
  \,\rho_s(\lambda)  \,,
\end{align}
where the spectral density $\rho_s(\lambda)$ is
\begin{align}
  \rho_s(\lambda) &= \frac{1}{V} \sum_n \delta(\lambda - \lambda_n) \,.
\end{align}
Here, $\rho_s$ is a spectral density on a single gauge configuration
with volume $V$.
Now let us average over a full ensemble of gauge field configurations
and take the limit of infinite volume ($V \to \infty$).
Then, in that limit, the spectral density $\rho(\lambda) = \langle
\rho_s(\lambda) \rangle$ has a well defined (smooth and continuous)
value as $\lambda \to 0$.
We can define the chiral condensate as
\begin{align}
  \Sigma &= - \langle 0 | \bar{\psi} \psi | 0 \rangle_{\textrm{sub}}(m=0)
  \nonumber \\
  &= \lim_{m \to 0} \int_{-\infty}^{+\infty} d\lambda \,
  \frac{m}{\lambda^2 + m^2} \,\rho(\lambda)
  = \pi \rho(0) \,,
\end{align}
which is the Banks-Casher relation.
The subtracted quark condensate $\langle \bar{\psi}\psi
\rangle_{\textrm{sub}}$ is expected to behave well in the chiral limit,
even though the contribution from the zero modes is divergent as a simple
pole in the chiral limit.
Hence, in the numerical study on the lattice, it is important to
identify the would-be zero modes which correspond to the zero modes in
the continuum limit, and to remove them in the calculation of the quark
condensate.
Before proceeding, let us briefly go through the index theorem.
In the continuum theory in Euclidean space, the axial Ward identity
\cite{Weinberg:1996kr} is
\begin{equation}
  \partial_\mu A_\mu (x) = 2 m P(x) - 2 N_f q(x) \,.
  \label{eq:anom-WI}
\end{equation}
Here $A_\mu \equiv \bar{\psi} \gamma_\mu\gamma_5 \psi$ is the axial
vector current in the flavor singlet representation, $P \equiv
\bar{\psi} \gamma_5 \psi$ is the corresponding pseudo-scalar operator,
and $q \equiv \cfrac{1}{32\pi^2} \Tr [ F_{\mu\nu}
  \widetilde{F}_{\mu\nu} ]$ is the topological charge density (=
winding number density).
Now the topological charge $Q$ is
\begin{align}
  Q &\equiv \int d^4 x \left< q(x) \right> 
  \\
  &= - \frac{1}{2 N_f} \int d^4 x \left<
  \partial_\mu A_\mu (x) - 2 m P(x) \right> 
  \\
  &= \frac{m}{N_f} \int d^4 x
  \left< \bar{\psi} \gamma_5 \psi \right>
\end{align}
Using the spectral decomposition, we can rewrite $Q$ as follows,
\begin{align}
Q &= - m \sum_\lambda \frac{1}{i\lambda+m}
\int d^4 x \  \left[ u_\lambda^\dagger(x) \gamma_5 u_\lambda(x)
 \right] \,.
\end{align}
Noting that $\gamma_5 u_\lambda(x) = u_{-\lambda}(x)$ for $\lambda \ne
0$,
\begin{align}
\int d^4 x \  \left[ u_\lambda^\dagger(x) \gamma_5 u_\lambda(x)
  \right] = \langle u_\lambda | u_{-\lambda} \rangle = 0 \,. 
\end{align}
Hence, only zero modes contribute to $Q$.
For the zero modes, it is convenient to choose the helicity
eigenstates as the basis vectors so that $\langle u_0^L | \gamma_5 |
u_0^L \rangle = -1 $ and $\langle u_0^R | \gamma_5 | u_0^R \rangle =
+1 $, where the superscripts $L,R$ represent left-handed and
right-handed helicity, respectively.
Then deriving the index theorem is straightforward \cite{
  Atiyah:1963zz}:
\begin{equation}
  Q = n_- - n_+ \,,
  \label{eq:indexThm}
\end{equation}
where $n_+$ ($n_-$) is the number of the right-handed (left-handed)
zero modes.



%
\section{Spectral decomposition with staggered fermions}
\label{sec:spec-decom}
A number of improved versions of staggered fermions exist, such as
HYP-smeared staggered fermions \cite{ Hasenfratz:2001hp}, asqtad
improved staggered fermions \cite{ Bazavov:2009bb}, and highly
improved staggered quarks (HISQ) \cite{ Follana:2006rc}.
Here we refer to all of them collectively as ``staggered fermions.''
Staggered fermions have four tastes per flavor by construction \cite{
  Golterman:1984cy}.
Hence, the quark condensate for staggered fermions is defined as
\begin{equation}
  \langle \bar{\chi}\chi \rangle = - \frac{1}{V N_t} \left\langle
  \Tr \left( \frac{1}{ D_s + m } \right) \right\rangle_U \,,
   \label{eq:qc_stag}
\end{equation}
where $\chi$ represents a staggered quark field, $D_s$ is the staggered
Dirac operator for a single valence flavor, $V$ is the lattice volume, and
$N_t$ is the number of tastes.
We measure the quark condensate using a stochastic method.
\begin{align}
  (D_s+m)_{x,y} & \chi(y) = \xi(x)
  \\
  \chi(x) = & \left[ \frac{1}{D_s+m} \right]_{x,y} \xi(y)
  \\
  \Tr \left( \frac{1}{D_s+m} \right) &= \lim_{N_\xi \to \infty}
  \frac{1}{N_\xi} \sum_\xi \sum_y \xi^{\dagger}(y) \chi(y) \,,
  \label{eq:qc_gs}
\end{align}
where $x,y$ are representative indices which represent the space-time
coordinates, taste, and color indices collectively.
Here $\xi(x)$ represents either Gaussian random numbers or $U(1)$
noise random numbers which satisfy a simple identity:
\[
\lim_{N_\xi \to \infty} \frac{1}{N_\xi} \sum_\xi \xi^\dagger(x)
\xi(y) = \delta_{xy} \,,
\]
where $N_\xi$ is the number of random vector samples.

Staggered fermions have a taste symmetry $SU(4)_L \otimes SU(4)_R
\otimes U(1)_V$ in the continuum limit at $a=0$ \cite{ Lee:1999zxa}.
However, this symmetry breaks down to a subgroup $U(1)_V \otimes
U(1)_A$ on the lattice with $a \ne 0$ \cite{ Lee:1999zxa,
  Golterman:1984cy}.
The remaining axial symmetry $U(1)_A$ plays an important role in
protecting the quark mass from receiving an additive renormalization.
In addition, it does not have any axial anomaly.

\com{explain how to obtain $\lambda_i^2$ using Lanczos algorithm.}

The Dirac operator ($D_s$) of staggered fermions is anti-Hermitian:
$D_s^\dagger = - D_s$.
Hence, its eigenvalues are purely imaginary:
\begin{align}
  D_s | f^s_\lambda \rangle &= i\lambda | f^s_\lambda \rangle \,,
  \label{eq:stag-dirac-spec-1}
\end{align}
where $\lambda$ is real.
Here, the subscript $s$ and superscript $s$ represent staggered quarks.

In practice, when we obtain eigenvalues of $D_s$ numerically, we use
the following relationship instead of
Eq.~\eqref{eq:stag-dirac-spec-1}:
\begin{align}
  D_s^\dagger D_s | g^s_{\lambda^2} \rangle
  &= \lambda^2 | g^s_{\lambda^2} \rangle
  \label{eq:stag-dirac-spec-2}
\end{align}
where the $| g^s_{\lambda^2} \rangle$ state is a mixture of the two
eigenvectors $| f^s_{+\lambda} \rangle$ and $| f^s_{-\lambda}
\rangle$.
In other words,
\begin{align}
  | g^s_{\lambda^2} \rangle
  &= c_1 | f^s_{+\lambda} \rangle + c_2 | f^s_{-\lambda} \rangle
\end{align}
where the $c_i$ are complex numbers that satisfy the normalization
condition
\begin{align}
|c_1|^2 + |c_2|^2 &= 1 \,.
\end{align}
The numerical algorithm is a variation of a Lanczos algorithm adapted
for lattice QCD \cite{ Lanczos:1950zz}.
Details on the numerical algorithms as well as comprehensive
references are given in Appendix \ref{app:lanczos}.

\com{describe the advantage of obtaining $\lambda_i^2$ instead of
  $\lambda_i$.}

Why do we obtain $\lambda^2$ instead of $i\lambda$?
The first reason is that doing so allows us to use even-odd
preconditioning \cite{ DeGrand:1990dk}, which makes Lanczos run on
only even or odd sites on the lattice.
This leads to two benefits: One is that there is a substantial gain in
the speed of the code, and the other is that the code uses only half
the memory otherwise required.
Details on even-odd preconditioning are described in Appendix
\ref{app:phase}.
The second reason is that obtaining $\lambda^2$ instead of $i\lambda$
allows us to implement polynomial acceleration algorithms
\cite{Youcef:1984} into Lanczos more easily, since the eigenvalues of
$D_s^\dagger D_s$ are positive definite and have a lower bound
$\lambda^2 > 0$.
Note that staggered fermions have would-be zero modes whose
eigenvalues are small and positive ($\lambda^2 > 0$) on rough gauge
configurations.
There are no exact zero modes ($\lambda=0$) with staggered fermions on
rough gauge configurations \cite{Smit:1986fn}.
Details of our implementation of polynomial acceleration are described
in Appendix \ref{app:lanczos}.

\com{explain how to obtain $|f_{+\lambda} \rangle$ and
  $|f_{-\lambda} \rangle$.}

Hence, we use the Lanczos algorithm to solve
Eq.~\eqref{eq:stag-dirac-spec-2} for the eigenvector $|
g^s_{\lambda^2} \rangle$ as well as the corresponding eigenvalue
$\lambda^2$.
We obtain $|f^s_{+\lambda} \rangle$ and $|f^s_{-\lambda} \rangle$ by
using projection operators defined as
\begin{align}
  P_+ &= (D_s + i\lambda)
  \label{eq:proj_p}
  \\
  P_- &= (D_s - i\lambda) \,,
  \label{eq:proj_m}
\end{align}
where $P_+$ is the projection operator that selects only the $|
f^s_{+\lambda} \rangle$ component and removes the $| f^s_{-\lambda}
\rangle$ component.
Then
\begin{align}
  |\chi_+ \rangle &= P_+ | g^s_{\lambda^2} \rangle
  \label{eq:chi_p}
  \\
  |\chi_- \rangle &= P_- | g^s_{\lambda^2} \rangle
  \label{eq:chi_m}
\end{align}
and the orthonormal eigenvectors are 
\begin{align}
  |f^s_{+\lambda} \rangle &= \frac{|\chi_+ \rangle}
  { \sqrt{ \langle \chi_+ | \chi_+ \rangle } }
  \label{eq:f_p}
  \\
  |f^s_{-\lambda} \rangle &= \frac{|\chi_- \rangle}
  { \sqrt{ \langle \chi_- | \chi_- \rangle } } \,.
  \label{eq:f_m}
\end{align}
%


%
\section{Chiral symmetry of staggered fermions}
\label{sec:chiral}

\com{explain the chiral operators in the staggered fermion formalism. }

\com{First, describe the exact Ward identity ($U(1)_A$ symmetry)
  which relates $|f_{+\lambda} \rangle$ and $|f_{-\lambda} \rangle$.}

The two vectors $|f^s_{\pm\lambda}\rangle$ are related to each other
through a chiral Ward identity of staggered fermions.
Here we address this issue of the chiral symmetry of staggered
fermions and its consequences.
%


\subsection{Notation and definitions}
\label{ssec:notation}

Let us begin with notation and definitions.
For staggered fermions, there are two independent methods to
transcribe operators to the lattice: One is the Golterman method
\cite{ Golterman:1984cy, Golterman:1985dz, Golterman:1986jf}, and the
other is the Kluberg-Stern method \cite{ KlubergStern:1983dg,
  Verstegen:1985kt, Lee:2001hc, Lee:1994xs}.
%
%
In Appendix \ref{app:cmp:gol-klu}, we explain how to construct
chirality operators using both the Golterman method and the
Kluberg-Stern method, and we compare the two methods.
The comparison is summarized in Table \ref{tab:cmp-gol-klu-1}
of Appendix \ref{app:cmp:gol-klu}.
Since the Kluberg-Stern method respects the recursion relationship,
uniqueness of chirality, and the Ward identity while the Golterman
method does not, we adopt the former method to construct bilinear
operators.
%
%
Accordingly we define staggered bilinear operators as
\begin{align}
  \mathcal{O}_{S \times T} (x)
  &\equiv \sum_{A,B}
  \bar{\chi}(x_A) [\gamma_S \otimes \xi_T]_{AB} \chi(x_B)
  \nonumber \\
  &= \sum_{A,B} \bar{\chi}_a (x_A) \overline{(\gamma_S \otimes \xi_T)}_{AB}
  U(x_A,x_B)_{ab} \chi_b(x_B)
  \label{eq:bi-op-1}
\end{align}
where $\chi_b$ are staggered quark fields, and $a,b$ are color
indices.
Here the coordinate $x_A = 2x + A$, $x$ is a coordinate of the
hypercube, and $A,B$ are hypercubic vectors with $A_\mu, B_\mu \in
\{0,1\}$.  The spin-taste matrices are
\begin{align}
  \overline{(\gamma_S \otimes \xi_T)}_{AB} &\equiv
    \frac{1}{4} \Tr(\gamma_A^\dagger \gamma_S \gamma_B \gamma_T^\dagger)
    \label{eq:s-t-1}
\end{align}
where $\gamma_S$ represents the Dirac spin matrix, and
$\xi_T$ represents the $4 \times 4$ taste matrix.
In addition,
\begin{align}
  U(x_A,x_B) &\equiv
  \mathbb{P}_{SU(3)} \bigg[ \sum_{p \in \mathcal{C}}
    V(x_A,x_{p_1}) V(x_{p_1},x_{p_2})
    \nonumber \\
    & \qquad \qquad \qquad \cdots V(x_{p_n},x_{B})
    \bigg]
  \label{eq:U-1}
\end{align}
where $\mathbb{P}_{SU(3)}$ represents the $SU(3)$ projection, and
$\mathcal{C}$ represents the complete set of shortest paths from $x_A$
to $x_B$.
$V(x,y)$ represents the HYP-smeared fat link \cite{ Lee:2002ui,
  Hasenfratz:2001hp} for HYP staggered fermions, the Fat7 fat link
\cite{ Lee:2002ui, Lee:2002fj, Orginos:1999cr, Lepage:1998vj} for
asqtad staggered fermions or HISQ, and the thin gauge link for
unimproved staggered fermions.
%

The conserved $U(1)_A$ axial symmetry transformation is
\begin{align}
  \Gamma_\epsilon(A,B,a,b) &\equiv [\gamma_5 \otimes \xi_5]_{AB;ab}
  \nonumber \\
  &= \overline{ (\gamma_5 \otimes \xi_5) }_{AB} \cdot \delta_{ab}
  \nonumber \\  
  &= \epsilon(A) \cdot \delta_{AB} \cdot \delta_{ab} 
\end{align}
where $\Gamma_\epsilon$ is often called ``distance parity,'' and
\begin{align}
  \epsilon(A) &\equiv  (-1)^{S_A}
  \\
  S_A &\equiv \sum_{\mu=1}^4 A_\mu \,.
\end{align}
Under the $U(1)_A$ transformation, the staggered Dirac operator
transforms as follows,
\begin{align}
  & \Gamma_\epsilon D_s \Gamma_\epsilon = D_s^\dagger = -D_s
  \\
  & \Gamma_\epsilon D_s = -D_s \Gamma_\epsilon \,.
\end{align}
Therefore,
\begin{align}
  D_s | f^s_{+\lambda} \rangle &= +i \lambda | f^s_{+\lambda} \rangle
  \nonumber \\
  D_s \Gamma_\epsilon | f^s_{+\lambda} \rangle &=
  -i \lambda \Gamma_\epsilon | f^s_{+\lambda} \rangle \,,
\end{align}
and $f^s_{-\lambda}$ can be obtained from $f^s_{+\lambda}$ through
$\Gamma_\epsilon$ transformation as follows,
\begin{align}
  \Gamma_\epsilon | f^s_{+\lambda} \rangle &=
  e^{+i\theta} | f^s_{-\lambda} \rangle
  \nonumber \\
  \Gamma_\epsilon | f^s_{-\lambda} \rangle &=
  e^{-i\theta} | f^s_{+\lambda} \rangle \,.
  \label{eq:eps-WI}
\end{align}
In general, there is no constraint for the real phase $\theta$, so 
we expect its probability distribution to be random.
In practice, however, we make use of even-odd preconditioning, and
we obtain the odd site fermion fields ($| g_o \rangle$) from the
even site fermion fields ($| g_e \rangle$) with the relation $| g_o
\rangle = \eta\, D_{oe} | g_e \rangle$, where $D_{oe}$ is a sector of
$D_s$ that connects even site fields to odd site fields, and $\eta$
is a random complex number.
Hence the distribution of $\theta$ depends on our choice of $\eta$.
In our numerical study, we set $\eta=1$.
Then $\theta$ is given by
\begin{align}
  \theta &= \pi + 2\beta \,,
  \qquad
  \beta = \arctan(\lambda)
  \,.
  \label{eq:theta2}
\end{align}
Details on the even-odd preconditioning and the derivation of
Eq.~\eqref{eq:theta2} are explained in Appendix \ref{app:phase}.
We expect that if there exists an eigenvector $| f^s_{+\lambda}
\rangle$, there must be a corresponding parity partner $|
f^s_{-\lambda} \rangle$ due to the exact chiral symmetry
$\Gamma_\epsilon$.
In other words, the Ward identity of Eq.~\eqref{eq:eps-WI} comes
directly from the conserved $U(1)_A$ axial symmetry.

\subsection{Numerical Examples}
\label{ss:num-ex}

\com{Show examples.}

We now use numerical examples to demonstrate how the above theory
works in quenched QCD.
In Table \ref{tab:in-para}, details of the gauge configurations are
presented.
%
\begin{table}[!h]
  \caption{Input parameters for numerical study in quenched QCD.  For
    more details, refer to Ref.~\cite{Follana:2005km}.  The
    relationship between sample sizes in our study and the number of
    the gauge configurations is non-trivial and discussed later. }
  \label{tab:in-para}
  \renewcommand{\arraystretch}{1.2}
  \begin{ruledtabular}
    \begin{tabular}{r | l}
      parameters & values \\ \hline
      gluon action & tree level Symanzik
      \cite{ Luscher:1984xn, Luscher:1985zq, Alford:1995hw} \\
      tadpole improvement & yes \\
      $\beta$ & 5.0 \\
      geometry & $20^4$ \\
      $a$ & $0.077(1)\fm$ \cite{Bonnet:2001rc} \\
      $1/a$ & $2.6\GeV$ \\
      \hline
      valence quarks & HYP staggered fermions
      \cite{ Lee:2002ui, Kim:2011pz, Kim:2010fj} \\
      $N_f$ & $N_f=0$ (quenched QCD)
    \end{tabular}
  \end{ruledtabular}
\end{table}

\begin{figure}[t]
%
  \subfigure[]{
    \includegraphics[width=\linewidth]{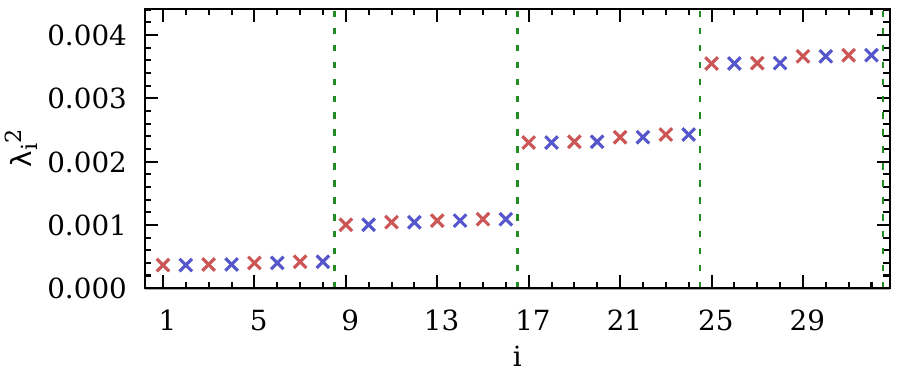}
    \label{sfig:q0-abs-ev}
  }
  \\
  \subfigure[]{
    \includegraphics[width=\linewidth]{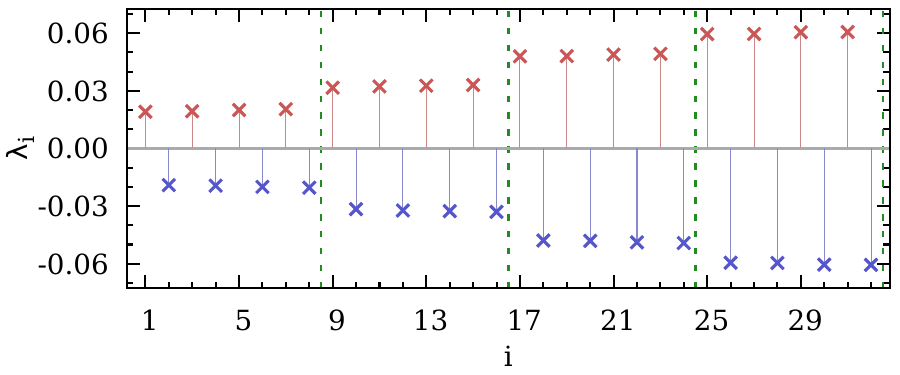}
    \label{sfig:q0-ev}
  }
  \caption{Eigenvalue spectrum of staggered Dirac operator on
  a $Q=0$ gauge configuration.}
  \label{fig:q0-ev}
\end{figure}
We measure the topological charge $Q$ using gauge links.
We use the $Q(\text{5Li})$ operator defined in Ref.~\cite{
  deForcrand:1997esx, deForcrand:1995qq} after $10 \sim 30$ iterations
of APE smearing with $\alpha = 0.45$ \cite{ Cichy:2014qta,
  Hasenfratz:1998qk, Falcioni:1984ei}.
We show an example of the eigenvalue spectrum for $Q=0$ in
Fig.~\ref{fig:q0-ev}.
Since $Q=0$, we do not expect to find any zero modes for this gauge
configuration.
In Fig.~\ref{fig:q0-ev}\,\subref{sfig:q0-abs-ev}, we show eigenvalues
$\lambda^2$ for the eigenvectors $| g^s_{\lambda^2} \rangle$ defined
in Eq.~\eqref{eq:stag-dirac-spec-2}.
Here we observe eight-fold degeneracy for non-zero eigenmodes due to
the conserved $U(1)_A$ axial symmetry.
Here $\lambda_2 = -\lambda_1$ and, in general, $\lambda_{2n} =
-\lambda_{2n-1}$ for integer $n > 0$.
In other words, $\lambda_{2n}$ is the parity partner of
$\lambda_{2n-1}$.
For each $\lambda_i$, there exists four-fold degeneracy due to
approximate $SU(4)$ taste symmetry.
For each of these four-fold degenerate eigenvalues (for example
$\lambda_1, \lambda_3, \lambda_5, \lambda_7$ in
Fig.~\ref{fig:q0-ev}\,\subref{sfig:q0-abs-ev}), there exists a parity
partner eigenvalue due to the $U(1)_A$ symmetry: $\lambda_2 = -
\lambda_1$, $\lambda_4 = - \lambda_3$, $\lambda_6 = - \lambda_5$, and
$\lambda_8 = - \lambda_7$ (refer to
Fig.~\ref{fig:q0-ev}\,\subref{sfig:q0-ev}).

Let us turn to an example with $Q=-1$.
Since $Q=-1$, we expect to observe four-fold would-be zero modes.
The gauge configurations are so rough that we expect
to observe not exact zero modes but would-be zero modes.
In Fig.~\ref{fig:qm1-ev}, we demonstrate how the would-be zero modes
behave on a gauge configuration with $Q=-1$.
As one can see in Figs.~\ref{fig:qm1-ev}\,\subref{sfig:qm1-abs-ev} and
\ref{fig:qm1-ev}\,\subref{sfig:qm1-ev}, we find four-fold degenerate
would-be zero modes: $\lambda_1, \lambda_2, \lambda_3, \lambda_4$.
Thanks to the $U(1)_A$ chiral Ward identity in Eq.~\eqref{eq:eps-WI},
we find that $\lambda_2 = - \lambda_1$ and $\lambda_4 = -\lambda_3$.
As in the case with $Q=0$, we find that the non-zero eigenmodes are
eight-fold degenerate.
This pattern of four-fold degeneracy for would-be zero modes
and eight-fold degeneracy for non-zero modes is also observed
in the cases with $Q=-2$ and $Q=-3$, which are presented in
Appendix \ref{app:qm2-qm3-dat}.
\begin{figure}[h]
%
  \subfigure[]{
    \includegraphics[width=\linewidth]{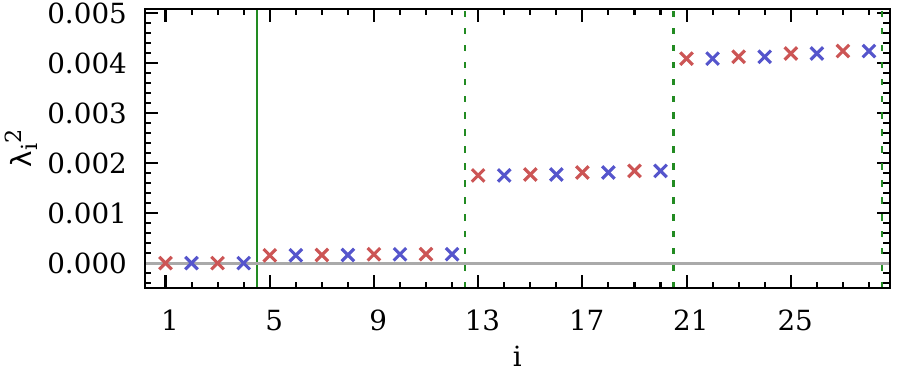}
    \label{sfig:qm1-abs-ev}
  }
  \\
  \subfigure[]{
    \includegraphics[width=\linewidth]{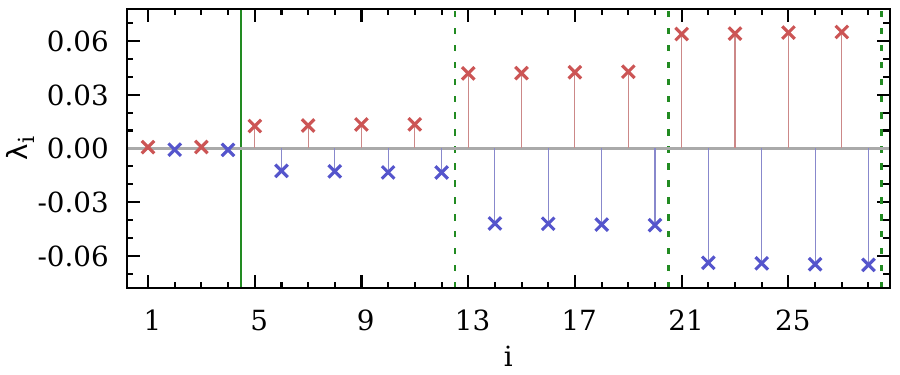}
    \label{sfig:qm1-ev}
  }
  \caption{Eigenvalue spectrum of staggered Dirac operator on
  a $Q=-1$ gauge configuration.}
  \label{fig:qm1-ev}
\end{figure}

At this point, the reader may have already concluded that we can
distinguish would-be zero modes of staggered quarks from non-zero
modes by counting the degeneracy of the eigenvalues \cite{
  Follana:2004sz, Follana:2005km, Donald:2011if}.
This conclusion is true but can lead to wrong answers in practice.
The reason is that, on large lattices, the eigenvalues are so dense
that visually distinguishing between 4-fold and 8-fold degeneracies
is typically impossible.
Hence, we need a practical and robust method to identify would-be
zero modes and non-zero modes of staggered fermions.
The introduction of such a method is the main subject of the next
section, Sec.~\ref{sec:chirality}.

Using the chiral Ward identity of Eq.~\eqref{eq:eps-WI}, we can
measure the phase $\theta$ numerically.
In Fig.~\ref{fig:theta}, we show numerical results (red circles) for
$\theta$.
Here, the blue line represents the theoretical prediction given in
Eq.~\eqref{eq:theta2}.
We find the results are consistent with the prediction within
numerical precision.
\begin{figure}[ht]
  \centering
  \includegraphics[width=\linewidth]{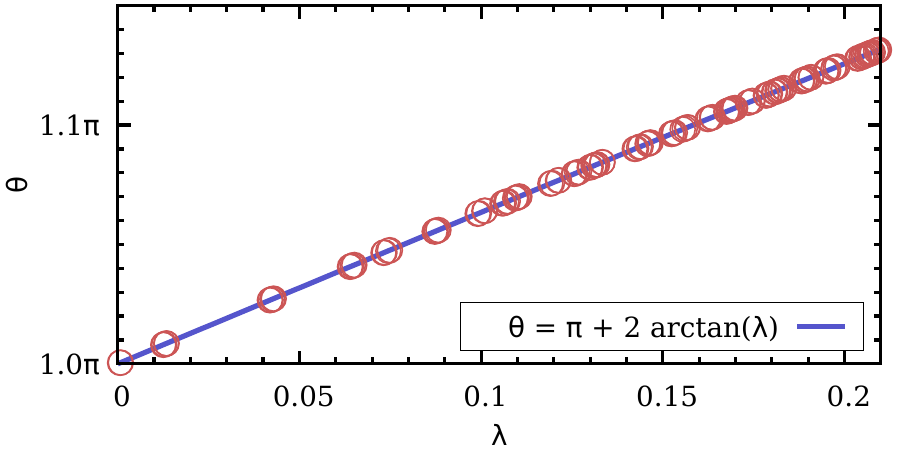}
  \caption{The phase $\theta$ as a function of $\lambda$.  Red circles
    represent numerical results for $\theta$.  The blue line
    represents the prediction from the theory.  Here we use a gauge
    configuration with $Q=-1$ for the measurement. }
  \label{fig:theta}
\end{figure}


%
\section{Chirality Measurement}
\label{sec:chirality}


\com{Second, describe chirality $[\gamma_5 \otimes 1]$ in zero modes
  and in non-zero modes.}

To simplify the notation, we introduce the following
convention for eigenvalue indices,
\begin{align}
  D_s | f_j \rangle &= i \lambda_j | f_j \rangle \,,
\end{align}
where $|f_j \rangle = |f^s_{\lambda_j} \rangle$ is
defined in Eq.~\eqref{eq:stag-dirac-spec-1}.
%
%
Using the Kluberg-Stern method explained in Appendix
\ref{app:cmp:gol-klu}, we define the chirality operator and
abbreviations as follows.
%
%
\begin{align}
  \Gamma_5 (\lambda_i,\lambda_j) &\equiv
  \langle f_i | [\gamma_5 \otimes 1] | f_j \rangle
  \nonumber \\
  &\equiv \sum_{x} \sum_{A,B} \, [f^s_{\lambda_i}(x_A)]^\dagger
  [\gamma_5 \otimes 1]_{x;AB} \, f^s_{\lambda_j}(x_B) \,,
  \label{eq:chirality-1}
  \\
  (\Gamma_5)^i_j &\equiv \Gamma_5 (\lambda_i,\lambda_j) \,,
  \label{eq:chirality-2}
  \\
  |\Gamma_5|^i_j &\equiv |\Gamma_5 (\lambda_i,\lambda_j)| \,,
  \label{eq:chirality-3}
\end{align}
where $x_{A}$, $x_{B}$ and $[\gamma_5 \otimes 1]$ are defined in
Eqs.~\eqref{eq:bi-op-1}-\eqref{eq:U-1} in Subsection
\ref{ssec:notation}, and $\lambda_i$ and $\lambda_j$ represent
eigenvalues of $D_s$.
The chirality operator $[\gamma_5 \otimes 1]$ satisfies the same
relationships as the continuum chirality operator $\gamma_5$.
\begin{align}
  & [\gamma_5 \otimes 1]^{2n+1} = [\gamma_5 \otimes 1] \,,
  \label{eq:recur-1}
  \\
  & [\gamma_5 \otimes 1]^{2n} = [1 \otimes 1] \,,
  \label{eq:recur-2}
  \\
  & [\frac{1}{2}(1\pm\gamma_5) \otimes 1]^n
  = [\frac{1}{2}(1\pm\gamma_5) \otimes 1] \,,
  \label{eq:recur-3}
  \\
  &  [\frac{1}{2}(1+\gamma_5) \otimes 1]
    [\frac{1}{2}(1-\gamma_5) \otimes 1] = 0\,,
  \label{eq:recur-4}
\end{align}
where $n$ is a non-negative integer.
A rigorous proof of Eqs.~\eqref{eq:recur-1}-\eqref{eq:recur-4}
is given in Appendix \ref{app:recur}.
%

%
Our definition of the chirality operator $[\gamma_5 \otimes 1]$ uses
the Kluberg-Stern method, and is
different from that used in Refs.~\cite{ Smit:1986fn, Follana:2004sz,
  Adams:2009eb}, which adopt the Golterman method.
The old chirality operator (the Golterman method) of Refs.~\cite{
  Smit:1986fn, Follana:2004sz, Adams:2009eb} does not satisfy the
recursion relations of Eqs.~\eqref{eq:recur-1}-\eqref{eq:recur-4}.
In addition, it does not satisfy the chiral Ward identities of
Eqs.~\eqref{eq:wi:u(1)a-1}-\eqref{eq:wi:u(1)a-3}.
This difference is addressed in Appendices \ref{app:cmp:gol-klu} and
\ref{app:recur}.
The bottom line is that the conventional chirality operator (the
Golterman method) does not satisfy the recursion relationships in
Eqs.~\eqref{eq:recur-1}-\eqref{eq:recur-4}, even though it is
classified according to the true irreducible representations of the
lattice rotational symmetry group \cite{ Golterman:1985dz,
  Golterman:1986jf, Verstegen:1985kt}.
%

\com{Third, describe chirality $[1 \otimes \xi_5]$ in zero modes and in
  non-zero modes.}

For further discussion we define another operator $[1 \otimes \xi_5]$,
which we call the ``(maximal) shift operator,''
\begin{align}
  \Xi_5 (\lambda_i,\lambda_j) &\equiv
  \langle f_i | [1 \otimes \xi_5] | f_j \rangle
  \nonumber \\
  &\equiv \sum_x \sum_{A,B} \, [f^s_{\lambda_i}(x_A)]^\dagger
  [1 \otimes \xi_5]_{x;AB} \, f^s_{\lambda_j}(x_B) \,,
  \label{eq:pp-op-1}
  \\
  (\Xi_5)^i_j &\equiv \Xi_5 (\lambda_i,\lambda_j) \,,
  \label{eq:pp-op-2}
  \\
  |\Xi_5|^i_j &\equiv |\Xi_5 (\lambda_i,\lambda_j)| \,.
  \label{eq:pp-op-3}
\end{align}
where $x_{A}$, $x_{B}$ and $[1 \otimes \xi_5 ]$ are defined in
Eqs.~\eqref{eq:bi-op-1}-\eqref{eq:U-1} in Subsection
\ref{ssec:notation}, and $\lambda_i$ and $\lambda_j$ represent
eigenvalues of $D_s$.
This shift operator satisfies the following recursion relations:
\begin{align}
  &[1 \otimes \xi_5]^{2n+1} = [1 \otimes \xi_5]\,,
  \label{eq:rr-xi5-1}
  \\
  &[1 \otimes \xi_5]^{2n} = [1 \otimes 1]\,,
  \label{eq:rr-xi5-2}
\end{align}
where $n$ is a non-negative integer.
The conserved $U(1)_A$ symmetry transformation can be expressed in
terms of the chirality operator and the shift operator as follows,
\begin{align}
  \Gamma_\epsilon &\equiv [\gamma_5 \otimes \xi_5]
  \nonumber \\
  &= [\gamma_5 \otimes 1 ][1 \otimes \xi_5]
  \nonumber \\
  &= [1 \otimes \xi_5] [\gamma_5 \otimes 1] \,.
  \label{eq:wi:u(1)a-1}
\end{align}
A rigorous proof of Eq.~\eqref{eq:wi:u(1)a-1} is given in Appendix
\ref{app:recur}.
In addition, the chirality and shift operators satisfy the
following relations:
\begin{align}
  \Gamma_\epsilon [\gamma_5 \otimes 1 ]
  &= [\gamma_5 \otimes 1 ] \Gamma_\epsilon = [1 \otimes \xi_5] \,,
  \label{eq:wi:u(1)a-2}
  \\
  \Gamma_\epsilon [1 \otimes \xi_5 ]
  &= [1 \otimes \xi_5 ] \Gamma_\epsilon
  = [\gamma_5 \otimes 1 ] \,.
  \label{eq:wi:u(1)a-3}
\end{align}
A rigorous proof of Eqs.~\eqref{eq:wi:u(1)a-2}-\eqref{eq:wi:u(1)a-3}
is also given in Appendix \ref{app:recur}.
%
%
Therefore, we obtain the Ward identities:
\begin{align}
  e^{+i\theta} [\gamma_5 \otimes 1] | f_{-i} \rangle
  &= [1 \otimes \xi_5] | f_{+i} \rangle \,,
  \nonumber \\
  e^{-i\theta} [\gamma_5 \otimes 1] | f_{+i} \rangle
  &= [1 \otimes \xi_5] | f_{-i} \rangle \,,
  \label{eq:wi-3}
\end{align}
where
\begin{align}
  | f_{\pm i} \rangle &\equiv | f^s_{\pm \lambda_i} \rangle \,.
\end{align}

\com{Fourth, explain the concept of leakage.}

Hence, we define the spectral decomposition by
\begin{align}
  [\gamma_5 \otimes 1] | f_j \rangle
  &= \sum_i (\Gamma_5)^i_j | f_i \rangle
  \label{eq:spec-decomp-3}
\end{align}
where we use Eqs.~\eqref{eq:chirality-1} and \eqref{eq:chirality-2}.
Similarly,
\begin{align}
  [1 \otimes \xi_5] | f_j \rangle
  &= \sum_i (\Xi_5)^i_j | f_i \rangle
\end{align}
where we use Eqs.~\eqref{eq:pp-op-1} and \eqref{eq:pp-op-2}.
Thanks to the Ward identities of Eq.~\eqref{eq:wi-3}, we obtain
\begin{align}
  & e^{-i\theta} \Gamma_5(\lambda_i,+\lambda_j)
  = \Xi_5(\lambda_i, -\lambda_j)
  \nonumber \\
  & e^{-i\theta} (\Gamma_5)^i_{+j} = (\Xi_5)^i_{-j} 
  \nonumber \\
  & |\Gamma_5|^i_{+j} = |\Xi_5|^i_{-j} \,.
  \label{eq:wi-5}
\end{align}
Similarly,
\begin{align}
  & e^{+i\theta} \Gamma_5(\lambda_i,-\lambda_j) = \Xi_5(\lambda_i, +\lambda_j)
  \nonumber \\
  & e^{+i\theta} (\Gamma_5)^i_{-j} = (\Xi_5)^i_{+j}
  \nonumber \\
  & |\Gamma_5|^i_{-j} = |\Xi_5|^i_{+j} \,.
  \label{eq:wi-6}
\end{align}

Applying $\Gamma_\epsilon$ to both sides of
Eq.~\eqref{eq:spec-decomp-3}, we obtain
\begin{align}
  [1 \otimes \xi_5] | f_{j} \rangle
  &= \sum_\ell (\Gamma_5)^\ell_j
  e^{i\theta_\ell} | f_{-\ell} \rangle
  \\
  &= \sum_i (\Xi_5)^i_j | f_{i} \rangle \,.
\end{align}
Hence, we obtain another Ward identity:
\begin{align}
  |\Gamma_5|^{-i}_j &= |\Xi_5|^{+i}_j \,.
  \label{eq:wi-7}
\end{align}
Similarly, we can obtain the Ward identities:
\begin{align}
  |\Gamma_5|^{-i}_{-j} &= |\Xi_5|^{+i}_{-j} \,,  \\
  |\Gamma_5|^{+i}_{j}  &= |\Xi_5|^{-i}_{j} \,.
  \label{eq:wi-8}
\end{align}
We can summarize all the results of
Eqs.~\eqref{eq:wi-5}-\eqref{eq:wi-8} in the following form:
\begin{align}
  |\Gamma_5|^i_j &= |\Xi_5|^{-i}_j = |\Xi_5|^i_{-j}
  = |\Gamma_5|^{-i}_{-j} \,,
  \label{eq:wi-10} \\
  \Leftrightarrow\;
  |\Gamma_5(\lambda_i, \lambda_j)| &= |\Xi_5(-\lambda_i, \lambda_j)| =
  |\Xi_5(\lambda_i, -\lambda_j)| \nonumber\\
  &= |\Gamma_5(-\lambda_i, -\lambda_j)| \,.
  \label{eq:wi-11}
\end{align}
In addition, Hermiticity ensures that we can interchange $\lambda_i$
and $\lambda_j$, which gives the final form of the chiral Ward
identities.
\begin{widetext}
\begin{align}
  |\Gamma_5|^i_j &= |\Xi_5|^{-i}_j = |\Xi_5|^i_{-j} = |\Gamma_5|^{-i}_{-j}
  = |\Gamma_5|^j_i = |\Xi_5|^{-j}_i = |\Xi_5|^j_{-i} = |\Gamma_5|^{-j}_{-i}
  \label{eq:wi-12}
  \\
  \Leftrightarrow\;
  |\Gamma_5(\lambda_i, \lambda_j)|
  &= |\Xi_5(-\lambda_i, \lambda_j)|
  = |\Xi_5(\lambda_i, -\lambda_j)|
  = |\Gamma_5(-\lambda_i, -\lambda_j)|
  = |\Gamma_5(\lambda_j, \lambda_i)|
  = |\Xi_5(-\lambda_j, \lambda_i)|
  \nonumber \\
  &= |\Xi_5(\lambda_j, -\lambda_i)|
  = |\Gamma_5(-\lambda_j, -\lambda_i)| \,.
  \label{eq:wi-13}
\end{align}
\end{widetext}
The quantity $(|\Gamma_5|^i_j)^2$ for $i \ne j$ represents the leakage
probability of the chirality operator.
We call $|\Gamma_5|^i_j$ the leakage parameter for the chirality
operator.
Similarly, the quantity $(|\Xi_5|^i_j)^2$ for $i \ne j$ represents the
leakage probability of the shift operator, and we call $|\Xi_5|^i_j$
the leakage parameter for the shift operator.
By examining the leakage pattern, we can distinguish zero modes and
non-zero modes, which is the main subject of the next subsections.

\subsection{Eigenvalue spectrum of $D_s$ in the continuum}
\label{subsec:eigen-spec}
Here we consider staggered quark actions at $a=0$.
We define a general form of the shift operator corresponding to a
generator of the $SU(4)$ taste symmetry:
\begin{align}
  \Xi_F &= [1 \otimes \xi_F] \,,
  \\
  \xi_F &\in \{\;\xi_5, \;\xi_\mu, \;\xi_{\mu 5}, \;\xi_{\mu\nu} \;\}
  \qquad \text{for $\mu \ne \nu$} \,,
\end{align}
where $\xi_\mu$ respects the Clifford algebra $\{ \xi_\mu, \xi_\nu \}
= 2 \delta_{\mu\nu}$ in Euclidean space.

Let us consider the following quantity $W_1$ in the continuum:
\begin{align}
  W_1 &\equiv \langle f_\ell |\; \Xi_F \; D_s | f_n \rangle
  \\
  D_s | f_n \rangle &= i \lambda_n |f_n \rangle \,.
\end{align}
Since the $SU(4)$ taste symmetry is exactly conserved in the continuum,
we know that
\begin{align}
  [\Xi_F, D_s] &= 0
\end{align}
Hence, we find the following Ward identity:
\begin{align}
  W_1 &= \langle f_\ell | \; \Xi_F \; D_s | f_n \rangle
  = i \lambda_n \langle f_\ell | \; \Xi_F | f_n \rangle
  \\
  &= \langle f_\ell | \; D_s \; \Xi_F | f_n \rangle
  = i \lambda_\ell \langle f_\ell | \; \Xi_F | f_n \rangle
\end{align}
and therefore
\begin{align}
  i(\lambda_\ell -\lambda_n) \cdot \langle f_\ell | \; \Xi_F | f_n \rangle
  &= 0 \,.
\end{align}
Hence, in the continuum, we find the following properties
of the eigenvalue spectrum:
\begin{itemize}
\item If $\lambda_\ell \ne \lambda_n$, $(\Xi_F)^\ell_n = \langle
  f_\ell | \; \Xi_F | f_n \rangle = 0$.  In other words, if the
  eigenvalues are different, there is no leakage ($(\Xi_F)^\ell_n =
  0$) between the two eigenmodes.
\item If $\lambda_j \equiv \lambda_\ell =\lambda_n $, $(\Xi_F)^\ell_n
  \ne 0$ is possible. In this case, the eigenvalues are degenerate
  and belong to a quartet such that
  \begin{align}
    D_s | f_{j,m} \rangle &= i \lambda_j | f_{j,m} \rangle
    \\
    |f_\ell \rangle, |f_n \rangle &\in \{ | f_{j,m} \rangle
    \; \text{with $m=1,2,3,4$} \} \,,
  \end{align}
  where $m$ is a taste index which represents the four-fold degeneracy
  for the eigenvalue $\lambda_j$, $|f_\ell \rangle$ and $|f_n \rangle$
  are linear combinations of the quartet $\{ | f_{j,m} \rangle \}\,,$
  and the eigenvectors for different $m$ are orthogonal to each other
  by construction due to the Lanczos algorithm.
\item We know that the staggered fermion field $\chi^{c}(x_A)$ is
  mapped into the continuum fermion field $\psi_{\alpha;t}^{c}(x)$,
  where $\alpha$ represents a Dirac spinor index, $c$ represents a
  color index, and $t=1,2,3,4$ represents a taste index.
  Hence, for a given eigenvalue $\lambda_j$, there remain four
  degrees of freedom which come from the taste index.
  Accordingly, for a given eigenvalue $\lambda_j$, there are four
  degenerate eigenstates $|f_{j,m} \rangle$ with $m=1,2,3,4$.
\item If we know all four eigenstates $\{ | f_{j,m} \rangle \}$ for a
  certain eigenvalue $\lambda_j$, we find that
  \begin{align}
    \Tr (\Xi_F) &= \sum_{m=1}^4 (\Xi_F)^{j,m}_{j,m}
    \nonumber \\
    &= \sum_{m=1}^4 \langle f_{j,m} | \Xi_F | f_{j,m} \rangle = 0
  \end{align}
  This is because the $SU(4)$ group generators are traceless in the
  fundamental representation.
\end{itemize}

However, on the lattice at $a \ne 0$, the taste symmetry is broken
by terms of order $a^2 \alpha_s^n$ with $n \ge 1\,,$ which is
explained in Ref.~\cite{Lee:1999zxa}.
In addition, for $a \ne 0$,
\begin{align}
  D_s | f_{j,m} \rangle &= i\lambda_{j,m} | f_{j,m} \rangle
\end{align}
and $\lambda_{j,m} \ne \lambda_{j,m'}$ in general for $m \ne m'$,
which reflects the taste symmetry breaking effect at $a \ne 0$.
We know that $\lambda_{j,m} = \lambda_{j,m'}$ for all $m,m'$
in the continuum ($a=0$) due to the exact taste symmetry.
Hence, on the finite lattice, we expect a small deviation from
the above continuum properties.
A good barometer to measure this effect is to monitor $T_5$
\begin{align}
  T_5 &\equiv \frac{1}{4} \Tr (\Xi_5)
  = \frac{1}{4} \sum_m (\Xi_5)^{j,m}_{j,m} 
  \label{eq:t5}
\end{align}
and measure how much it deviates from zero (the continuum value).
Another direct barometer to measure effects of taste symmetry breaking
is the leakage $S_5$ from one quartet ($\lambda_\ell$) to another
quartet ($\lambda_j$) with $\lambda_\ell \ne \lambda_j$.
\begin{align}
  S_5 &\equiv \frac{1}{16} \sum_{m,m'} |\Xi_5|^{\ell,m}_{j,m'}
  = \frac{1}{16} \sum_{m,m'}
  | \langle f_{\ell,m} | \; \Xi_5 | f_{j,m'} \rangle |
  \label{eq:s5}
\end{align}
The size of the leakage $S_5$ indicates directly how much the taste
symmetry is broken at $a \ne 0$, since $S_5 = 0$ in the continuum.
We present numerical results for $T_5$ and $S_5$ in the next
subsection.
\subsection{Numerical study on chirality and leakage}
\label{subsec:leak}

Here we use dual notations for the eigenmodes: One is the serial
index $i$ for $\lambda_i$ and the other is the quartet index $j$ with
taste index $m$ for $\lambda_{j,m}$ .
The serial index is convenient for the plots, tables, and leakage
patterns such as $|\Gamma_5|^a_b$, while the quartet index is
convenient to explain the eigenstates classified by the taste symmetry
group.
The one-to-one mapping from the serial index $i$ to the quartet
indices $j,m$ is given in Table \ref{tab:map-i-j} for the quartet
index $j=0, \pm 1$ when $Q = \pm 1$.
The mapping for the quartet index $j=\pm 2$ (non-zero modes) is
similar.
\begin{table}[!tbhp]
  \caption{ One-to-one mapping of serial index $i$ of the $\lambda_i$
    eigenstate into a quartet index $j$ and taste index $m$ for
    $\lambda_{j,m}=\lambda_i$.  Here $\lambda_{2n} = -\lambda_{2n-1}$
    and $\lambda_{-j,m} = -\lambda_{+j,m}$. Here we assume $Q=\pm1$.}
  \label{tab:map-i-j}
  \renewcommand{\arraystretch}{1.2}
  \begin{ruledtabular}
    \begin{tabular}{l | l | c | r  r | l}
      $\lambda_i$ & $\lambda_{j,m}$ & $i$ & $j$ & $m$ & mode
      \\ \hline
      $\lambda_1$ & $\lambda_{0,1}$ & 1   & 0   & 1  & zero \\
      $\lambda_2$ & $\lambda_{0,2}$ & 2   & 0   & 2  & zero \\
      $\lambda_3$ & $\lambda_{0,3}$ & 3   & 0   & 3  & zero \\
      $\lambda_4$ & $\lambda_{0,4}$ & 4   & 0   & 4  & zero 
      \\ \hline
      $\lambda_5$ & $\lambda_{+1,1}$ & 5   & $+1$ & 1 & non-zero \\
      $\lambda_7$ & $\lambda_{+1,2}$ & 7   & $+1$ & 2 & non-zero \\
      $\lambda_9$ & $\lambda_{+1,3}$ & 9   & $+1$ & 3 & non-zero \\
      $\lambda_{11}$ & $\lambda_{+1,4}$ & 11   & $+1$ & 4 & non-zero 
      \\ \hline
      $\lambda_6$ & $\lambda_{-1,1}$ & 6   & $-1$ & 1 & non-zero \\
      $\lambda_8$ & $\lambda_{-1,2}$ & 8   & $-1$ & 2 & non-zero \\
      $\lambda_{10}$ & $\lambda_{-1,3}$ & 10   & $-1$ & 3 & non-zero \\
      $\lambda_{12}$ & $\lambda_{-1,4}$ & 12   & $-1$ & 4 & non-zero 
    \end{tabular}
  \end{ruledtabular}
\end{table}

\begin{figure}[t]
%
  \subfigure[$|\Gamma_5|^i_1 = |\Gamma_5( \lambda_i, \lambda_1)|$]{
    \includegraphics[width=\linewidth]{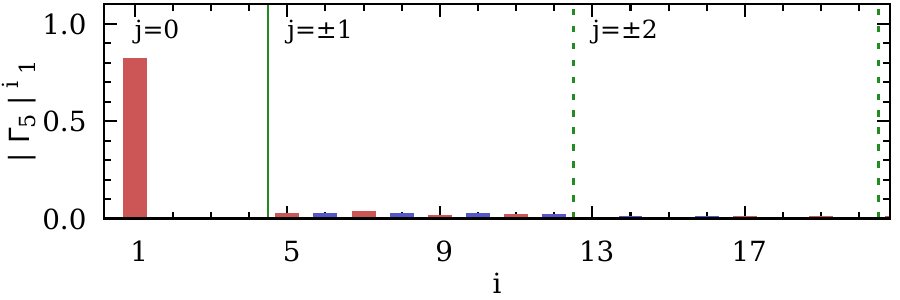}
    \label{sfig:qm1-leak-g5-f1}
  }
  \\
  \subfigure[$|\Xi_5|^i_1 = |\Xi_5( \lambda_i, \lambda_1)|$]{
    \includegraphics[width=\linewidth]{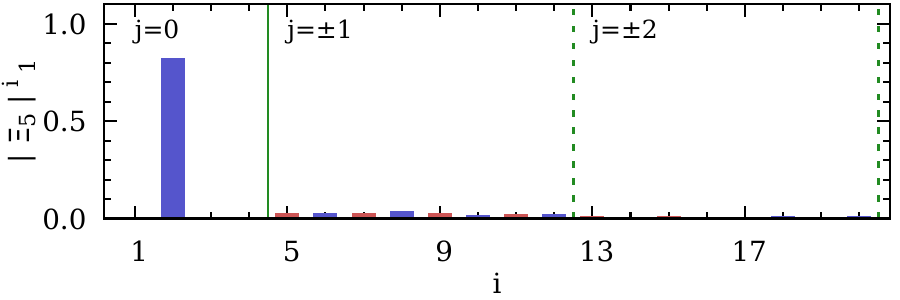}
    \label{sfig:qm1-leak-xi5-f1}
  }
  \\
  \subfigure[$|\Xi_5|^i_2 = |\Xi_5( \lambda_i, \lambda_2 = -\lambda_1)|$]{
    \includegraphics[width=\linewidth]{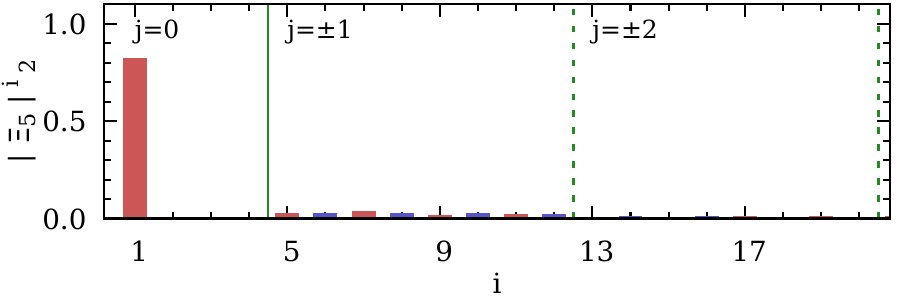}
    \label{sfig:qm1-leak-xi5-f2}
  }
  \\
  \subfigure[$|\Gamma_5|^i_2 = |\Gamma_5( \lambda_i, \lambda_2 = -\lambda_1)|$]{
    \includegraphics[width=\linewidth]{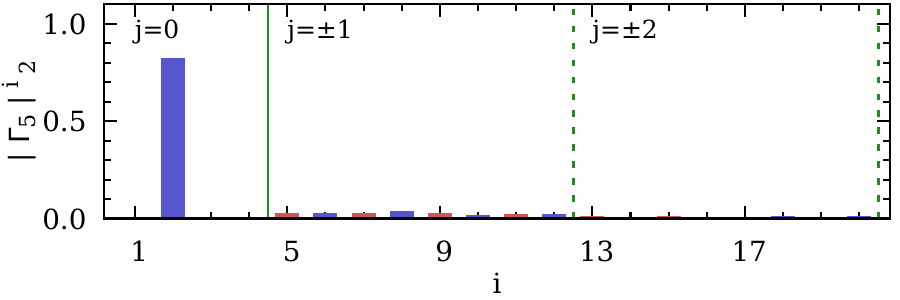}
    \label{sfig:qm1-leak-g5-f2}
  }
  \caption{Leakage pattern for would-be zero modes at $Q=-1$. Here,
    the red bar represents leakage to $\lambda_{i=2n-1} > 0$ with $i$
    odd, and the blue bar represents leakage to its parity partner
    $\lambda_{i=2n} = -\lambda_{2n-1}$ with $i$ even. }
  \label{fig:qm1-leak-f1}
\end{figure}
In Fig.~\ref{fig:qm1-leak-f1}, we present the leakage pattern of
the zero mode of $\lambda_1$ and its parity partner $\lambda_2 =
-\lambda_1$.
Since $Q=-1$ in Fig.~\ref{fig:qm1-leak-f1}, we expect to observe
four-fold degenerate would-be zero modes within a single quartet
(quartet index $j=0$).
\begin{align}
\lim_{a \to 0} \lambda_i = 0 \qquad \text{for $i=1,2,3,4$} \,.
\end{align}
In the continuum limit ($a = 0$), the $SU(4)$ taste symmetry becomes
exactly conserved and so would-be zero modes become exact zero modes.
However, at finite lattice spacing $a \ne 0$, the gauge configuration
is sufficiently rough that would-be zero modes have non-zero
eigenvalues: $\lambda_2 = -\lambda_1$, $\lambda_4 = -\lambda_3 $, and
$\lambda_1 \ne \lambda_3$ for $\lambda_1, \lambda_3 > 0$.
%

\begin{table}[!tbhp]
  \caption{ Numerical values for leakage patterns from the $\lambda_1$
    eigenstate to the $\lambda_i$ eigenstate in
    Fig.~\ref{fig:qm1-leak-f1}. Here, $j$ represents a quartet index
    for the $\lambda_i$ eigenstate. The leakage represents
    $|\mathcal{O}|^i_1 = |\mathcal{O}(\lambda_i, \lambda_1)| = |
    \langle f_i | \mathcal{O} | f_1 \rangle |$ for $\mathcal{O} =
    \Gamma_5, \Xi_5$. }
  \label{tab:wi-1}
  \renewcommand{\arraystretch}{1.2}
  \begin{ruledtabular}
    \begin{tabular}{r | c | l | l }
      $j$ & leakage & value & Ward id. \\ \hline
      0   &  $| \Gamma_5 |^1_1$ & 0.82382566818582
          & $= |\Xi_5|^2_1$ \\
      0   & $| \Xi_5|^2_1$ & 0.82382566818581
          & $= |\Xi_5|^1_2$ \\
      0   & $| \Xi_5|^1_2$ & 0.82382566818580
          & $= |\Gamma_5|^2_2$  \\
      0   & $| \Gamma_5|^2_2$ & 0.82382566818579
          & $= |\Gamma_5|^1_1$
      \\ \hline
      0   & $| \Gamma_5|^2_1$ & $6.67 \times 10^{-4}$
          & \\
      0   & $| \Gamma_5|^3_1$ & $1.34 \times 10^{-3}$
          & \\
      0   & $| \Gamma_5|^4_1$ & $1.79 \times 10^{-3}$
          &
      \\ \hline
      $+1$  & $| \Gamma_5 |^5_1$ & $2.56 \times 10^{-2}$
          & \\
      $-1$  & $| \Gamma_5 |^6_1$ & $2.54 \times 10^{-2}$
          &
      \\ \hline
      $+2$  & $| \Gamma_5 |^{13}_1$ & $5.77 \times 10^{-3}$
          & \\
      $-2$  & $| \Gamma_5 |^{14}_1$ & $1.18 \times 10^{-2}$
          & \\
    \end{tabular}
  \end{ruledtabular}
\end{table}
In Fig.~\ref{fig:qm1-leak-f1}\,\subref{sfig:qm1-leak-g5-f1}, we show
the leakage pattern of $|\Gamma_5|^i_1 =
|\Gamma_5(\lambda_i,\lambda_1)| = | \langle f_i | \Gamma_5 | f_1
\rangle |$.
We find that there is, in practice, no leakage, and so the only
non-zero component is $|\Gamma_5|^1_1 =
|\Gamma_5(\lambda_1,\lambda_1)|$.
The other components are practically zero.
In Figs.~\ref{fig:qm1-leak-f1}\,\subref{sfig:qm1-leak-xi5-f1},
\ref{fig:qm1-leak-f1}\,\subref{sfig:qm1-leak-xi5-f2}, and
\ref{fig:qm1-leak-f1}\,\subref{sfig:qm1-leak-g5-f2}, we find that the
Ward identity of Eqs.~\eqref{eq:wi-12} and \eqref{eq:wi-13} is
well-respected by the numerical results.
In other words, the Ward identity $|\Gamma_5|^1_1 = |\Xi_5|^2_1 =
|\Xi_5|^1_2 = |\Gamma_5|^2_2$ is satisfied within the numerical
precision of the computer.
Please refer to Table \ref{tab:wi-1} for numerical details.
This confirms that the theoretical prediction from the Ward identity
in Eqs.~\eqref{eq:wi-12} and \eqref{eq:wi-13} is correct.

In Fig.~\ref{fig:qm1-leak-f1}\,\subref{sfig:qm1-leak-g5-f1}, we find
that there is small leakage into other quartets ($j=\pm 1, \pm 2$).
The typical size of leakage between off-diagonal elements of the
would-be zero modes, the $j=0$ quartet, (\textit{e.g.}
$|\Gamma_5|^3_1$) is of order $10^{-3}$.
We also observe small leakage patterns of order $10^{-2} \sim 10^{-3}$
from the would-be zero modes, the $j=0$ quartet, to the non-zero
modes, the $j=\pm 1, \pm 2$ quartets (\textit{e.g.} $|\Gamma_5|^5_1$).

Now let us consider non-zero modes in the $j=+1$ quartet.
In Fig.~\ref{fig:qm1-leak-f5}, we present the leakage pattern for the
non-zero modes of $\lambda_5$ and its parity partner $\lambda_6 =
-\lambda_5$.
Even in the continuum limit ($a=0$), $\lambda_5 \ne 0$, and so it is a
non-zero mode.
Thanks to the approximate $SU(4)$ taste symmetry and the exact
$U(1)_A$ axial symmetry, there will be eight-fold degeneracy in the
family of eight eigenstates composed of the $j=+1$ quartet, to which
$\lambda_5$ belongs, and the $j=-1$ quartet (the parity partners).
These eight-fold degenerate modes are designated together as $j=\pm 1$
quartets in Fig.~\ref{fig:qm1-leak-f5}, where they are a set of
$\{\lambda_i\}$ with $5 \le i \le 12$.
Let us scrutinize the leakage pattern of the non-zero mode $\lambda_5
= \lambda_{j=+1,m=1}$.
In Fig.~\ref{fig:qm1-leak-f5}\,\subref{sfig:qm1-leak-g5-f5}, first
note that there is practically no leakage in the $\Gamma_5$ chirality
measurement from $\lambda_5$ into $\lambda_{2n-1}$ with $n>0$.
In other words, $|\Gamma_5|^{2n-1}_5 = |\Gamma_5(\lambda_{2n-1},
\lambda_5) | \cong 0$.
This implies that the chirality operator on the non-zero mode with
$\lambda > 0$ leaks into only the parity partner modes with $\lambda <
0$.
Second, note that the nontrivial leakage goes to those eigenstates in
the $j=-1$ quartet such as $\{\lambda_6, \lambda_8, \lambda_{10},
\lambda_{12} \} = \{ \lambda_{j,m} |\ j=-1,\ m=1,2,3,4\} $.
In addition, we find that the Ward identity of Eqs.~\eqref{eq:wi-12}
and \eqref{eq:wi-13} is well-respected within the numerical precision
in Figs.~\ref{fig:qm1-leak-f5}\,\subref{sfig:qm1-leak-g5-f5},
\ref{fig:qm1-leak-f5}\,\subref{sfig:qm1-leak-xi5-f5},
\ref{fig:qm1-leak-f5}\,\subref{sfig:qm1-leak-xi5-f6}, and
\ref{fig:qm1-leak-f5}\,\subref{sfig:qm1-leak-g5-f6}.
In Table \ref{tab:wi-2}, we present numerical values of
the $|\Gamma_5|^i_5$ shown in
Fig.~\ref{fig:qm1-leak-f5}\,\subref{sfig:qm1-leak-g5-f5} .
\begin{figure}[t]
%
  \subfigure[$|\Gamma_5|^i_5 = |\Gamma_5 (\lambda_i, \lambda_5)| $]{
    \includegraphics[width=\linewidth]{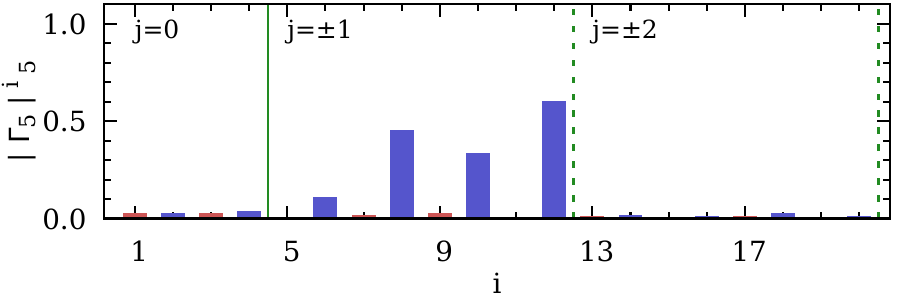}
    \label{sfig:qm1-leak-g5-f5}
  }
  \\
  \subfigure[$|\Xi_5|^i_5 = |\Xi_5 (\lambda_i, \lambda_5)| $]{
    \includegraphics[width=\linewidth]{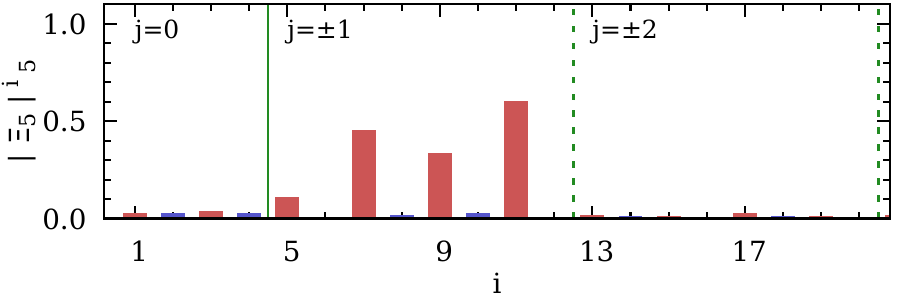}
    \label{sfig:qm1-leak-xi5-f5}
  }
  \\
  \subfigure[$|\Xi_5|^i_6 = |\Xi_5 (\lambda_i, \lambda_6 = -\lambda_5)| $]{
    \includegraphics[width=\linewidth]{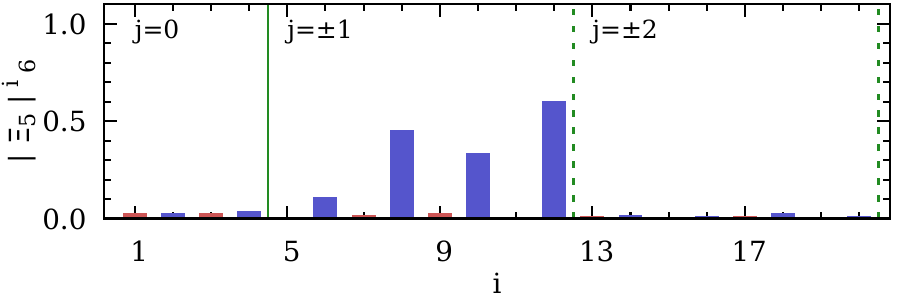}
    \label{sfig:qm1-leak-xi5-f6}
  }
  \\
  \subfigure[$|\Xi_5|^i_6 = |\Gamma_5 (\lambda_i, \lambda_6 = -\lambda_5)|$]{
    \includegraphics[width=\linewidth]{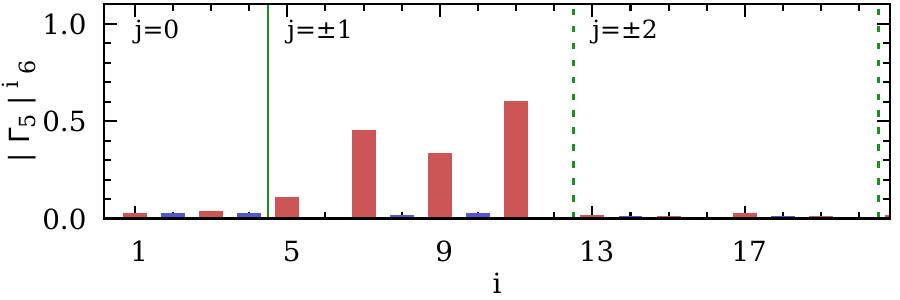}
    \label{sfig:qm1-leak-g5-f6}
  }
  \caption{Leakage pattern for non-zero modes at $Q=-1$.}
  \label{fig:qm1-leak-f5}
\end{figure}
%
\begin{table*}[!tbhp]
  \caption{ Numerical values for data in Fig.~\ref{fig:qm1-leak-f5}. }
  \label{tab:wi-2}
  \renewcommand{\arraystretch}{1.2}
  \begin{ruledtabular}
    \begin{tabular}{r c c l}
       $j$ & leakage & value & Ward identities \\
      \hline
      $-1$  & $|\Gamma_5|^6_5$ & 0.110
      & $= |\Xi_5|^5_5 = |\Xi_5|^6_6 = |\Gamma_5|^5_6$
      \\
      $-1$  & $|\Gamma_5|^8_5$ & 0.452
      & $= |\Xi_5|^7_5 = |\Xi_5|^8_6 = |\Gamma_5|^7_6
      = |\Gamma_5|^5_8 = |\Xi_5|^5_7 = |\Xi_5|^6_8 = |\Gamma_5|^6_7$
      \\
      $-1$  & $|\Gamma_5|^{10}_5$ & 0.334
      & $= |\Xi_5|^9_5 = |\Xi_5|^{10}_6 = |\Gamma_5|^9_6
      = |\Gamma_5|^5_{10} = |\Xi_5|^5_9 = |\Xi_5|^6_{10}
      = |\Gamma_5|^6_9$
      \\
      $-1$  & $|\Gamma_5|^{12}_5$ & 0.601
      & $= |\Xi_5|^{11}_5 = |\Xi_5|^{12}_6 = |\Gamma_5|^{11}_6
      = |\Gamma_5|^5_{12} = |\Xi_5|^5_{11} = |\Xi_5|^6_{12}
      = |\Gamma_5|^6_{11}$ \\
      \hline
      $+1$  & $|\Gamma_5|^5_5$ & $2.05 \times 10^{-3}$
      & $= |\Xi_5|^6_5 = |\Xi_5|^5_6 = |\Gamma_5|^6_6$
      \\
      $+1$  & $|\Gamma_5|^7_5$ & $16.7 \times 10^{-3}$
      & $= |\Xi_5|^8_5 = |\Xi_5|^7_6 = |\Gamma_5|^8_6
      = |\Gamma_5|^5_7 = |\Xi_5|^5_8 = |\Xi_5|^6_7
      = |\Gamma_5|^6_8$
      \\
      $+1$  & $|\Gamma_5|^9_5$ & $25.6 \times 10^{-3}$
      & $= |\Xi_5|^{10}_5 = |\Xi_5|^9_6 = |\Gamma_5|^{10}_6
      = |\Gamma_5|^5_9 = |\Xi_5|^5_{10} = |\Xi_5|^6_9
      = |\Gamma_5|^6_{10}$
      \\
      $+1$  & $|\Gamma_5|^{11}_5$ & $7.32 \times 10^{-3}$
      & $= |\Xi_5|^{12}_5 = |\Xi_5|^{11}_6 = |\Gamma_5|^{12}_6
      = |\Gamma_5|^5_{11} = |\Xi_5|^5_{12} = |\Xi_5|^6_{11}
      = |\Gamma_5|^6_{12}$ \\
      \hline
      $0$   & $|\Gamma_5|^3_5$ & $2.52 \times 10^{-2}$ &
      \\
      $0$   & $|\Gamma_5|^4_5$ & $3.43 \times 10^{-2}$ &
      \\
      \hline
      $+2$   & $|\Gamma_5|^{13}_5$ & $1.02 \times 10^{-2}$ &
      \\
      $-2$   & $|\Gamma_5|^{14}_5$ & $1.38 \times 10^{-2}$ &
      \\
    \end{tabular}
  \end{ruledtabular}
\end{table*}

Let us examine the $\Gamma_5 =[\gamma_5 \otimes 1]$ leakage pattern of
the $j=+1$ quartet of the non-zero modes $\{\lambda_5, \lambda_7,
\lambda_9, \lambda_{11}\}$.
In Fig.~\ref{fig:qm1-leak-g5-f5set}, we find that the chirality
measurement vanishes; $(\Gamma_5)^i_i = \Gamma_5(\lambda_i, \lambda_i)
= 0$ for $\lambda_i$ in the $j=+1$ quartet of non-zero modes.
We also find that the $\Gamma_5$ leakage of $\lambda_{+1,m} > 0$ of
the $j=+1$ quartet goes to the parity partners with $\lambda_{-1,m'} <
0$ of the $j=-1$ quartet, and the leakage to other quartets such as
$j=\pm2$ is negligibly small compared to the leakage to the $j=-1$
quartet.
The numerical values of $|\Gamma_5|^{-1,m}_{+1,m'}$ are summarized in
Table \ref{tab:leak_g5}.
\begin{figure}[t]
%
  \subfigure[$|\Gamma_5|^i_5 = |\Gamma_5 (\lambda_i, \lambda_5)| $]{
    \includegraphics[width=\linewidth]{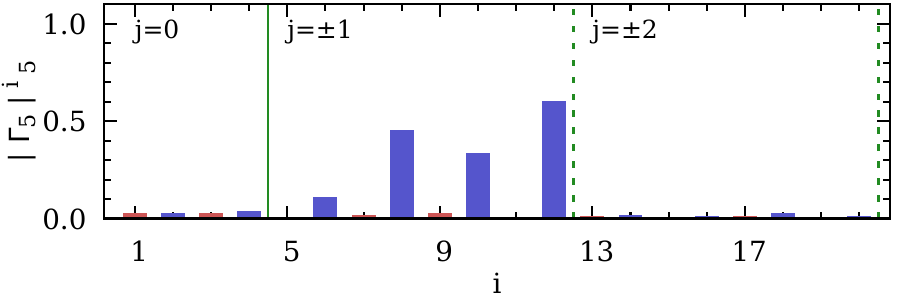}
    \label{sfig:qm1-leak-g5-f5-2}
  }
  \\
  \subfigure[$|\Gamma_5|^i_7 = |\Gamma_5 (\lambda_i, \lambda_7)| $]{
    \includegraphics[width=\linewidth]{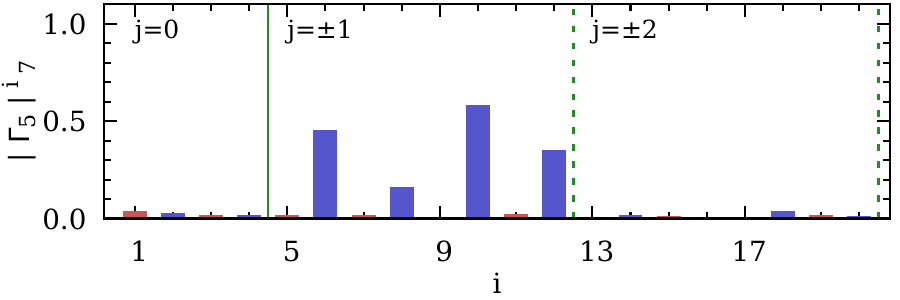}
    \label{sfig:qm1-leak-g5-f7}
  }
  \\
  \subfigure[$|\Gamma_5|^i_9 = |\Gamma_5 (\lambda_i, \lambda_9)| $]{
    \includegraphics[width=\linewidth]{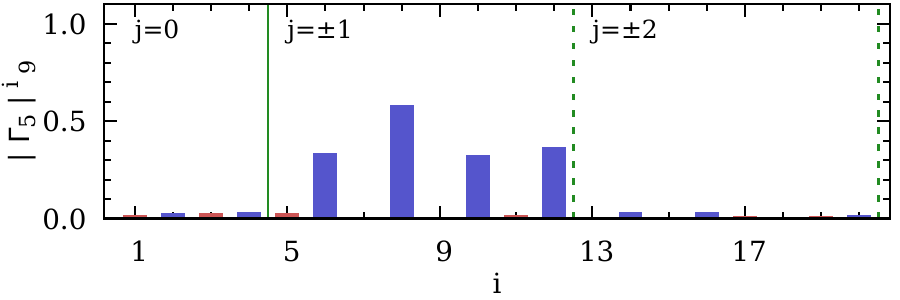}
    \label{sfig:qm1-leak-g5-f9}
  }
  \\
  \subfigure[$|\Gamma_5|^i_{11} = |\Gamma_5 (\lambda_i, \lambda_{11})|$]{
    \includegraphics[width=\linewidth]{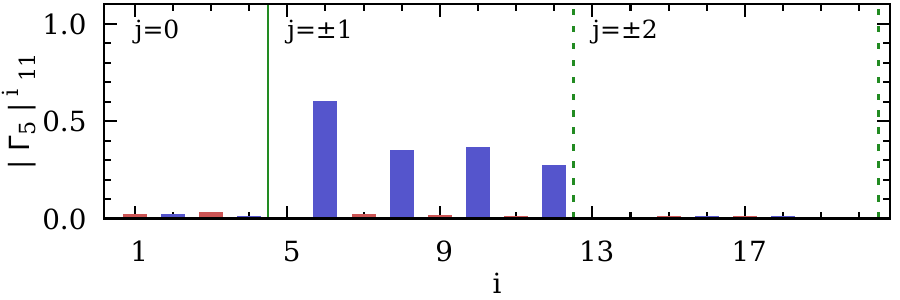}
    \label{sfig:qm1-leak-g5-f11}
  }
  \caption{$[\gamma_5 \otimes 1]$ leakage pattern for non-zero modes
    at $Q=-1$.}
  \label{fig:qm1-leak-g5-f5set}
\end{figure}

%
\begin{table}[!tbhp]
  \caption{ $| \Gamma_5 |^{-1,m}_{+1,m'}$ values in
      Fig.~\ref{fig:qm1-leak-g5-f5set}. }
  \label{tab:leak_g5}
  \renewcommand{\arraystretch}{1.2}
  \begin{ruledtabular}
    \begin{tabular}{c|cccc}
      \diagbox{$\lambda_i$}{$\lambda_j$} & $\lambda_5$ & $\lambda_7$ &
      $\lambda_9$ & $\lambda_{11}$ \\ \hline
      $\lambda_6$ & 0.110 & 0.452 & 0.334 & 0.601 \\
      $\lambda_8$ & 0.452 & 0.161 & 0.582 & 0.349 \\
      $\lambda_{10}$ & 0.334 & 0.582 & 0.323 & 0.366 \\
      $\lambda_{12}$ & 0.601 & 0.349 & 0.366 & 0.271 \\
    \end{tabular}
  \end{ruledtabular}
\end{table}

Let us examine the $\Xi_5=[1 \otimes \xi_5]$ leakage pattern of the
$j=+1$ quartet of the non-zero modes $\{\lambda_5, \lambda_7,
\lambda_9, \lambda_{11}\}$.
In Fig.~\ref{fig:qm1-leak-xi5-f5set}, we find that the $\Xi_5$ leakage
from the $j=+1$ quartet to the $j=-1$ quartet (parity partners)
vanishes in practice.
Since the leakage pattern of $\Xi_5$ is related to the leakage
pattern of $\Gamma_5$ by the Ward identity
\begin{align}
  |\Xi_5|^{j,m}_{j',m'} &= |\Gamma_5|^{-j,m}_{j',m'} \,,
  \label{eq:ward_jm}
\end{align}
Fig.~\ref{fig:qm1-leak-xi5-f5set} can be obtained from
Fig.~\ref{fig:qm1-leak-g5-f5set} using the Ward identity.
We find that the $\Xi_5$ leakage from the $j=+1$ quartet to other
quartets such as $j=\pm 2$ quartets is negligibly small compared to
leakage to itself (the $j=+1$ quartet).
Leakage patterns of the $\Gamma_5$ chirality and $\Xi_5$ shift
operators for diverse topological charges are shown in Appendix
\ref{app:ex-nonzero}.
\begin{figure}[t]
%
  \subfigure[$|\Xi_5|^i_5 = |\Xi_5 (\lambda_i, \lambda_5)| $]{
    \includegraphics[width=\linewidth]{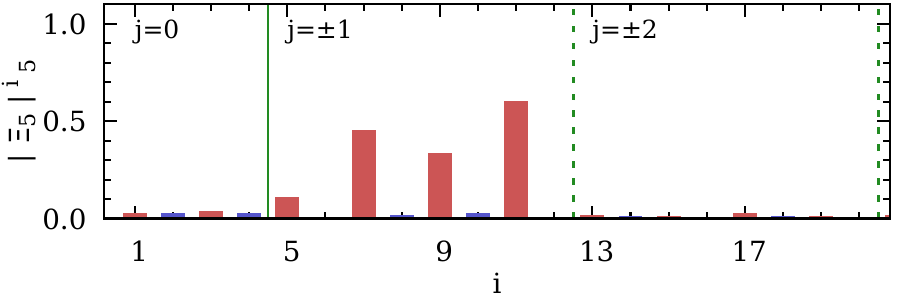}
    \label{sfig:qm1-leak-xi5-f5-2}
  }
  \\
  \subfigure[$|\Xi_5|^i_7 = |\Xi_5 (\lambda_i, \lambda_7)| $]{
    \includegraphics[width=\linewidth]{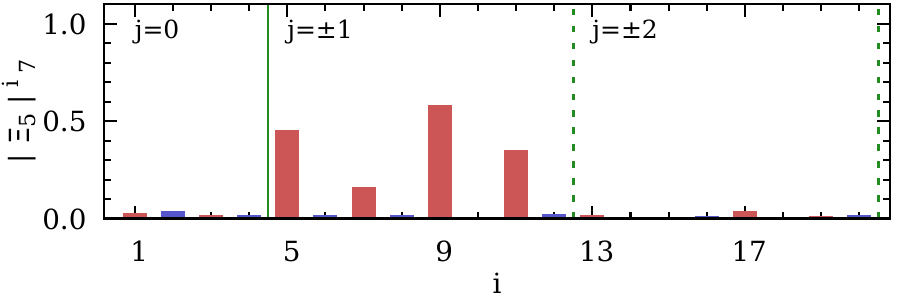}
    \label{sfig:qm1-leak-xi5-f7}
  }
  \\
  \subfigure[$|\Xi_5|^i_9 = |\Xi_5 (\lambda_i, \lambda_9)| $]{
    \includegraphics[width=\linewidth]{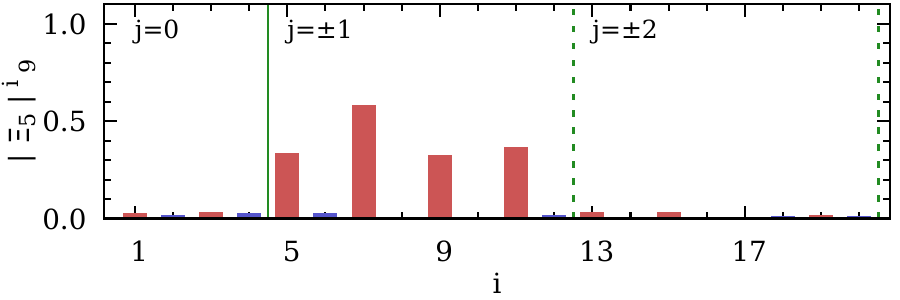}
    \label{sfig:qm1-leak-xi5-f9}
  }
  \\
  \subfigure[$|\Xi_5|^i_{11} = |\Xi_5 (\lambda_i, \lambda_{11})|$]{
    \includegraphics[width=\linewidth]{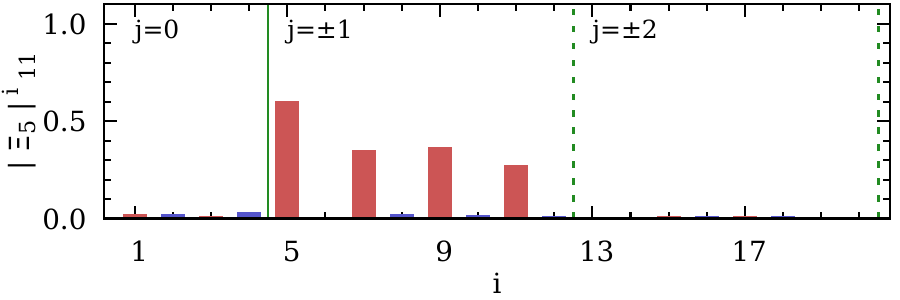}
    \label{sfig:qm1-leak-xi5-f11}
  }
  \caption{$[1 \otimes \xi_5]$ leakage pattern for non-zero modes
    at $Q=-1$.}
  \label{fig:qm1-leak-xi5-f5set}
\end{figure}

Let us summarize the leakage pattern for would-be zero modes and that
for non-zero modes.
We first begin with the leakage pattern for the zero modes.
\begin{enumerate}
\item A zero mode of staggered fermions appears as a four-fold
  degenerate quartet. In other words, for the topological charge $Q$,
  the number of zero modes is $4 \times (n_+ + n_-)\,,$ and $Q = n_- -
  n_+$ (Atiyah-Singer Index Theorem), where $n_+$ ($n_-$) is the
  number of right-handed (left-handed) zero mode quartets.

\item In the chirality $\Gamma_5 = [\gamma_5 \otimes 1]$ measurement,
  the zero mode has practically no leakage to other eigenstates.

\item In the shift $\Xi_5=[1 \otimes \xi_5]$ measurement, the zero
  mode with eigenvalue $\lambda$ has a full (100\%) leakage into its
  parity partner mode with eigenvalue $-\lambda$, and no leakage into
  any other eigenmodes.
  
\end{enumerate}

The leakage pattern for nonzero modes is 
\begin{enumerate}
\item A non-zero mode of staggered fermions appears as an eight-fold
  degeneracy composed of a quartet ($+j$ quartet) and its parity
  partner quartet ($-j$ quartet). In other words, non-zero eigenmodes
  can be grouped into sets with eight elements in each set.  This is
  due to the approximate $SU(4)$ taste symmetry and the conserved
  $U(1)_A$ axial symmetry.

\item In the chirality $\Gamma_5=[\gamma_5 \otimes 1]$ measurement,
  the non-zero mode with eigenvalue $\lambda_{j,m}$ has no leakage to
  its own quartet ($j$ quartet), but has leakage only to the parity
  partner ($-j$ quartet) with $\lambda_{-j,m'}$. It has no leakage to
  any eigenmode which belongs to other quartets with $\ell \ne \pm j$.

\item In the shift $\Xi_5=[1 \otimes \xi_5]$ measurement, the non-zero
  mode with $\lambda_{j,m}$ has no leakage to its parity partner ($-j$
  quartet) at all. But it has leakage only to the eigenstates in its
  own $(+j)$ quartet. This pattern comes directly from the Ward
  identity.  In other words, the $\Xi_5$ leakage pattern is a mirror
  image reflecting $\Gamma_5$ through the mirror of the Ward identity.
  $\Xi_5$ has no leakage to any eigenmode which belongs to other
  quartets with $\ell \ne \pm j$.

\item Thanks to the Ward identity of the conserved $U(1)_A$ symmetry,
  the leakage pattern of $|\Gamma_5|^{-j,m}_{\ell,m'}$ is identical to
  that of $|\Xi_5|^{+j,m}_{\ell,m'}$.
  
\end{enumerate}
In Appendix \ref{app:ex-zero}, we provide more examples to demonstrate
our claim that the leakage pattern for zero modes holds in general.
In Appendix \ref{app:ex-nonzero}, we give more examples to demonstrate
our claim that the leakage pattern for non-zero modes is valid in
general.
We have repeated numerical tests over hundreds of zero modes and tens
of thousands of nonzero modes.
We performed the numerical study on hundreds of gauge configurations
and find that the above leakage pattern is valid for all of them
except for those gauge configurations with unstable topological
charge.
\begin{enumerate}
\item We find a number of gauge configurations which do not have a
  stable topological charge.

\item We find about 10 gauge configurations with unstable topological
  charge among 100 gauge configurations with $12^4$ lattice geometry
  at $\beta=4.6$.

\item We find about 8 gauge configurations with unstable topological
  charge among 300 gauge configurations with $20^4$ lattice geometry
  at $\beta=5.0$.
  
\end{enumerate}

\begin{table}[!tbhp]
  \caption{Numerical results for $T_5$. To obtain the results, we use
    292 gauge configurations with the input parameters in Table
    \ref{tab:in-para}. $N_q$ represents the number of quartets used to
    obtain the statistical error. Here $j=0$ represents would-be zero
    modes, and $j >0$ represents non-zero modes.  }
  \label{tab:t5}
  \renewcommand{\arraystretch}{1.2}
  \begin{ruledtabular}
    \begin{tabular}{cccr}
      $j$ & $|\Re(T_5)|$ & $|\Im(T_5)|$ & $N_q$
      \\ \hline
      $j=0$ & $7.2(130) \times 10^{-4}$ & $5.9(46) \times 10^{-12}$ & $490$
      \\
      $j>0$ & $6.2(120) \times 10^{-3}$ & $3.3(25) \times 10^{-12}$ & $7034$
      \\
    \end{tabular}
  \end{ruledtabular}
\end{table}
In Table \ref{tab:t5}, we present results for $T_5$ defined in
Eq.~\eqref{eq:t5}, which is a direct barometer to estimate the effect
of taste symmetry breaking.
If the taste symmetry is exactly conserved, then $T_5$ must vanish.
Hence, a non-trivial value of $T_5$ indicates the size of taste
symmetry breaking.
In Table \ref{tab:t5}, we find that $|\Re(T_5)|$ is of order
$10^{-3}$, while $|\Im(T_5)|$ is essentially zero.
This indicates that the effect of taste symmetry breaking is very
small (in the sub-percent level within each quartet).

\begin{figure}[h]
  \centering
  \includegraphics[width=\linewidth]{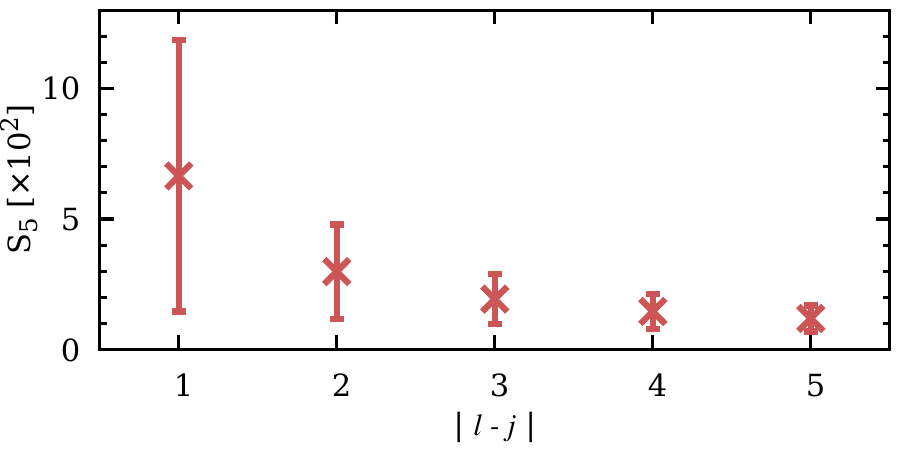}
  \caption{$S_5$ as a function of $|\ell - j|$. Numerical values are
  given in Table \ref{tab:s5}.}
  \label{fig:s5}
\end{figure}
In Fig.~\ref{fig:s5}, we present $S_5$ defined in Eq.~\eqref{eq:s5}
as a function of $|\ell - j|$ with $\ell,\, j \ge 0$.
Here $|\ell - j| = 1$ represents a pair of nearest neighbor quartets,
$|\ell - j| = 2$ represents a pair of next-to-nearest neighbor
quartets, and so on.
The values of $S_5$ are the same size as the statistical errors.
This indicates that the taste symmetry breaking results in simply
random noise added to the physical signal ($S_5 = 0$).
For $|\ell - j| = 1$, the noise is $\approx$ 7\%, and for $|\ell - j|
= 2$, the noise is $\approx$ 3\%.
We find that the noise decreases as $|\ell - j|$ increases.
The numerical values of $S_5$ in Fig.~\ref{fig:s5} are presented in
Table \ref{tab:s5}.
\begin{table}[tbhp]
  \caption{Numerical results for $S_5$. Here, we measure $S_5$ between
    two different quartets ($\ell \ne j$ and $\ell, j \ge 0$). $N_p$
    represents the number of ($\ell$, $j$) pairs with $\ell \ne j$.}
  \label{tab:s5}
  \renewcommand{\arraystretch}{1.2}
  \begin{ruledtabular}
    \begin{tabular}{ccc}
      $|\ell-j|$ & $S_5$ & $N_p$
      \\ \hline
      1 & $6.6(52) \times 10^{-2}$ & $7185$
      \\
      2 & $3.0(18) \times 10^{-2}$ & $6893$
      \\
      3 & $1.9(10) \times 10^{-2}$ & $6601$
      \\
      4 & $1.5(7) \times 10^{-2}$ & $6309$
      \\
      5 & $1.2(5) \times 10^{-2}$ & $6017$
    \end{tabular}
  \end{ruledtabular}
\end{table}
%


\section{Machine Learning}
\label{sec:machine:learn}
In previous sections, we have shown that the $U(1)_A$ symmetry of
staggered fermions induces the chiral Ward identities in
Eq.~\eqref{eq:wi-12}, and we have also noted that the approximate
$SU(4)$ taste symmetry brings in the quartet behavior of the
eigenvalue spectrum.
Furthermore, a combined effect of those symmetries gives us distinctive
leakage patterns for the chirality operator $\Gamma_5$ and the shift
operator $\Xi_5$.
In this section, we apply a machine learning technique to the
following tasks.
\begin{enumerate}
\item We want to know how much the non-zero modes respect the quartet
  classification rules, which come from the $SU(4)$ taste symmetry.
\item We want to know how efficiently we can measure the topological
  charge $Q$ using the index theorem from the quartet structure of the
  non-zero modes.
\item We want to detect any anomalous behavior of the eigenvalue
  spectrum, which does not follow the standard leakage pattern of the
  non-zero modes.
\item We want to figure out what causes the anomalous behavior of the
  eigenvalue spectrum.
\end{enumerate}
%

\begin{figure}[t]
  \centering
  \subfigure[$200 \times 200$]{
    \includegraphics[width=\linewidth]{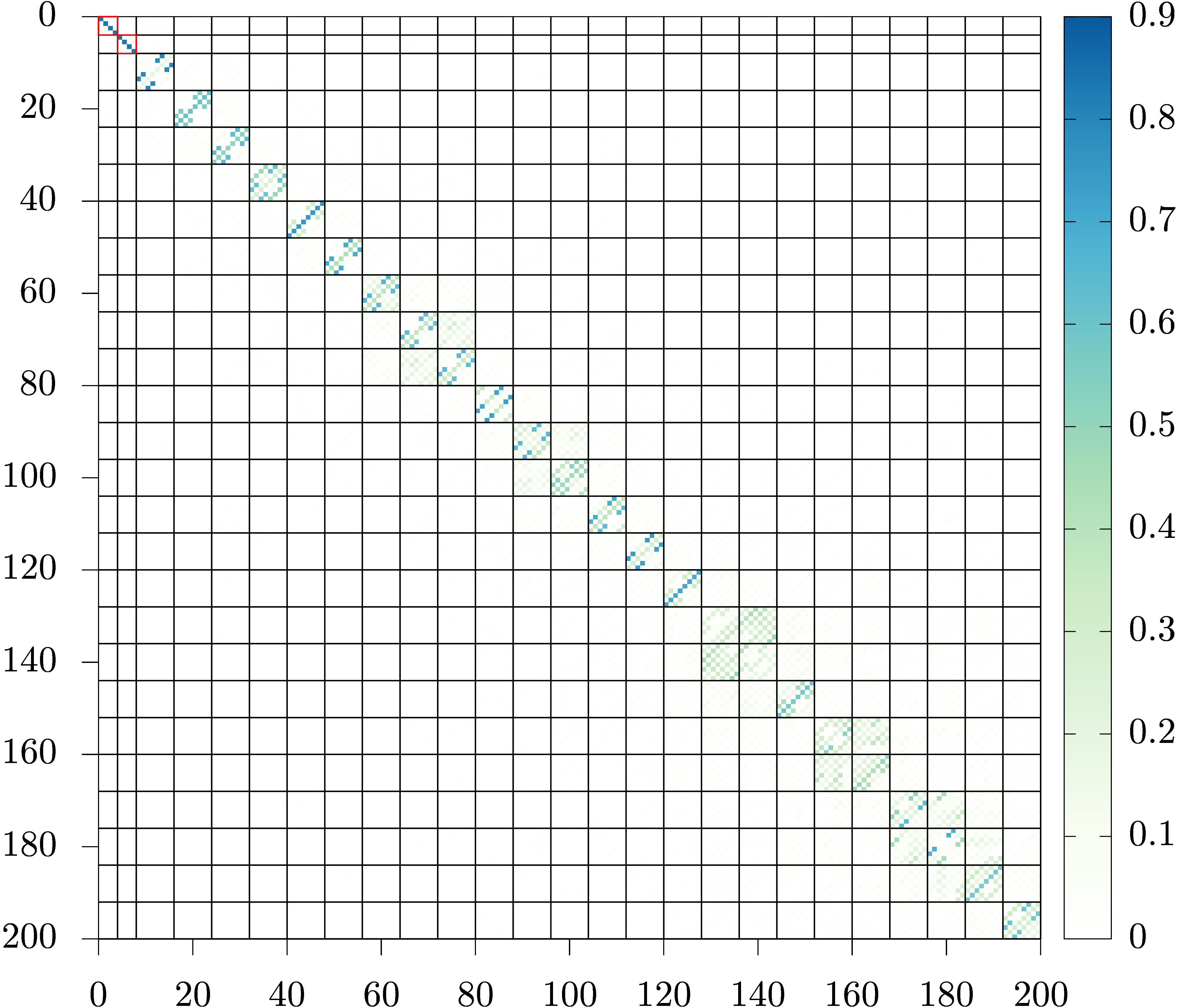}
    \label{sfig:g5x1_mat_200}
  }
  \\
  \subfigure[$32 \times 32$]{
    \includegraphics[width=\linewidth]{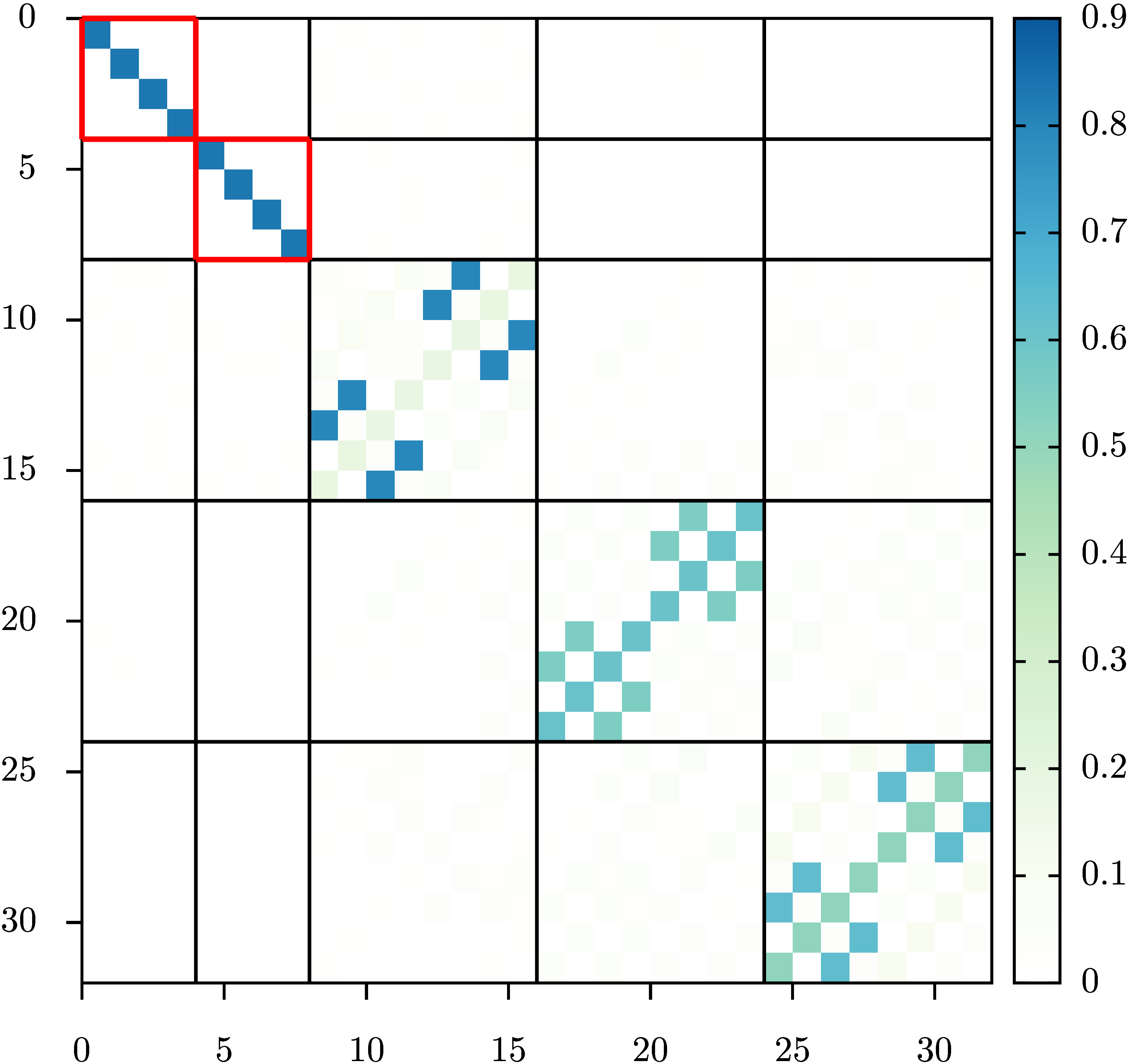}
    \label{sfig:g5x1_mat_32}
  }
  \caption{Matrix elements of $|\Gamma_5|$ for 200 and 32 of the
    lowest eigenmodes on a gauge configuration with $Q = 2$. Here,
    indices on both axes are the eigenvalue index. The color of each
    square represents the magnitude of the corresponding matrix
    element. Black lines indicate borders of non-zero mode quartets,
    and red lines are those of zero mode quartets. }
  \label{fig:g5x1_mat}
\end{figure}
%
\begin{figure}[t]
  \centering
  \subfigure[class 0]{
    \includegraphics[width=0.46\linewidth]{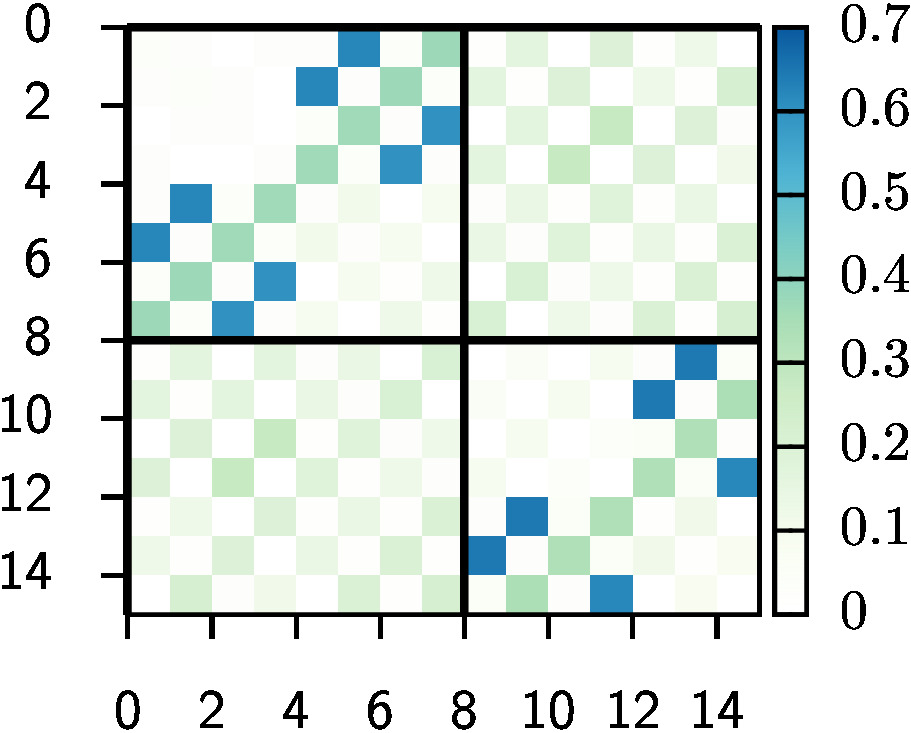}
    \label{sfig:label_0}
  }
  \subfigure[class 1]{
    \includegraphics[width=0.46\linewidth]{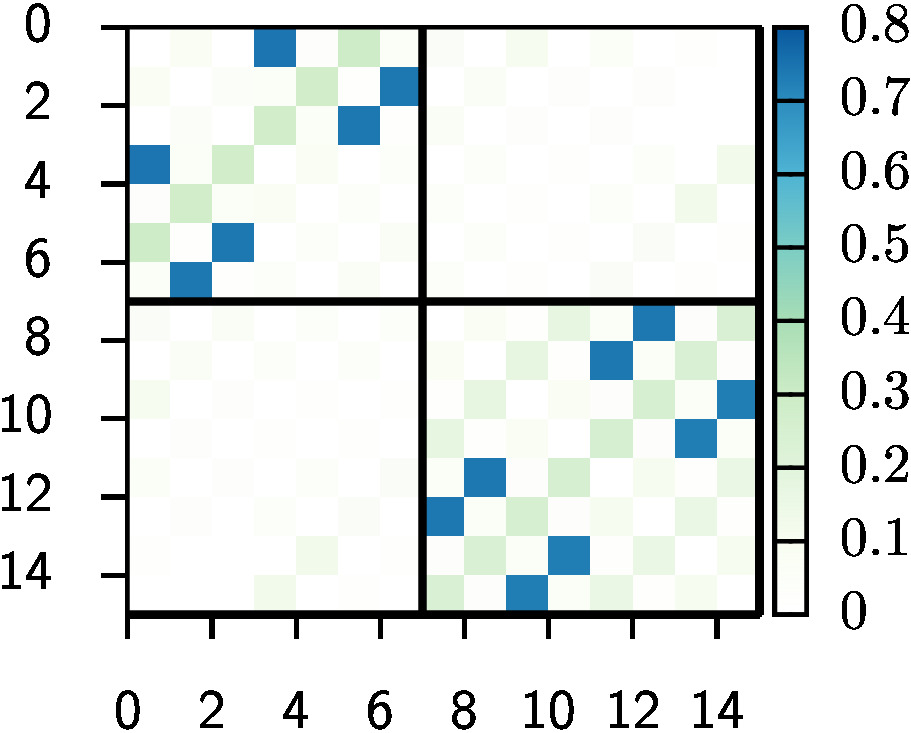}
    \label{sfig:label_1}
  }
  \\
  \subfigure[class 2]{
    \includegraphics[width=0.46\linewidth]{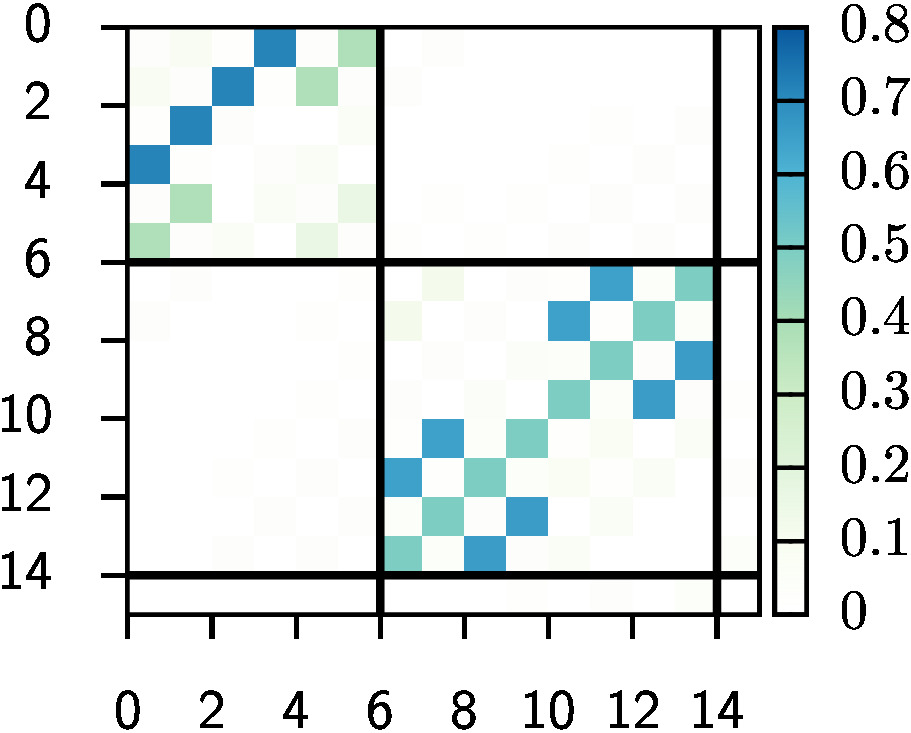}
    \label{sfig:label_2}
  }
  \subfigure[class 3]{
    \includegraphics[width=0.46\linewidth]{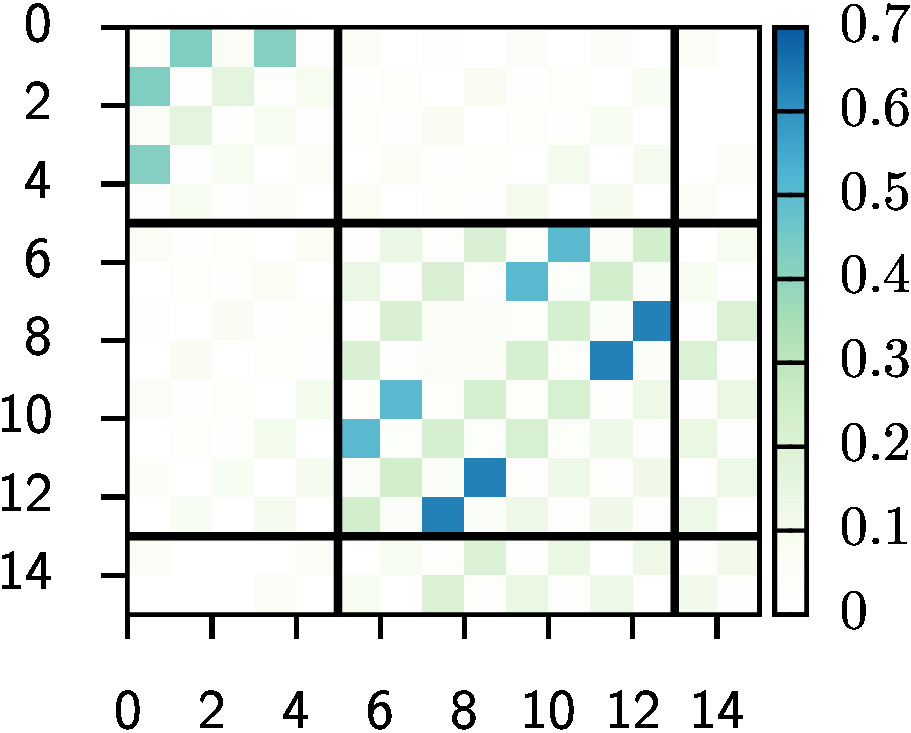}
    \label{sfig:label_3}
  }
  \\
  \subfigure[class 4]{
    \includegraphics[width=0.46\linewidth]{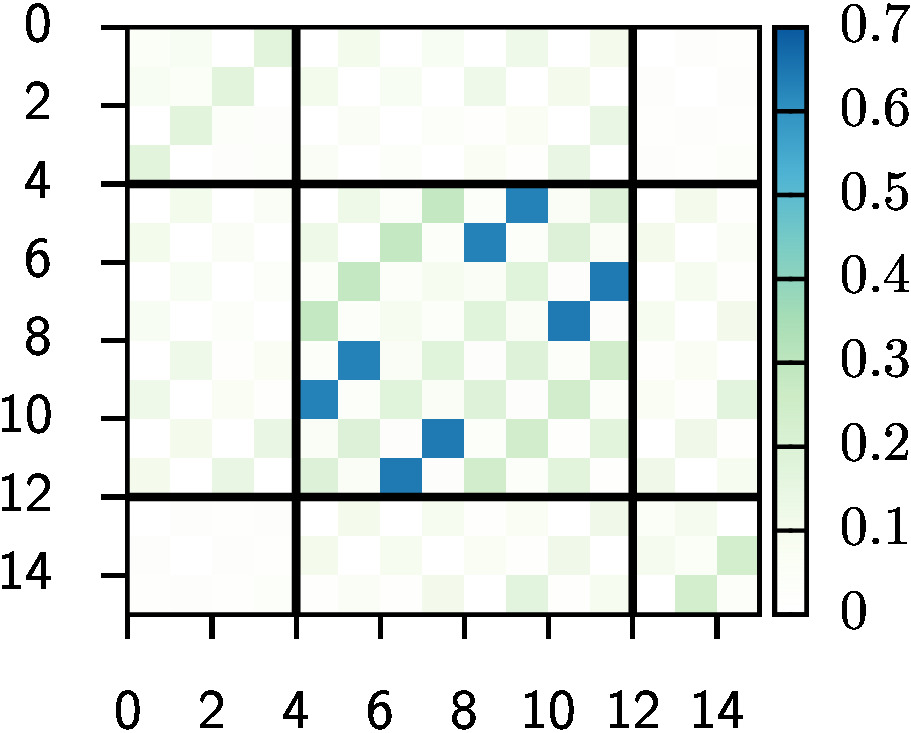}
    \label{sfig:label_4}
  }
  \subfigure[class 5]{
    \includegraphics[width=0.46\linewidth]{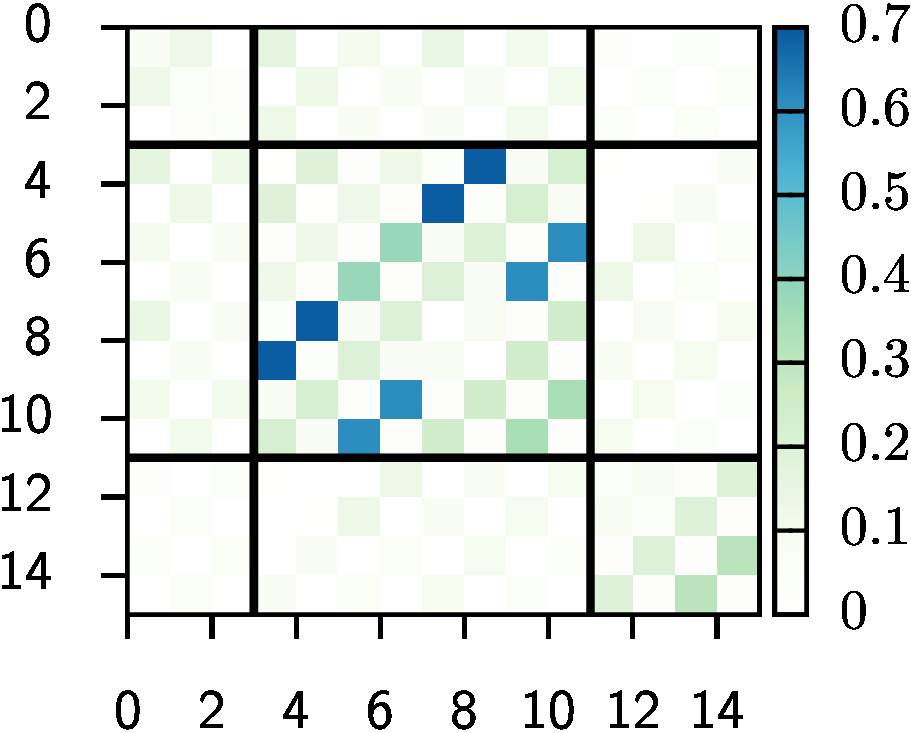}
    \label{sfig:label_5}
  }
  \\
  \subfigure[class 6]{
    \includegraphics[width=0.46\linewidth]{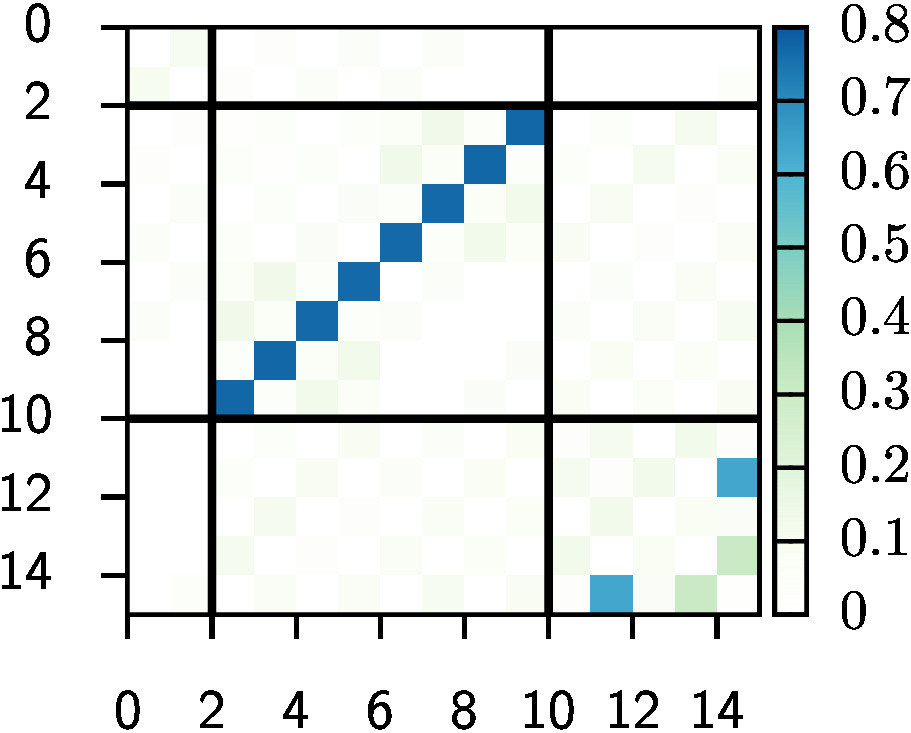}
    \label{sfig:label_6}
  }
  \subfigure[class 7]{
    \includegraphics[width=0.46\linewidth]{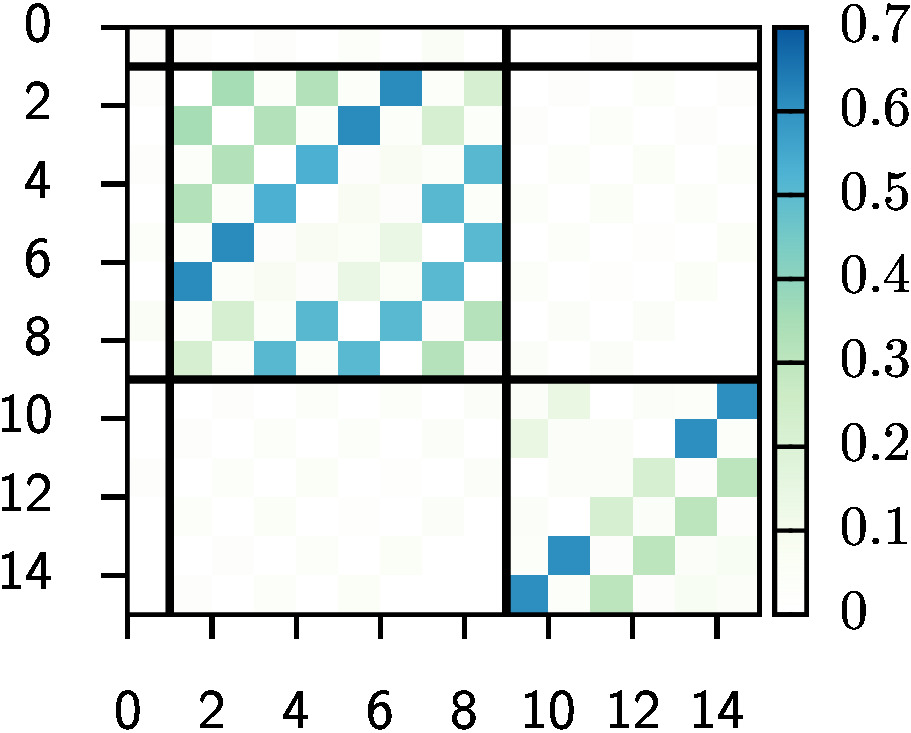}
    \label{sfig:label_7}
  }
  \caption{Examples for our samples. Every sample contains only one
    non-zero mode quartet. There are eight kinds of classes according to
    the location of the borders of the quartet.}
  \label{fig:label}
\end{figure}
Let us explain our sampling method for the machine learning.
In Fig.~\ref{fig:g5x1_mat}, we show matrix elements $|\Gamma_5|^i_j$ on a
gauge configuration with $Q = 2$.
Fig.~\ref{sfig:g5x1_mat_200} is for the $200$ lowest eigenmodes, and
Fig.~\ref{sfig:g5x1_mat_32} is a zoomed-in version of
Fig.~\ref{sfig:g5x1_mat_200} for the $32$ lowest eigenmodes.
Here the depth of the blue color represents the size of the matrix
element $|\Gamma_5|^i_j$, and $i,j$ run from zero to 199.
We identify two zero mode quartets (red boxes) by looking at the
magnitude of the diagonal components.
These two quartets have the same chirality ($n_- = 2$), which is
consistent with the topological charge $Q = 2$.
Excluding the would-be zero modes, we randomly choose a $15 \times 15$
sub-matrix of $|\Gamma_5|^i_j$ along the diagonal line of
$|\Gamma_5|^i_j$ matrix elements.
This $15 \times 15$ sub-matrix is the smallest square sub-matrix of
$|\Gamma_5|$ which contains all elements of only one quartet of
non-zero modes and its parity partner quartet.

In Fig.~\ref{fig:label}, we present 8 different classes for arbitrary
samples.
Our purpose for the machine learning is to find borders (black lines)
of the non-zero mode quartets (or octets when the parity partners are
included) in each sample.
We classify arbitrary samples into eight different classes according
to the location of the border lines.
Each class is labeled as in Fig.~\ref{fig:label}.
%

\begin{table}[th]
  \caption{Parameters for machine learning.}
  \label{tab:ml_para}
  \renewcommand{\arraystretch}{1.2}
  \begin{ruledtabular}
    \begin{tabular}{r | l}
      parameters & values \\ \hline
      number of training configurations   & $120$  \\
      number of training samples          & $1223$ \\
      number of validation configurations & $30$   \\
      number of validation samples        & $308$  \\
      number of test configurations       & $142$  \\
      number of test samples              & $1448$ \\
      \hline
      \multirow{2}{*}{loss function} & categorical \\
      & cross-entropy \cite{Goodfellow-et-al-2016, chollet2015keras} \\
      optimization method & Adam \cite{Kingma:2014vow} \\
      activation function for hidden layers & ReLU
      \cite{Goodfellow-et-al-2016} \\
      activation function for output layer & softmax
      \cite{Goodfellow-et-al-2016} \\
    \end{tabular}
  \end{ruledtabular}
\end{table}
%
\begin{table}[th]
  \caption{Hyper-parameters for neural networks. Here, we show one of
    the examples of best performance, in which we use only MLP but not
    CNN.}
  \label{tab:nn_para}
  \renewcommand{\arraystretch}{1.2}
  \begin{ruledtabular}
    \begin{tabular}{cccc}
      layer & type & number of units & activation \\
      \hline
      input & - & $225$ & - \\
      hidden \#1 & MLP & $160$ & ReLU \\
      hidden \#2 & MLP & $1210$ & ReLU \\
      hidden \#3 & MLP & $1490$ & ReLU \\
      output & MLP & $8$ & Softmax \\
    \end{tabular}
  \end{ruledtabular}
\end{table}
We use a deep learning model which combines the multi-layer perceptron
(MLP) \cite{Goodfellow-et-al-2016} and the convolutional neural
network (CNN) \cite{Goodfellow-et-al-2016}.
In Table \ref{tab:ml_para}, we present our basic setup for the machine
learning.
We use the gauge configuration ensemble described in Table
\ref{tab:in-para}.
The data measured over $292$ gauge configurations are distributed over
a training set, validation set, and test set as in Table \ref{tab:ml_para}.
For each gauge configuration, we generate around ten $15 \times 15$
matrix samples from the $200$ lowest eigenmodes without overlapping.
We make popular and suitable choices for the loss function\footnote{
  Popular and basic loss functions such as the mean squared error
  (MSE) and mean absolute error (MAE) are usually used for regression
  problems. On the contrary, the categorical cross-entropy loss
  function is most applicable to multi-class classification
  problems.},
optimization method\footnote{ Popular optimization methods available
  in the market are stochastic gradient descent, AdaGrad, RMSprop, and
  Adam \cite{Kingma:2014vow}.},
and activation functions\footnote{ Popular activation functions are
  the $\tanh$, sigmoid, and ReLU. Here, we make use of ReLU for the
  hidden layers since it is the simplest and fastest. The softmax
  function is essential for the output layer of the multi-class
  classification.}
relevant to our purpose, which are summarized in Table
\ref{tab:ml_para}.
The best hyper-parameters such as the number of layers and the number
of units for each layer are determined by using the Keras Tuner
\cite{chollet2015keras}.
The accuracy of classification per gauge configuration is obtained by
averaging the accuracies of the machine learning (ML) prediction for
all the samples on a single gauge configuration.
Our best model achieves an average accuracy of $96.5(156)$\% for $142$
test gauge configurations.
The hyper-parameters which represent the structure of the neural
network model are given in Table \ref{tab:nn_para}.
Among the test set, we find five gauge configurations on which the
average accuracy per gauge configuration is lower than $50$\%.
Data show that some ghost (unphysical) eigenvectors are present in the
eigenvalue spectrum on these gauge configurations, so that the ML
prediction gives a wrong answer not due to failure of the ML algorithm
but due to human mistakes in labeling quartet samples based on the
eigenvalue index.
Excluding these five gauge configurations, we achieve the average
accuracy of $99.4(23)$\%.
Considering that all samples generated on the same gauge configuration
are connected by the eigenvalue index (or quartet index), this average
accuracy of $99.4$\% implies that one can in the end find completely
correct quartet groups for all the normal gauge configurations of the
test set.
It also demonstrates our claim that the leakage pattern is universal
over all normal gauge configuration ensembles.
Details of the results of this ML research will be reported separately
in Ref.~\cite{skLee:prep}.
%


\section{Zero modes and renormalization}
\label{sec:renorm}

\com{Fifth, explain how we can use zero modes to measure the
  renormalization factor for the bilinear operator: $Z_{[P\times
      S]}$.}

As explained in Sec.~\ref{sec:chirality}, we know that there is
practically no leakage for the zero modes in the chirality
measurement.
Hence, it is possible to determine the renormalization factor
$\kappa_P$ by imposing the index theorem as follows.
For $Q \ne 0$,
\begin{align}
  4 \times Q &= - \kappa_P \times \sum_{\lambda \in S_0}
  \langle f^s_\lambda | [\gamma_5 \otimes 1] | f^s_\lambda \rangle
  \label{eq:Z_P-1}
  \\
  \kappa_P &= - \frac{4 Q}{ C_0 }
  \label{eq:Z_P-2}
  \\
  C_0 &= \displaystyle \sum_{\lambda \in S_0} \Gamma_5(\lambda, \lambda)
\end{align}
where $S_0$ is the entire set of zero modes, and
\begin{align}
  \kappa_P &= \frac{ Z_{P \times S}(\mu) } {Z_{P \times P} (\mu)} \,,
\end{align}
where
\begin{align}
  O_{S} &= \bar{\chi}[\gamma_5 \otimes 1]\chi
  \\
  O_{P} &= \bar{\chi}[\gamma_5 \otimes \xi_5]\chi
  \\ {}
  [O_{S}]_R (\mu) &= Z_{P \times S}(\mu) [O_{S}]_B
  \\ {}
  [O_{P}]_R (\mu) &= Z_{P \times P}(\mu) [O_{P}]_B \,,
\end{align}
and the subscript $[\cdots]_R$ ($[\cdots]_B$) represents a
renormalized (bare) operator.
The $Z_{P \times S}$ and $Z_{P \times P}$ are the renormalization
factors for the bilinear operators $O_S$ and $O_P$, respectively.
One advantage of this scheme is that $\kappa_P$ is independent of valence
quark masses, even though we perform the measurement with arbitrary
masses for valence quarks.
Numerical results for $\kappa_P$ are summarized in Table
\ref{tab:renorm-1}.

There are a few key issues in the physical interpretation of
$\kappa_P$.
\begin{itemize}
\item Since the topological charge $Q$ and sum $C_0$ are independent
  of renormalization scale, $\kappa_P$ must be independent of the
  renormalization scale $\mu$.
  
\item This means that the scale dependence of $Z_{P\times S}(\mu)$
  must cancel off that of $Z_{P \times P}(\mu)$.

\item It would be nice to cross-check this property of $\kappa_P$ in
  the RI-MOM scheme \cite{ Aoki:2007xm} and in the RI-SMOM scheme
  \cite{ Sturm:2009kb}.

\end{itemize}
\begin{table}[!tbhp]
  \caption{Numerical results for $\kappa_P$.}
  \label{tab:renorm-1}
  \renewcommand{\arraystretch}{1.2}
  \begin{ruledtabular}
    \begin{tabular}{c | c | r}
      topological charge & number of samples & $\kappa_P$ \\ \hline
      $|Q|=1$ & 72 & 1.26(13) \\
      $|Q|=2$ & 68 & 1.22(3) \\
      $|Q|=3$ & 45 & 1.23(2) \\ \hline
      weighted average & 241 & 1.23(2) \\
    \end{tabular}
  \end{ruledtabular}
\end{table}


\section{Conclusion}
\label{sec:conclude}
%
%
We study general properties of the eigenvalue spectrum of Dirac
operators in the staggered fermion formalism.
As an example, we use the Dirac operator for HYP staggered quarks.
In Section \ref{sec:chirality}, we introduce a new chirality operator
$\Gamma_5$ and a new shift operator $\Xi_5$ and prove that they
respect the continuum recursion relationships, as given in
Eqs.~\eqref{eq:recur-1}-\eqref{eq:recur-4} and
Eqs.~\eqref{eq:rr-xi5-1}-\eqref{eq:rr-xi5-2}.
Using these operators with nice chiral properties, we find that the
leakage pattern of $|\Gamma_5|^{-j,m}_{\ell,m'}$ is related to that of
$|\Xi_5|^{j,m}_{\ell,m'}$ through the Ward identity of the conserved
$U(1)_A$ symmetry.

We find that the leakage pattern of $\Gamma_5$ and $\Xi_5$ for the
zero modes is quite different from that for the non-zero modes.
This difference in leakage pattern allows us to distinguish the zero
modes from the non-zero modes even though we do not know \textit{a
  priori} about the topological charge.
We find that using the leakage pattern of $\Gamma_5$ and $\Xi_5$, one
can determine the topological charge as reliably as when using
standard field theoretical methods such as the cooling method.

We use a machine learning (ML) technique to check the universality
of this leakage pattern over the entire ensemble of gauge
configurations (refer to Table \ref{tab:in-para}).
Our best-trained deep learning model identifies the quartet of
non-zero modes with 98.7(34)\% accuracy using a single normal gauge
configuration.
Choosing the highest probability prediction of the ML and comparing
the prediction with the known answer, we find that the ML can identify
all quartet groups on an eigenvalue spectrum correctly.
In addition, the ML technique detects wrong answers resulting from
human input mistakes since the ML prediction disagrees with a wrong
answer by giving the prediction with low accuracy ($<50\%$).
This reassures us that the ML technique is highly reliable at
identifying anomalous gauge configurations with defects such as
violation of the index theorem and ghost eigenmodes.
Once we identify the zero modes, it is also possible to determine the
ratio of renormalization factors $ \kappa_P = Z_{P\times S}(\mu) /
Z_{P \times P} (\mu)$ from the chirality measurement of $\Gamma_5$.

The leakage pattern is a new concept introduced in this paper.
It can be used to study the low lying eigenvalue spectrum
of staggered Dirac operators systematically.
It helps us understand how to extract the taste symmetry and
chiral symmetry from the staggered eigenvalue spectrum.
The leakage pattern will help us to dig out related physics more
efficiently, such as topological charge, index theorem, Banks-Casher
relation, and non-perturbative renormalization.
%


\begin{acknowledgments}
  We would like to express sincere gratitude to Eduardo Follana for
  helpful discussion and providing his code to cross-check results of
  our code.
  We also thank Jon A. \nobreak{Bailey} for helpful comments on the manuscript.
  We also thank Stephen R.~Sharpe for helpful discussion. 
  The research of W.~Lee is supported by the Mid-Career Research
  Program (Grant No.~NRF-2019R1A2C2085685) of the NRF grant funded by
  the Korean government (MOE).
  This work was supported by Seoul National University Research Grant
  in 2019.
  W.~Lee would like to acknowledge the support from the KISTI
  supercomputing center through the strategic support program for the
  supercomputing application research [No.~KSC-2016-C3-0072,
    KSC-2017-G2-0009, KSC-2017-G2-0014, KSC-2018-G2-0004,
    KSC-2018-CHA-0010, KSC-2018-CHA-0043].
  Computations were carried out in part on the DAVID supercomputer at
  Seoul National University.
\end{acknowledgments}

\appendix

\section{Lanczos algorithm}
\label{app:lanczos}
Lanczos is a numerical algorithm for calculating eigenvalues and
eigenvectors of a Hermitian matrix \cite{Lanczos:1950zz}.
It transforms an $n \times n$ Hermitian matrix $H$ to a tridiagonal
matrix $T$ through a unitary transformation $Q$, which is represented
by
\begin{equation}
  T = Q^\dagger H Q \,.
  \label{eq:lanczos}
\end{equation}
Here columns of $Q$ are composed of basis vectors of the $n$th Krylov
subspace $\mathcal{K}_n (H,b)$ generated by $H$ and a starting vector
$b$ of our choice.
Each iteration of Lanczos computes a column of $Q$ and $T$ in sequence.
At the end, diagonalizing the tridiagonal matrix $T$ yields
eigenvalues and eigenvectors of $H$.
In principle, Lanczos is a direct method that takes $n$ iterations to
construct the $n \times n$ tridiagonal matrix $T$.
However, since these columns of $T$ are computed in order, a sequence
of $m < n $ iterations also constructs an $m \times m$ tridiagonal
matrix $T'$ which is a submatrix of $T$.
In practice, the real benefit of Lanczos is that eigenvalues of $T'$
approximate eigenvalues of $T$.
As iteration continues, and the size of the submatrix $T'$ increases,
eigenvalues of $T'$ converge to eigenvalues of $T$.
The convergence behavior is somewhat complicated.
The eigenvalues converge to the largest, the smallest, or the most
sparse eigenvalue first.
The speed of convergence depends on the density of eigenvalues.
The less dense, the faster the convergence.
In this paper, we make use of two popular improvement techniques of
Lanczos: (1) implicit restart \cite{Lehoucq96deflationtechniques}, and
(2) polynomial acceleration with Chebyshev polynomials \cite{MR736453}.
The implicit restart method gets rid of converged eigenvalues in the middle
of the Lanczos iteration.
It takes effect as if we restarted the Lanczos with a shifted matrix $H'$
given by
\begin{equation}
  H' \;\equiv\; H - \sum_i \lambda_i \mathcal{I} \,,
\end{equation}
where $\lambda_i$ are eigenvalues we want to remove.
Then $H'$ is still Hermitian but does not have such eigenvalues
$\lambda_i$.
Hence, Lanczos with $H'$ converges to remaining eigenvalues faster.
The implicit restarting procedure gives us a new submatrix, which has
a dimension ($(m-r) \times (m-r)$) reduced by the number of
eigenvalues we have removed ($r$).
Then we iterate Lanczos $r$ times to refill the submatrix and restore
the structure of the $m \times m$ matrix.
We repeat the implicit restart to obtain a new submatrix of dimension
$(m-r) \times (m-r)$, and so on.
This procedure allows us to control the size of the submatrix, the
computational cost, and the memory usage, while the submatrix $T'$
contains $(m-r)$ eigenmodes that are more precise (much closer to the
true eigenmodes of the full matrix $H$) for each iteration.
A polynomial operation on a matrix changes the eigenvalue spectrum
accordingly while retaining the eigenvectors.
Since the polynomial of a Hermitian matrix is also Hermitian, Lanczos
is still available to calculate its eigenvalues and eigenvectors.
By choosing a proper polynomial, one can manipulate the density of the
eigenvalue spectrum so that the convergence to the desired eigenvalues
is accelerated.
A Chebyshev polynomial is a popular choice for this purpose.
Using the Chebyshev polynomial, we want to map the first region of
eigenmodes of no interest to $[-1, 1]$ and map the second region of
eigenmodes of our interest to $[-\infty, -1]$.
In the interval $[-1,1]$, the eigenvalues are dense enough that
Lanczos does not converge.
In addition, the Chebyshev polynomial rapidly changes in the second
region so that the density of eigenmodes is low enough to more quickly
accelerate the convergence of Lanczos.
Here we apply the Chebyshev polynomial for $D_s^\dagger D_s$, whose
eigenvalues are $\lambda^2 \ge 0$.
We set the lower bound of the first region to a value somewhat greater
than the largest eigenvalue of interest.
This strategy will not only suppress high unwanted eigenmodes, but
also accelerate the speed of Lanczos for the low eigenmodes of
interest.
Numerical stability is essential for the Lanczos algorithm.
Each Lanczos iteration generates Lanczos vectors, which are column
vectors of the unitary matrix $Q$ in Eq.~\eqref{eq:lanczos}.
After several iterations, however, Lanczos vectors lose their
orthogonality due to gradual loss of numerical precision.
If not addressed, this loss would induce spurious ghost eigenvalues
\cite{Cullum:1985aa}.
A straightforward prescription to solve the problem is performing a
reorthogonalization for every calculation of Lanczos vectors.
There are also alternative approaches to eliminate the ghost
eigenvalues without reorthogonalization, such as the Cullum-Willoughby
method \cite{CULLUM1981329, Cullum:2002aa}.
Here we choose the first solution and perform the full
reorthogonalization for each Lanczos iteration.
For a large scale simulation using Lanczos, Multi-Grid Lanczos \cite{
  Clark:2017wom} and Block Lanczos \cite{ Jang:2019roq} are available.
Multi-Grid Lanczos is also based on the implicit restart and Chebyshev
acceleration.
In addition, Multi-Grid Lanczos reduces the memory requirement
significantly by compressing the eigenvectors using their local
coherence \cite{Luscher:2007se}.
A spatially-blocked deflation subspace is constructed from some of the
lowest eigenvectors of the Dirac operator.
Then the coherence of eigenvectors allows us to represent other
eigenvectors on this subspace and to run Lanczos with much less
memory.
Meanwhile, Block Lanczos utilizes the Split Grid method
\cite{Jang:2019roq}.
This algorithm deals with multiple starting vectors for Lanczos, where
the Split Grid method divides the domain of the Dirac operator
application into multiple smaller domains so that each partial domain
runs in parallel on a partial grid (lattice) with a lower surface to
volume ratio than that of the full grid.
Hence, one can optimize the off-node communication by adjusting the block
(grid) size.
This approach would give a significant speed-up compared with our
method.
We plan to implement Multi-Grid Lanczos and Block Lanczos in the near
future.
%


\section{Even-odd preconditioning and phase ambiguity}
\label{app:phase}
Even-odd preconditioning reorders a fermion field $\chi(x)$ so that
even site fermion fields are obtained first, and odd site fermion
fields are obtained from them:
\begin{equation}
  \chi(x) =
  \begin{pmatrix}
    \chi_e \\
    \chi_o \\
  \end{pmatrix}
  \,,
\end{equation}
where $\chi_e$ ($\chi_o$) is the fermion field collection on even
(odd) sites.
On this basis, the massless staggered Dirac operator $D_s$ can be
represented as a block matrix:
\begin{equation}
  D_s =
  \begin{pmatrix}
    0 & D_{eo} \\
    D_{oe} & 0 \\
  \end{pmatrix}
  \,,
\end{equation}
where $D_{oe}$ ($D_{eo}$) relates even (odd) site fermion fields to
odd (even) site fermion fields.
Since $D_s^\dagger = -D_s$, we also find that $D_{oe}^\dagger = -D_{eo}$
and $D_{eo}^\dagger = -D_{oe}$.
On this basis, $D_s^\dagger D_s$ is expressed as
\begin{align}
  D_s^\dagger D_s & =
  \begin{pmatrix}
    0 & -D_{eo} \\
    -D_{oe} & 0 \\
  \end{pmatrix}
  \begin{pmatrix}
    0 & D_{eo} \\
    D_{oe} & 0 \\
  \end{pmatrix}
  \\
  & =
  \begin{pmatrix}
    - D_{eo} D_{oe} & 0 \\
    0 & - D_{oe} D_{eo} \\
  \end{pmatrix}
  \,.
\end{align}
Hence, the eigenvalue equation of $D_s^\dagger D_s$
(Eq.~\eqref{eq:stag-dirac-spec-2}) can be divided into two eigenvalue
equations as follows,
\begin{align}
  - D_{eo} D_{oe} | g_e \rangle & = \lambda^2 | g_e \rangle \,,
  \label{eq:g_e}
  \\
  - D_{oe} D_{eo} | g_o \rangle & = \lambda^2 | g_o \rangle \,,
  \label{eq:g_o}
\end{align}
where $| g_{e(o)} \rangle$ is the collection of even (odd) site
components of $| g^s_{\lambda^2} \rangle$.
Here we omit the superscript $s$ and the subscript ${\lambda^2}$ for
notational simplicity.
Now let us multiply $D_{oe}$ from the left on both sides of
Eq.~\eqref{eq:g_e}.
Then we find that
\begin{equation}
  - D_{oe} D_{eo} ( D_{oe} | g_e \rangle ) = \lambda^2 ( D_{oe} | g_e
  \rangle ) \,,
\end{equation}
which is identical to Eq.~\eqref{eq:g_o}.
Hence, we find that $| g_o \rangle = \eta\, D_{oe} | g_e \rangle$
where $\eta = r e^{i\alpha}$ is an arbitrary complex number with $r>0$
and $0 \leq \alpha < 2\pi$.
Here $r$ represents the scaling behavior and $\alpha$ represents a
random phase.
Since $- D_{eo} D_{oe} (=D_{oe}^\dagger D_{oe})$ is Hermitian and
positive semi-definite, one can solve Eq.~\eqref{eq:g_e} using the
Lanczos algorithm introduced in Appendix \ref{app:lanczos}.
From the result for $| g_e \rangle$, it is straightforward to obtain
the eigenvector $| g^s_{\lambda^2} \rangle$ of
Eq.~\eqref{eq:stag-dirac-spec-2} since
\begin{equation}
  | g^s_{\lambda^2} \rangle =
  \begin{pmatrix}
    | g_e \rangle \\
    \eta\, D_{oe} | g_e \rangle
  \end{pmatrix}
  \,.
  \label{eq:g_eo}
\end{equation}
Now we apply the projection operator $P_+$, defined in
Eq.~\eqref{eq:proj_p}, to $| g^s_{\lambda^2} \rangle$.
Using Eq.~\eqref{eq:g_e}, we find that
\begin{align}
  | \chi_+ \rangle = P_+ | g^s_{\lambda^2} \rangle & =
  \begin{pmatrix}
    i \lambda & D_{eo} \\
    D_{oe} & i \lambda
  \end{pmatrix}
  \begin{pmatrix}
    | g_e \rangle \\
    \eta\, D_{oe} | g_e \rangle
  \end{pmatrix}
  \nonumber \\
  & = (1 + i \eta \lambda)
  \begin{pmatrix}
    i \lambda\, | g_e \rangle \\
    D_{oe} | g_e \rangle
  \end{pmatrix}
  \,.
  \label{eq:chi_p_eo}
\end{align}
Similarly, for the projection operator $P_-$, defined in
Eq.~\eqref{eq:proj_m}, we find that
\begin{align}
  | \chi_- \rangle = P_- | g^s_{\lambda^2} \rangle =
  (1 - i \eta \lambda)
  \begin{pmatrix}
    - i \lambda\, | g_e \rangle \\
    D_{oe} | g_e \rangle
  \end{pmatrix}
  \,.
  \label{eq:chi_m_eo}
\end{align}
Since $\eta$ only appears in the overall factor for both cases, it
gives only the relative phase difference between the normalized
eigenvectors $| f^s_{\pm \lambda} \rangle$ defined in
Eqs.~\eqref{eq:f_p} and \eqref{eq:f_m}.
We can proceed further to obtain the eigenvectors $| f^s_{\pm\lambda}
\rangle$.
The norm of $| \chi_+ \rangle$ is given by
\begin{align}
  \langle \chi_+ | \chi_+ \rangle 
  & = [ (1 - i \eta^* \lambda) (1 + i \eta
    \lambda ) ] \cdot 2 \lambda^2 \langle g_e | g_e \rangle \,.
\end{align}
Hence, $| f^s_{+\lambda} \rangle$ is
\begin{align}
  | f^s_{+\lambda} \rangle & = \frac{1}{N} \sqrt{
    \frac{1+i\eta\lambda}{ 1-i\eta^*\lambda } }
  \begin{pmatrix}
    i \lambda\, | g_e \rangle \\
    D_{oe} | g_e \rangle
  \end{pmatrix}
  \,,
\end{align}
where
\begin{align}
  N \equiv \sqrt{ 2\lambda^2 \langle g_e | g_e \rangle} \,.
\end{align}
Similarly,
\begin{align}
  | f^s_{-\lambda} \rangle & = \frac{1}{N} \sqrt{
    \frac{1-i\eta\lambda}{1+i\eta^*\lambda} }
  \begin{pmatrix}
    - i \lambda\, | g_e \rangle \\
    D_{oe} | g_e \rangle
  \end{pmatrix}
  \,.
\end{align}
These results for $| f^s_{\pm\lambda} \rangle$ indicate that the phase
difference $\theta$ for the $\Gamma_\epsilon$ transformation defined
in Eq.~\eqref{eq:eps-WI} depends on the value of $\eta$.

In our numerical study, we set $\eta$ to $\eta = r e^{i\alpha} = 1$:
$r=1$ and $\alpha=0$.
Hence, the relative random phase between $| f^s_{\pm\lambda} \rangle$
states is removed by hand.
Therefore, our value of $\theta$ defined in Eq.~\eqref{eq:eps-WI}
includes a bias from our choice of $\eta=1$. 
For $\eta = 1$ (our choice), $\Gamma_\epsilon | f^s_{+\lambda}
\rangle$ is
\begin{align}
  \Gamma_\epsilon | f^s_{+\lambda} \rangle & = \frac{1}{N} \sqrt{
    \frac{1+i\lambda}{ 1-i\lambda } }
  \begin{pmatrix}
    i \lambda\, | g_e \rangle \\
    - D_{oe} | g_e \rangle
  \end{pmatrix}
  \,,
\end{align}
while $| f^s_{- \lambda} \rangle$ is
\begin{align}
  | f^s_{-\lambda} \rangle & = \frac{1}{N} \sqrt{
    \frac{1-i\lambda}{1+i\lambda} }
  \begin{pmatrix}
    - i \lambda\, | g_e \rangle \\
    D_{oe} | g_e \rangle
  \end{pmatrix}
  \,.
\end{align}
Then we obtain $e^{i\theta}$ from the following matrix element,
\begin{align}
  \langle f^s_{-\lambda} | \Gamma_\epsilon | f^s_{+\lambda} \rangle
  & = \frac{1}{N^2} \sqrt{ \left( \frac{1-i\lambda}{1+i\lambda} \right)^*
    \frac{1+i\lambda}{1-i\lambda} } \;\cdot (-N^2)
  \nonumber \\
  & = - \frac{1+i\lambda}{1-i\lambda}
  \nonumber \\
  & = e^{i (\pi + 2\beta)} = e^{i\theta}\,,
  \label{eq:theta-1}
\end{align}
where $\beta \equiv \arctan(\lambda)$.
From Eqs.~\eqref{eq:eps-WI} and \eqref{eq:theta-1}, we find that
\begin{equation}
  \theta = \pi + 2\beta \,.
  \label{eq:theta}
\end{equation}
In Fig.~\ref{fig:theta}, we show the measurements of the phase
$\theta$ for hundreds of eigenvectors on a gauge configuration with $Q
= -1$.
The results for $\theta$ are consistent with our theoretical
prediction Eq.~\eqref{eq:theta} within numerical precision.
%


\section{Eigenvalue spectrum for $Q=-2$ and $Q=-3$}
\label{app:qm2-qm3-dat}
%
\begin{figure}[t]
%
  \subfigure[$\lambda_i^2$]{
    \includegraphics[width=\linewidth]{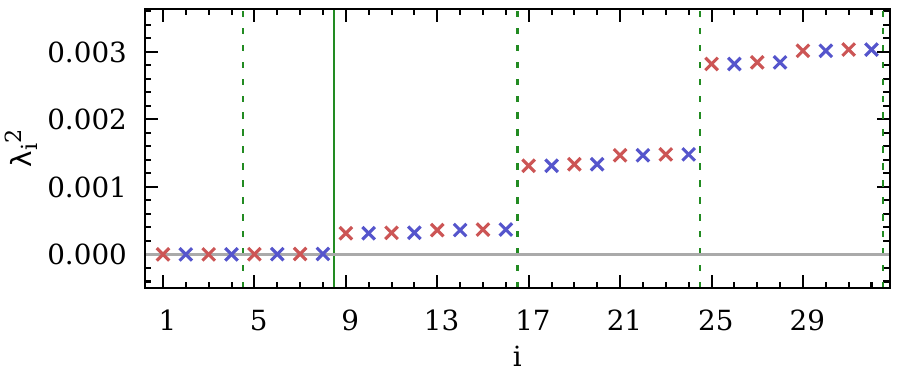}
    \label{sfig:qm2-abs-ev}
  }
  \\
  \subfigure[$\lambda_i$]{
    \includegraphics[width=\linewidth]{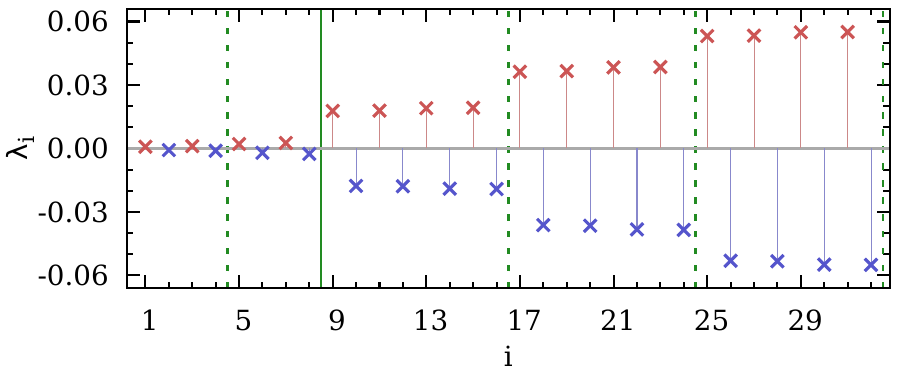}
    \label{sfig:qm2-ev}
  }
  \caption{The same as Fig.~\ref{fig:q0-ev} except for $Q=-2$.}
  \label{fig:qm2-ev}
\end{figure}
%
\begin{figure}[t]
%
  \subfigure[$\lambda_i^2$]{
    \includegraphics[width=\linewidth]{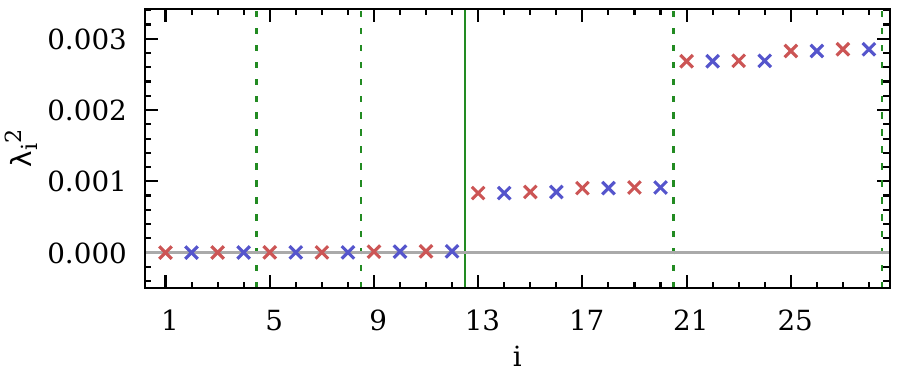}
    \label{sfig:qm3-abs-ev}
  }
  \\
  \subfigure[$\lambda_i$]{
    \includegraphics[width=\linewidth]{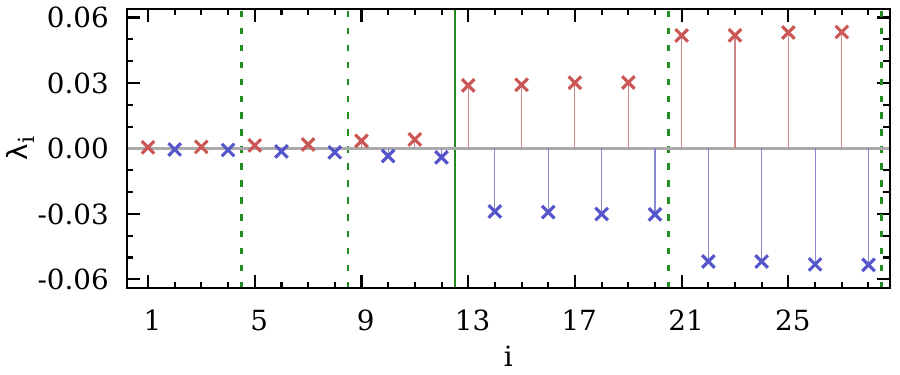}
    \label{sfig:qm3-ev}
  }
  \caption{The same as Fig.~\ref{fig:q0-ev} except for $Q=-3$.}
  \label{fig:qm3-ev}
\end{figure}
In Figs.~\ref{fig:qm2-ev} and \ref{fig:qm3-ev}, we present examples of
the eigenvalue spectrum for $Q = -2$ and $Q = -3$, respectively.
Figures \ref{sfig:qm2-abs-ev} and \ref{sfig:qm3-abs-ev} show
eigenvalues $\lambda^2$ for the eigenvectors $| g^s_{\lambda^2}
\rangle$ defined in Eq.~\eqref{eq:stag-dirac-spec-2}.
In Fig.~\ref{fig:qm2-ev}, we find two sets of four-fold degenerate
eigenstates, corresponding to $\{ \lambda_1, \lambda_2, \lambda_3,
\lambda_4 \}$ and $\{ \lambda_5, \lambda_6, \lambda_7, \lambda_8 \} $.
Each set of four eigenvalues indicates a quartet of would-be zero
modes.
The number of quartets is related to the topological charge $Q = -2$
by the index theorem of Eq.~\eqref{eq:indexThm} when all would-be zero
modes have the same chirality ($n_-= 0$ and $n_+ = 2$).
Apart from the would-be zero modes, we observe that non-zero modes are
eight-fold degenerate, as in the cases $Q = 0$
(Fig.~\ref{fig:q0-ev}) and $Q = -1$ (Fig.~\ref{fig:qm1-ev}).
Similarly, in Fig.~\ref{fig:qm3-ev}, we find three quartets of
would-be zero modes with $n_-=0$ and $n_+=3$ ($Q=-3$): $\{ \lambda_1,
\lambda_2, \lambda_3, \lambda_4 \}$, $\{ \lambda_5, \lambda_6,
\lambda_7, \lambda_8 \} $, and $\{ \lambda_9, \lambda_{10},
\lambda_{11}, \lambda_{12} \}$.
Because the number of quartets equals the absolute value of the topological
charge $|Q| = 3$, it is possible to deduce that all the would-be zero modes
have the same chirality in accordance with the index theorem of
Eq.~\eqref{eq:indexThm}.
For non-zero modes, we observe the pattern of eight-fold degeneracy as
in other examples for $Q = 0$ in Fig.~\ref{fig:q0-ev}, $Q=-1$
in Fig.~\ref{fig:qm1-ev}, and $Q=-2$ in Fig.~\ref{fig:qm2-ev}.
%


%
\section{Comparison of the Golterman and Kluberg-Stern methods}
\label{app:cmp:gol-klu}

Using the Golterman method, the chirality operator is defined as follows.
\begin{align}
  \mathcal{O}^{\text{Gol}}_{\gamma_5 \times 1} (x)
  &= \sum_A \bar{\chi}(x_A) [\gamma_5 \otimes 1]^{\text{Gol}} \chi(x_A)
  \nonumber \\
  &= \sum_{A} \bar{\chi}(x_A) [\rho_{\gamma_5 \otimes 1}(A)
  M_{\gamma_5 \otimes 1}] \chi(x_A)
\end{align}
where the coordinate $x_A = 2x + A$, $x$ is a coordinate of the
hypercube, and $A$ is a hypercubic vector with $A_\mu \in \{0,1\}$
with $\mu = 1,2,3,4$.
$M_{\gamma_5 \otimes 1}$ is defined as
\begin{align}
  M_{\gamma_5 \otimes 1} \chi(x_A)
  &\equiv \frac{1}{4!} \sum_{\{p_a\} \in \mathcal{P}} \tilde{D}_{p_4}
  \tilde{D}_{p_3}\tilde{D}_{p_2}\tilde{D}_{p_1} \chi(x_A)
\end{align}
where $p_1 \ne p_2 \ne p_3 \ne p_4$, and $\mathcal{P}$ is the set of all
permutations of \{1,2,3,4\}.
The symmetric shift operator $\tilde{D}_{\mu}$ is defined as
\begin{align}
  \tilde{D}_{\mu} \chi(y) &= \frac{1}{2}
  [ V_\mu(y) \chi(y+\hat{\mu})
  + V^\dagger_\mu(y-\hat\mu) \chi(y-\hat\mu)]
\end{align}
where $\hat\mu$ is a unit vector in the $\mu$ direction of Euclidean
space.
$V_\mu(y)$ is a (smeared) gauge link used in Eq.~\eqref{eq:U-1}.
$\rho_{\gamma_5 \otimes 1}$ is defined as
\begin{align}
  \rho_{\gamma_5 \otimes 1}(A) &=
  \frac{1}{4} \Tr (\gamma_A^\dagger \gamma_5 \gamma_{\bar{A}})
  = (-1)^{A_1 + A_3}
\end{align}
where $\bar{A}_\mu = (A_\mu + 1) \mod 2$.
In the Golterman method, the chirality operator $[\gamma_5 \otimes
  1]^\text{Gol}$ connects a site A=(0,0,0,0) of $\bar\chi$ to the 16
sites B=(1,1,1,1), (-1,1,1,1), ..., (-1,-1,-1,-1) of $\chi$.
As a consequence,
\begin{align}
  ([\gamma_5 \otimes 1]^\text{Gol})^2 \ne 1
  \label{eq:rr-gol-1}
\end{align}
unlike the continuum chirality operator which satisfies
$[\gamma_5 \otimes 1]^2 = [1 \otimes 1] = 1$.
%

%
Using the definition of bilinear operators in Eq.~\eqref{eq:bi-op-1},
obtained with the Kluberg-Stern method, we define the chirality operator as
\begin{align}
  \mathcal{O}_{\gamma_5 \times 1}^{\text{Klu}} (x)
  &\equiv \sum_{A}
  \bar{\chi}(x_A) [\gamma_5 \otimes 1]^{\text{Klu}}_{A\bar{A}}
  \chi(x_{\bar{A}})
  \nonumber \\
  &= \sum_{A} \bar{\chi}_a (x_A) \overline{(\gamma_5 \otimes 1)}_{A\bar{A}}
  U(x_A,x_{\bar{A}})_{ab} \chi_b(x_{\bar{A}})
  \label{eq:chiral-op-1}
\end{align}
Using the definition in Eq.~\eqref{eq:s-t-1} with $\gamma_S = \gamma_5$
and $\xi_T = 1$, we find 
\begin{align}
  \overline{(\gamma_5 \otimes 1)}_{A\bar{A}} &\equiv
  \frac{1}{4} \Tr(\gamma_A^\dagger \gamma_5 \gamma_{\bar{A}} 1^\dagger)
  \nonumber \\
  &= \rho_{\gamma_5 \otimes 1}(A) = (-1)^{A_1 + A_3}
  \label{eq:chiral-op-2}
\end{align}
Hence, the phase of the Kluberg-Stern operator is identical to
that of the Golterman operator, which is in general true for 
the whole set of bilinear operators.
We have freedom to choose $U(x_A,x_{\bar{A}})_{ab}$ to make
the chirality operator gauge-invariant.
Here, we set $U(x_A,x_{\bar{A}})_{ab}$ to
\begin{align}
  U(x_A,x_{\bar{A}}) &\equiv
  \mathbb{P}_{SU(3)} \bigg[ \sum_{p \in \mathcal{C}}
    V(x_A,x_{p_1}) V(x_{p_1},x_{p_2})
    \nonumber \\
    & \qquad \qquad \qquad \cdots V(x_{p_n},x_{\bar{A}})
    \bigg]
  \label{eq:U-2}
\end{align}
where $\mathbb{P}_{SU(3)}$ represents the $SU(3)$ projection, and
$\mathcal{C}$ represents the complete set of shortest paths from $x_A$
to $x_{\bar{A}}$.
Here, the $SU(3)$ projection is crucial to make the chirality operator
satisfy the continuum recursion relation:
\begin{align}
  ([\gamma_5 \otimes 1]^{\text{Klu}})^2 &= [1 \otimes 1] = 1
  \label{eq:rec-rel-1}
\end{align}
A rigorous proof of Eq.~\eqref{eq:rec-rel-1} is given in Appendix
\ref{app:recur}, Theorem \ref{thm-1}.
In the Kluberg-Stern method, the chirality operator $[\gamma_5 \otimes
  1]^\text{Klu}$ connects a site $A=(0,0,0,0)$ of $\bar\chi$ to a
single site $\bar{A} = (1,1,1,1)$, which makes it possible to satisfy
the recursion relation of Eq.~\eqref{eq:rec-rel-1}.

The Kluberg-Stern operator without the $SU(3)$ projection
$\mathbb{P}_{SU(3)}$ contains the Golterman operator as a leading
term:
\begin{align}
  \mathcal{O}^{\text{Klu}}_{\gamma_5 \times 1} (x) &=
  \mathcal{O}^{\text{Gol}}_{\gamma_5 \times 1} (x)
  + \mathcal{O}_{\text{irrel}}(x)
\end{align}
where $\mathcal{O}_\text{irrel}$ represents irrelevant operators of
higher dimension.
For example, $\mathcal{O}_\text{irrel}$ includes a 4-dimensional
operator:
\begin{align}
  \mathcal{O}^\text{irrel}_{\gamma_\mu \times \xi_{\mu5}}
  &= \sum_A \bar{\chi}(x_A) [\rho_{\gamma_\mu \otimes \xi_{\mu5} } (A)
    \; M_{\gamma_\mu \otimes \xi_{\mu5}}] \chi(x_A)
  \\
  \rho_{\gamma_\mu \otimes \xi_{\mu5} } (A) &=
  \frac{1}{4} \Tr(\gamma_A^\dagger \gamma_\mu \gamma_{\bar{A}}
  \gamma_{\mu5}^\dagger)
  \\
  M_{\gamma_\mu \otimes \xi_{\mu5}} &= \frac{1}{4!}
  \sum_{\{p_a\} \in \mathcal{P}_\mu}
  \Big[ D_\mu \tilde{D}_{p_1} \tilde{D}_{p_2} \tilde{D}_{p_3}
    \nonumber \\ 
    & + \tilde{D}_{p_1} D_\mu \tilde{D}_{p_2} \tilde{D}_{p_3}
    + \tilde{D}_{p_1} \tilde{D}_{p_2} D_\mu \tilde{D}_{p_3}
    \nonumber \\     
    & + \tilde{D}_{p_1} \tilde{D}_{p_2} \tilde{D}_{p_3} D_\mu \Big]
  \\
    D_{\mu} \chi(y) &= \frac{1}{2} [ V_\mu(y) \chi(y+\hat{\mu})
  - V^\dagger_\mu(y-\hat\mu) \chi(y-\hat\mu)]
\end{align}
where $p_1 \ne p_2 \ne p_3 \ne \mu$, and $\mathcal{P}_\mu$ is
the set of all permutations of $\{p_a | p_a \ne \mu\}$.
Technical details of the derivation of a complete set of irrelevant
operators are explained in Ref.~\cite{ Bae:2008qe}.
All the irrelevant operators have tastes different from $1$ ($\xi_T
\ne 1$), and they contain at least one derivative $D_\mu$, which
leads to higher dimension operators.
As a consequence, their contribution to the chirality vanishes in the
continuum limit $a \rightarrow 0$.
%

%
The recursion relation in Eq.~\eqref{eq:rec-rel-1} is essential to
define the chirality value uniquely for the staggered fermion
formulation.
\begin{align}
  ([\gamma_5 \otimes 1]^{\text{Klu}})^{2n+1} &=
  [\gamma_5 \otimes 1]^{\text{Klu}}
  \label{eq:rr-klu-1}
\end{align}
for all positive $n \in Z$.
Hence, in the case of the Kluberg-Stern operators with the $SU(3)$
projection, we can define the chirality value uniquely without
any ambiguity.
However, in the case of the Golterman operators, it is not possible to
define the chirality value uniquely due to the following
ambiguity:
\begin{align}
  \mathcal{O}^{\text{Gol},n}_{\gamma_5 \times 1} (x)
  &\equiv \sum_A \bar{\chi}(x_A) ([\gamma_5 \otimes 1]^{\text{Gol}})^{2n+1}
  \chi(x_A)
  \\
  [\gamma_5 \otimes 1]^{\text{Gol}}
  &\ne ([\gamma_5 \otimes 1]^{\text{Gol}})^{2n+1}
  \label{eq:rr-gol-2}
  \\
  \mathcal{O}^{\text{Gol},n}_{\gamma_5 \times 1}
  &\ne \mathcal{O}^{\text{Gol},m}_{\gamma_5 \times 1}
  \quad \text{if $n \ne m$} \,.
  \label{eq:u-gol-1}
\end{align}
In addition, the Golterman operator does not satisfy the Ward identity,
while the Kluberg-Stern operator respects it,
\begin{align}
  [\gamma_5 \otimes 1]^{\text{Gol}} [1 \otimes \xi_5]^\text{Gol}
  &\ne [\gamma_5 \otimes \xi_5] = \Gamma_\epsilon
  \label{eq:wi-gol-1}
  \\
  [\gamma_5 \otimes 1]^{\text{Klu}} [1 \otimes \xi_5]^\text{Klu}
  &= [\gamma_5 \otimes \xi_5] = \Gamma_\epsilon
  \label{eq:wi-klu-1}
\end{align}
A rigorous proof is given in Theorem \ref{thm-3}.
However, in the continuum limit, they converge to a unique value:
\begin{align}
  \lim_{a \rightarrow 0} \mathcal{O}^{\text{Gol},n}_{\gamma_5 \times 1}
  &= \lim_{a \rightarrow 0} \mathcal{O}^{\text{Gol},1}_{\gamma_5 \times 1}
  \quad \text{$\forall$ positive $n \in Z$} 
  \nonumber \\
  &= \lim_{a \rightarrow 0} \mathcal{O}^{\text{Klu}}_{\gamma_5 \times 1}
\end{align}
since the contribution from all irrelevant operators vanishes in
the continuum.
We summarize the differences between the Golterman method and the
Kluberg-Stern method in Table \ref{tab:cmp-gol-klu-1}.
%
%
\begin{table}[!h]
  \caption{ Comparison between the Golterman and Kluberg-Stern
    methods. Here Gol (Klu) represents the Golterman (Kluberg-Stern)
    method. Recursion represents the recursion
    relationship. Uniqueness represents the uniqueness of the chirality
    operator value. Ward Id.~represents Ward identity. The $\bigcirc$
    ($\vartimes$) indicates that a given property is (is not) respected
    by a specific transcription. Ref.~represents key equations given for reference.}
  \label{tab:cmp-gol-klu-1}
  \renewcommand{\arraystretch}{1.2}
  \begin{ruledtabular}
    \begin{tabular}{l | l l | l}
      property  & Gol & Klu & Ref. \\ \hline
      Recursion & $\vartimes$  & $\bigcirc$
      & Eqs.~\eqref{eq:rr-gol-1} and \eqref{eq:rr-klu-1} \\
      Uniqueness & $\vartimes$  & $\bigcirc$
      & Eqs.~\eqref{eq:u-gol-1} and \eqref{eq:rr-klu-1} \\
      Ward Id.  & $\vartimes$  & $\bigcirc$
      & Eqs.~\eqref{eq:wi-gol-1} and \eqref{eq:wi-klu-1} \\
    \end{tabular}
  \end{ruledtabular}
\end{table}

%
%

\section{Recursion relationships for chirality operators}
\label{app:recur}

We define the chirality operator
\begin{align}
  & \langle f^s_\alpha | [\gamma_5 \otimes 1] | f^s_\beta \rangle
  \equiv
  \nonumber \\
  & \sum_{x}\sum_{A,B} \; [f^s_\alpha(x_A)]^\dagger
  \overline{(\gamma_5 \otimes 1 )}_{AB} U(x_A,x_B) f^s_\beta(x_B)
  \label{eq:chirality-11}
  \\
  & \overline{(\gamma_S \otimes \xi_T)}_{AB} =
  \frac{1}{4} \Tr(\gamma_A^\dagger \gamma_S \gamma_B \gamma_T^\dagger)
  \\
  & U(x_A,x_B) = \mathbb{P}_{SU(3)} \bigg[ \sum_{p \in \mathcal{C}}
    V(x_A,x_{p_1}) V(x_{p_1},x_{p_2})
    \nonumber \\
    & \qquad \qquad \qquad V(x_{p_2},x_{p_3}) V(x_{p_3},x_{B})
    \bigg]
  \label{eq:link-11}
\end{align}

First let us prove the following theorem.
\begin{thm}
  \begin{align}
    & [\gamma_5 \otimes 1] [\gamma_5 \otimes 1] =
    [1 \otimes 1]
    \label{eq:thm-1}
  \end{align}
  \label{thm-1}
\end{thm}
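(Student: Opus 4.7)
The plan is to verify the identity at the level of matrix elements on staggered fields within a single hypercube. I would start from the single-particle kernel
\[
([\gamma_5\otimes 1])_{x_A,\,x_B;\,ab} = \overline{(\gamma_5\otimes 1)}_{AB}\,U(x_A,x_B)_{ab},
\]
and observe that $\overline{(\gamma_5\otimes 1)}_{AB} = \tfrac{1}{4}\Tr(\gamma_A^\dagger\gamma_5\gamma_B) = \eta(B)\,\delta_{A,\bar B}$, where $\bar B$ is the hypercube complement of $B$ and $\eta(B)\in\{\pm 1\}$. Composing the operator with itself then collapses the sum over the intermediate hypercube vector $C$ to a single term: the first factor forces $C=\bar A$, which via the constraint $C=\bar B$ from the second factor also forces $A=B$.

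Once the square is reduced to a single surviving term, the remaining work splits into two independent pieces. First, on the spin-taste side, I would establish $\eta(\bar A)\,\eta(A) = +1$ for every $A\in\{0,1\}^4$ by combining the reordering $\gamma_5\gamma_A = \eta(A)\gamma_{\bar A}$ with $\gamma_A\gamma_{\bar A} = (-1)^{|A|\,|\bar A|}\gamma_{\bar A}\gamma_A$. This reduces the sign product to $(-1)^{|A|(1+|\bar A|)}$, which is even because $|A|+|\bar A|=4$. Second, on the gauge-link side, I would use the fact that $U(x_A,x_B)$ is the $\mathbb{P}_{SU(3)}$ projection of a sum of shortest-path products of unitary (HYP or Fat7) fat links $V$ satisfying $V(x,y)=V(y,x)^\dagger$. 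Because the polar-type $SU(3)$ projection commutes with Hermitian conjugation, one has $U(x_{\bar A},x_A)=U(x_A,x_{\bar A})^\dagger$, and unitarity gives $U(x_A,x_{\bar A})\,U(x_{\bar A},x_A)=\mathbb{1}_c$ in color space.

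Assembling the three ingredients, the matrix element of $[\gamma_5\otimes 1]^2$ equals $\delta_{AB}\,\delta_{ab}$, which matches $\overline{(1\otimes 1)}_{AB}\,U(x_A,x_A)_{ab}$ and hence $[1\otimes 1]$ by definition. The step I expect to be the most delicate is the interplay between path summation and the $SU(3)$ projection: if the projection scheme failed to commute with Hermitian conjugation, the gauge-link cancellation would not be automatic. The main obstacle is therefore to state the projection scheme precisely and to check $\mathbb{P}_{SU(3)}(X^\dagger)=\mathbb{P}_{SU(3)}(X)^\dagger$; once this is in hand, everything else is Clifford-algebra and combinatorial bookkeeping.
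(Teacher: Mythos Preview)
Your proposal is correct and follows essentially the same route as the paper: expand the square, use the $\delta_{A\bar B}$ structure of $\overline{(\gamma_5\otimes 1)}_{AB}$ to collapse the intermediate sum, verify the sign product $\eta(A)\eta(\bar A)=1$, and cancel the gauge links via $U(x_{\bar A},x_A)=U(x_A,x_{\bar A})^\dagger$ from the $SU(3)$ projection. The only cosmetic difference is that the paper computes the sign via the explicit formula $\eta_5(A)=(-1)^{A_1+A_3}$ and the observation $\eta_5(\bar A)=\eta_5(A)$, whereas you reach the same conclusion through a Clifford-algebra reordering; your explicit flagging of the $\mathbb{P}_{SU(3)}(X^\dagger)=\mathbb{P}_{SU(3)}(X)^\dagger$ requirement is in fact more careful than the paper's one-line assertion.
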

\begin{proof}
  Let us first rewrite $[\gamma_5 \otimes 1]^2 $ as follows,
  \begin{align}
    [\gamma_5 \otimes 1]^2_{AC} & =
    \sum_B \overline{(\gamma_5 \otimes 1 )}_{AB} U(x_A,x_B)
    \nonumber \\
    & \qquad \qquad \cdot \overline{(\gamma_5 \otimes 1 )}_{BC} U(x_B, x_C)
    \nonumber \\
    &= \sum_B [ \overline{(\gamma_5 \otimes 1 )}_{AB}
      \overline{(\gamma_5 \otimes 1 )}_{BC} ]
    \nonumber \\
    & \qquad \qquad \cdot [ U(x_A, x_B) U(x_B, x_C) ]
    \label{eq:id-1}
  \end{align}
  We know that
  \begin{align}
    \overline{(\gamma_5 \otimes 1 )}_{AB}
    & = \frac{1}{4} \Tr(\gamma_A^\dagger \gamma_5 \gamma_B 1 )
    \nonumber \\
    & = \delta_{B\bar{A}} [\eta_1(A) \eta_2(A) \eta_3(A) \eta_4(A)]
    \nonumber \\
    & = \delta_{B\bar{A}} \eta_5(A) \,,
    \label{eq:g5x1:1}
  \end{align}
  where $\bar{A}_\mu = (A_\mu + 1) \mod 2$, and
  \begin{align}
    \eta_\mu(A) &= (-1)^{X_\mu}, \qquad
    \text{ for $\mu = 1,2,3,4$} \,,
    \\
    X_\mu &= \sum_{\nu<\mu} A_\nu,
    \\ \eta_5(A) &= \eta_1(A) \eta_2(A) \eta_3(A)
    \eta_4(A) = (-1)^{A_1 + A_3} \,.
  \end{align}
  Similarly, we find that
  \begin{align}
    \overline{(\gamma_5 \otimes 1 )}_{BC}
    &= \delta_{C\overline{B}} \eta_5(B) \,.
  \end{align}
  Hence, we can rewrite Eq.~\eqref{eq:id-1} as follows,
  \begin{align}
    [\gamma_5 \otimes 1]^2_{AC} &=
    \sum_B [ \delta_{B\bar{A}} \eta_5(A)
      \delta_{C\bar{B}} \eta_5(B) ]
    \nonumber \\
    & \qquad \cdot [ U(x_A, x_B) U(x_B, x_C) ]
        \nonumber \\
        &= \delta_{AC} [U(x_A, x_{\bar{A}}) U(x_{\bar{A}},x_A)] \,,
        \label{eq:id-2}
  \end{align}
  where we use the helpful identity $\eta_5(\bar{A}) = \eta_5(A)$.
  Thanks to the $SU(3)$ projection in Eq.~\eqref{eq:link-11},
  $U(x_{\bar{A}},x_A) = [U(x_A, x_{\bar{A}})]^\dagger \in
  SU(3)$.
  Hence, $[U(x_A, x_{\bar{A}}) U(x_{\bar{A}},x_A)] = 1$.
  Therefore, we can rewrite Eq.~\eqref{eq:id-2} as follows,
  \begin{align}
    [\gamma_5 \otimes 1]^2_{AC} &=
    \delta_{AC} = [1 \otimes 1]_{AC} \,.
  \end{align}
  Hence, we have just proven that $[\gamma_5 \otimes 1]^2 =
  [1 \otimes 1]$.
  (Q.E.D.)
\end{proof}

Using the results of Eq.~\eqref{eq:thm-1}, we can prove the recursion
relationship as follows,
\begin{align}
  [\gamma_5 \otimes 1]^{2n+1} &=
  \left([\gamma_5 \otimes 1]^2 \right)^n
  \cdot  [\gamma_5 \otimes 1] \\
  &= \left([1 \otimes 1]\right)^n
  \cdot  [\gamma_5 \otimes 1] \\
  &= [1 \otimes 1]
  \cdot [\gamma_5 \otimes 1] \\
  &= [\gamma_5 \otimes 1] \,.
\end{align}
Using the results of Eq.~\eqref{eq:thm-1}, we can prove another recursion
relationship as follows,
\begin{align}
  [\gamma_5 \otimes 1]^{2n} &=
  \left([\gamma_5 \otimes 1]^2 \right)^n
  \\
  &= \left([1 \otimes 1]\right)^n
  \\
  &= [1 \otimes 1] \,.
\end{align}

Finally, we can prove the following theorem.
\begin{thm}
  \begin{align}
    & [\frac{1+\gamma_5}{2} \otimes 1]
    [ \frac{1+\gamma_5}{2} \otimes 1] =
    [ \frac{1+\gamma_5}{2} \otimes 1]
    \label{thm-2}
  \end{align}
\end{thm}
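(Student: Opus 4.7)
The plan is to reduce the statement to Theorem~\ref{thm-1} via linearity of the bracket notation $[\,\cdot\,\otimes\,\cdot\,]$ in its first slot. Writing
\begin{align}
\bigl[\tfrac{1+\gamma_5}{2}\otimes 1\bigr]
&= \tfrac{1}{2}[1\otimes 1] + \tfrac{1}{2}[\gamma_5\otimes 1],\nonumber
\end{align}
which is legitimate because the spin-taste matrix $\overline{(\gamma_S\otimes\xi_T)}_{AB}=\tfrac{1}{4}\Tr(\gamma_A^\dagger\gamma_S\gamma_B\gamma_T^\dagger)$ is linear in $\gamma_S$ while the accompanying gauge link $U(x_A,x_B)$ depends only on the hypercubic displacement indices $A,B$ and not on the Dirac part. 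I would then square both sides and distribute to obtain four cross products.

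Next, I would evaluate each of the four pieces. The diagonal piece $[\gamma_5\otimes 1]^2=[1\otimes 1]$ is exactly Theorem~\ref{thm-1}. The piece $[1\otimes 1]^2$ reduces to $[1\otimes 1]$ by observing that $\overline{(1\otimes 1)}_{AB}=\delta_{AB}$ so that two applications enforce $A=C$ via an intermediate $B$ with no nontrivial gauge transport, leaving the identity. For the mixed pieces $[1\otimes 1][\gamma_5\otimes 1]$ and $[\gamma_5\otimes 1][1\otimes 1]$, the delta from $[1\otimes 1]$ collapses the intermediate sum and leaves a single copy of $[\gamma_5\otimes 1]$ unchanged, since $U(x_A,x_A)=\mathbb{1}$ trivially. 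Assembling,
\begin{align}
\bigl[\tfrac{1+\gamma_5}{2}\otimes 1\bigr]^2
&= \tfrac{1}{4}\bigl([1\otimes 1] + 2[\gamma_5\otimes 1] + [1\otimes 1]\bigr)\nonumber\\
&= \tfrac{1}{2}\bigl([1\otimes 1] + [\gamma_5\otimes 1]\bigr)
= \bigl[\tfrac{1+\gamma_5}{2}\otimes 1\bigr],\nonumber
\end{align}
which is the desired identity. The analogous identity for $(1-\gamma_5)/2$ follows by replacing $+$ with $-$ throughout, and the orthogonality relation of Eq.~\eqref{eq:recur-4} follows from the same expansion with the cancellation $\tfrac{1}{4}([1\otimes 1]-[1\otimes 1])=0$ after using $[\gamma_5\otimes 1]^2=[1\otimes 1]$.

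The only delicate step is the evaluation of the mixed products $[1\otimes 1][\gamma_5\otimes 1]$, since one must check that the gauge-link structure of the staggered bilinear composition respects ordinary matrix multiplication in the hypercubic indices $A,B,C$. This was implicitly established in the proof of Theorem~\ref{thm-1} via Eq.~\eqref{eq:id-1}: composition is carried out by summing over an intermediate hypercubic index $B$ with the product of gauge transport factors $U(x_A,x_B)U(x_B,x_C)$, which collapses appropriately whenever one of the two factors is a $\delta_{AB}$ coming from $[1\otimes 1]$. Thus no new non-trivial computation is required beyond Theorem~\ref{thm-1}; the main obstacle is purely notational, namely being careful that the convention for composing two bracketed operators really is the ordinary product of the matrices $\overline{(\gamma_S\otimes\xi_T)}_{AB}\,U(x_A,x_B)$, which is precisely how Theorem~\ref{thm-1} was set up.
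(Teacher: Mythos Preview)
Your proof is correct and follows essentially the same route as the paper: expand $[\tfrac{1+\gamma_5}{2}\otimes 1]$ by linearity, square, and reduce using Theorem~\ref{thm-1}. The only difference is that you spell out explicitly why $[1\otimes 1]$ acts as a two-sided identity under the hypercubic composition law, a point the paper simply takes for granted.
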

\begin{proof}
  \begin{align}
    [\frac{1+\gamma_5}{2} \otimes 1]^2
    &= \frac{1}{4} \left([1 \otimes 1]
    + [\gamma_5 \otimes 1] \right)^2
    \nonumber \\
    &= \frac{1}{4} \left( [1 \otimes 1]
    + 2 [\gamma_5 \otimes 1] + [\gamma_5 \otimes 1]^2
    \right)
    \nonumber \\
    &= \frac{1}{2} \left( [1 \otimes 1]
    + [\gamma_5 \otimes 1] \right)
    \nonumber \\
    &= [\frac{1+\gamma_5}{2} \otimes 1] \,.
  \end{align}
  (Q.E.D.)
\end{proof}
Using Eq.~\eqref{thm-2}, we can prove that for integer $n>0$,
\begin{align}
  [\frac{1+\gamma_5}{2} \otimes 1]^n &=
  [\frac{1+\gamma_5}{2} \otimes 1]
\end{align}
by induction.

At this stage, it will be trivial to prove that
\begin{align}
  [\frac{1+\gamma_5}{2} \otimes 1]
  [\frac{1-\gamma_5}{2} \otimes 1] &=0 \,.
\end{align}
%


The next two theorems concern the chiral Ward identities.
\begin{thm}
  \begin{align}
    [\gamma_5 \otimes \xi_5]
    &= [\gamma_5 \otimes 1] [1  \otimes \xi_5]
    = [1  \otimes \xi_5] [\gamma_5 \otimes 1]
  \end{align}
  \label{thm-3}
\end{thm}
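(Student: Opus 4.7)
The plan is to parallel the proof of Theorem 1 by factorizing the identity into two independent pieces: a spin-taste (Clifford-algebra) statement about the coefficients $\overline{(\cdot \otimes \cdot)}_{AB}$, and a color-space statement about the compounded gauge-link factors. Each half is essentially a one-line computation; the subtlety lies only in ensuring that the two halves recombine into the correct single-link object on the right-hand side.

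For the Clifford-algebra piece, I would invoke the orthonormality $\tfrac{1}{4}\Tr(\gamma_A^\dagger \gamma_B) = \delta_{AB}$ of the hypercubic gamma basis, which gives the completeness relation $\sum_B \tfrac{1}{4}(\gamma_B)_{ij}(\gamma_B^\dagger)_{k\ell} = \delta_{i\ell}\delta_{jk}$. Substituting the definition $\overline{(\gamma_S \otimes \xi_T)}_{AB} = \tfrac{1}{4}\Tr(\gamma_A^\dagger \gamma_S \gamma_B \gamma_T^\dagger)$ and summing over the intermediate index collapses a product of two traces to a single trace, yielding $\sum_B \overline{(\gamma_5 \otimes 1)}_{AB}\,\overline{(1 \otimes \xi_5)}_{BC} = \tfrac{1}{4}\Tr(\gamma_A^\dagger \gamma_5 \gamma_C \gamma_5) = \overline{(\gamma_5 \otimes \xi_5)}_{AC}$. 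This is a Fierz-type identity in a finite-dimensional matrix algebra and presents no real difficulty.

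For the link piece, I would reuse the mechanism from Theorem 1. Both $\overline{(\gamma_5 \otimes 1)}_{AB}$ and $\overline{(1 \otimes \xi_5)}_{AB}$ are proportional to $\delta_{B\bar{A}}$ (the shift form established there and by an analogous computation of $\tfrac{1}{4}\Tr(\gamma_A^\dagger \gamma_B \gamma_5)$), so only the term $B = \bar{A}$, $C = \bar{\bar{A}} = A$ survives in the matrix product, and the accompanying link factors combine into $U(x_A, x_{\bar{A}})\,U(x_{\bar{A}}, x_A)$. Because $U$ is defined via an $SU(3)$ projection of sums over shortest-path fat-link products, we have $U(x_{\bar{A}}, x_A) = [U(x_A, x_{\bar{A}})]^\dagger$, so the composite collapses to the color identity. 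On the other side, $\overline{(\gamma_5 \otimes \xi_5)}_{AC} \propto \delta_{AC}$, so its link factor is the trivial $U(x_A, x_A) = 1$, and the two sides agree.

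The second equality $[1 \otimes \xi_5][\gamma_5 \otimes 1] = [\gamma_5 \otimes \xi_5]$ follows by exactly the same two-step argument: the coefficient calculation again gives $\tfrac{1}{4}\Tr(\gamma_A^\dagger \gamma_5 \gamma_C \gamma_5)$, and the link step is manifestly symmetric under interchange of the two factors, since the only $B$ contributing is still $B = \bar{A}$ and the return path is still annihilated by $SU(3)$ unitarity. The main obstacle is the link collapse: one must verify carefully that $\mathbb{P}_{SU(3)}$ has not been applied in a way that spoils the Hermitian-conjugate relation between forward and backward traversals. This is precisely the point that made Theorem 1 nontrivial, so once that unitarity statement is in hand, the present theorem reduces to an almost purely algebraic exercise.
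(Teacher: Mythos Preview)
Your proof is correct and the link-collapse step via $SU(3)$ unitarity is identical to the paper's. The coefficient step differs: the paper computes $\overline{(\gamma_5 \otimes 1)}_{AB} = \delta_{B\bar{A}}\,\eta_5(A)$ and $\overline{(1 \otimes \xi_5)}_{BC} = \delta_{C\bar{B}}\,\zeta_5(B)$ as explicit hypercubic sign factors and multiplies them to obtain $\delta_{AC}\,\eta_5(A)\zeta_5(A) = \delta_{AC}\,\epsilon(A) = \overline{(\gamma_5 \otimes \xi_5)}_{AC}$, whereas you reach the same endpoint via the Fierz-type completeness relation for the $\gamma_B$ basis. Your route is more general---it would give $\sum_B \overline{(\gamma_{S_1}\otimes\xi_{T_1})}_{AB}\,\overline{(\gamma_{S_2}\otimes\xi_{T_2})}_{BC} = \overline{(\gamma_{S_1}\gamma_{S_2}\otimes\xi_{T_2}\xi_{T_1})}_{AC}$ for any spin-taste pair---while the paper's explicit sign-factor calculation is more concrete and makes the final identity $\eta_5\zeta_5=\epsilon$ visible. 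One small caveat on presentation: your opening sentence speaks of factorizing into ``two independent pieces,'' but the link factor $U(x_A,x_B)U(x_B,x_C)$ depends on the summed index $B$, so the pieces are not literally independent; your subsequent paragraphs show you understand this, since you first localize $B=\bar{A}$ via the delta functions and only then evaluate the links. The Fierz identity alone, without that localization, would not control the gauge-link product.
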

\begin{proof}
  Using the results of Eq.~\eqref{eq:g5x1:1}, we find that
  \begin{align}
    \overline{(\gamma_5 \otimes 1 )}_{AB}
    & = \delta_{B\bar{A}} \; \eta_5(A) \,,
  \end{align}
  where $\bar{A}_\mu = (A_\mu + 1) \mod 2$.
  Let us rewrite $\overline{( 1 \otimes \xi_5 )}_{AB}$ as follows,
  \begin{align}
    \overline{( 1 \otimes \xi_5 )}_{AB}
    & = \frac{1}{4} \Tr(\gamma_A^\dagger 1 \gamma_B \gamma_5^\dagger  )
    \nonumber \\
    & = \delta_{B\bar{A}}
    [\zeta_1(\bar{A}) \zeta_2(\bar{A}) \zeta_3(\bar{A}) \zeta_4(\bar{A})]
    \nonumber \\
    & = \delta_{B\bar{A}} \; \zeta_5(A) \,,
    \label{eq:1xxi5:1} 
  \end{align}
  where $\bar{A}_\mu = (A_\mu + 1) \mod 2$, and
  \begin{align}
    \zeta_\mu(A) &= (-1)^{Y_\mu}, \qquad
    \text{ for $\mu = 1,2,3,4$} \,,
    \\
    Y_\mu &= \sum_{\nu>\mu} A_\nu,
    \\ \zeta_5(A) &= \zeta_1(A) \zeta_2(A) \zeta_3(A)
    \zeta_4(A) = (-1)^{A_2 + A_4} \,,
    \\
    \zeta_5(\bar{A}) &= \zeta_5(A) \,.
  \end{align}
  Hence, we find that
  \begin{align}
    &[\gamma_5 \otimes 1] [1  \otimes \xi_5] \vert_{AC}
    = \sum_{B} [\gamma_5 \otimes 1]_{AB} [1  \otimes \xi_5]_{BC}
    \nonumber \\
    &= \sum_{B}
    \Big\{ \overline{(\gamma_5 \otimes 1 )}_{AB}
    \overline{( 1 \otimes \xi_5 )}_{BC} \Big\}
             [U(x_A,x_B) U(x_B,x_C)]
    \nonumber \\
    &=  \sum_{B} \Big\{ \delta_{B \bar{A}} \; \eta_5(A) \;
    \delta_{C \bar{B}} \; \zeta_5(C) \Big\}
          [U(x_A,x_{\bar{A}}) U(x_{\bar{A}},x_A)]
    \nonumber \\
    &= \delta_{AC} \; \eta_5(A) \; \zeta_5(A)
    \nonumber \\
     &= \delta_{AC} \; \epsilon(A) = [\gamma_5 \otimes \xi_5]_{AC}
  \end{align}
  This is a proof of the first part of the theorem.
  Similarly,
  \begin{align}
    &[1 \otimes \xi_5] [\gamma_5 \otimes 1]  \vert_{AC}
    = \sum_{B} [1  \otimes \xi_5]_{AB} [\gamma_5 \otimes 1]_{BC} 
    \nonumber \\
    &= \sum_{B}
    \Big\{ \overline{( 1 \otimes \xi_5 )}_{AB}
    \overline{(\gamma_5 \otimes 1 )}_{BC} \Big\}
             [U(x_A,x_B) U(x_B,x_C)]
    \nonumber \\
    &=  \sum_{B} \Big\{ \delta_{B \bar{A}} \; \zeta_5(A) \;
    \delta_{C \bar{B}} \; \eta_5(C) \Big\}
          [U(x_A,x_{\bar{A}}) U(x_{\bar{A}},x_A)]
    \nonumber \\
    &= \delta_{AC} \; \eta_5(A) \; \zeta_5(A)
    \nonumber \\
     &= \delta_{AC} \; \epsilon(A) = [\gamma_5 \otimes \xi_5]_{AC}
  \end{align}
  This is a proof of the second part of the theorem.
  \\(Q.E.D.)
\end{proof}

We prove the Ward identities in
Eqs.~\eqref{eq:wi:u(1)a-2}-\eqref{eq:wi:u(1)a-3} as follows.
\begin{thm}
  \begin{align}
    [\gamma_5 \otimes \xi_5] [\gamma_5 \otimes 1 ]
    &= [\gamma_5 \otimes 1 ] [\gamma_5 \otimes \xi_5]
    = [1 \otimes \xi_5] \,,
    \\
    [\gamma_5 \otimes \xi_5] [1 \otimes \xi_5 ]
    &= [1 \otimes \xi_5 ] [\gamma_5 \otimes \xi_5]
    = [\gamma_5 \otimes 1 ] \,.
  \end{align}
  \label{thm-4}
\end{thm}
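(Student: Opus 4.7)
The plan is to reduce everything to the previously established results: Theorem 1 ($[\gamma_5 \otimes 1]^2 = [1 \otimes 1]$), its natural counterpart $[1 \otimes \xi_5]^2 = [1 \otimes 1]$, and Theorem 3 ($[\gamma_5 \otimes \xi_5] = [\gamma_5 \otimes 1][1 \otimes \xi_5] = [1 \otimes \xi_5][\gamma_5 \otimes 1]$). The core idea is purely algebraic: both identities to be proved amount to cancelling a factor of $[\gamma_5 \otimes 1]$ (respectively $[1 \otimes \xi_5]$) against its own square.

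First, I would establish the auxiliary lemma $[1 \otimes \xi_5]^2 = [1 \otimes 1]$ by essentially replaying the proof of Theorem 1 verbatim, with $\eta_5$ replaced by $\zeta_5$. Concretely, using $\overline{(1 \otimes \xi_5)}_{AB} = \delta_{B\bar{A}}\,\zeta_5(A)$ from Eq.~\eqref{eq:1xxi5:1}, the matrix product collapses to $\delta_{AC}\,\zeta_5(A)\,\zeta_5(\bar A)\cdot [U(x_A,x_{\bar A}) U(x_{\bar A},x_A)]$. The identity $\zeta_5(\bar A) = \zeta_5(A)$ (already noted in the proof of Theorem 3) makes the sign prefactor unity, while the $SU(3)$ projection forces $U(x_{\bar A},x_A) = U(x_A,x_{\bar A})^\dagger$ so that the link product is the identity in color. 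Thus $[1\otimes\xi_5]^2_{AC} = \delta_{AC} = [1\otimes 1]_{AC}$.

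With this lemma in hand, the theorem follows in four short lines by substituting Theorem 3 and applying the idempotency results. For the first identity I would write
\begin{align}
[\gamma_5 \otimes \xi_5][\gamma_5 \otimes 1]
&= [1 \otimes \xi_5][\gamma_5 \otimes 1][\gamma_5 \otimes 1]
= [1 \otimes \xi_5][1 \otimes 1]
= [1 \otimes \xi_5], \nonumber \\
[\gamma_5 \otimes 1][\gamma_5 \otimes \xi_5]
&= [\gamma_5 \otimes 1][\gamma_5 \otimes 1][1 \otimes \xi_5]
= [1 \otimes 1][1 \otimes \xi_5]
= [1 \otimes \xi_5], \nonumber
\end{align}
using the two factorizations from Theorem 3 and Theorem 1 respectively. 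For the second identity I would use the mirror-image argument, pulling out $[1\otimes\xi_5]$ via Theorem 3 and cancelling it with the auxiliary lemma.

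The only nontrivial step is verifying the auxiliary lemma $[1\otimes\xi_5]^2 = [1\otimes 1]$, and even that is essentially a copy of the proof of Theorem 1; the main thing to check carefully is that $\zeta_5(\bar A) = \zeta_5(A)$ and that the link product $U(x_A,x_{\bar A})U(x_{\bar A},x_A)$ is still the identity after $SU(3)$ projection (which it is, by the same argument as before). Associativity of operator multiplication and the already-proven Theorems~1 and~3 then make the rest of the proof mechanical, so I do not anticipate any real obstacle.
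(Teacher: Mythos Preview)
Your proposal is correct and follows essentially the same route as the paper: factor $[\gamma_5\otimes\xi_5]$ via Theorem~\ref{thm-3}, then cancel the adjacent factor using $[\gamma_5\otimes 1]^2=[1\otimes 1]$ (Theorem~\ref{thm-1}) or $[1\otimes\xi_5]^2=[1\otimes 1]$. The paper's proof simply invokes the latter identity (stated as Eq.~\eqref{eq:rr-xi5-2} in the main text) without writing out its derivation in the appendix, whereas you supply the short $\zeta_5$ argument explicitly; apart from this added care, the two arguments are the same.
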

\begin{proof}
  Using the results of Theorem \ref{thm-3}, we know
  the following Ward identity:
  \begin{align}
    [\gamma_5 \otimes \xi_5] &= [1  \otimes \xi_5] [\gamma_5 \otimes 1]
    \label{eq:wi-thm-3-1}
  \end{align}
  Let us multiply $[\gamma_5 \otimes 1]$ on both sides of
  Eq.~\eqref{eq:wi-thm-3-1}.
  Then,
  \begin{align} 
    [\gamma_5 \otimes \xi_5] [\gamma_5 \otimes 1]  &=
    [1  \otimes \xi_5] [\gamma_5 \otimes 1]^2
    = [1  \otimes \xi_5] \,.
  \end{align}
  Here we use the recursion relationship in Theorem \ref{thm-1}. 
  Similarly, from the results of Theorem \ref{thm-3}, we know
  that
  \begin{align}
    [\gamma_5 \otimes \xi_5]
    &= [\gamma_5 \otimes 1] [1  \otimes \xi_5]
    \label{eq:wi-thm-3-2}
  \end{align}
  Let us multiply $[\gamma_5 \otimes 1]$ on both sides of
  Eq.~\eqref{eq:wi-thm-3-2}.
  \begin{align}
    [\gamma_5 \otimes 1] [\gamma_5 \otimes \xi_5]
    &= [\gamma_5 \otimes 1]^2 [1  \otimes \xi_5]
    = [1  \otimes \xi_5]
  \end{align}
  Here, we use the recursion relation in Theorem \ref{thm-1}.
  This completes a proof of the first part of Theorem \ref{thm-4}.

  Let us multiply $[1 \otimes \xi_5]$ on both sides of
  Eq.~\eqref{eq:wi-thm-3-2}.
  \begin{align}
    [\gamma_5 \otimes \xi_5] [1  \otimes \xi_5]
    &= [\gamma_5 \otimes 1] [1  \otimes \xi_5]^2
    = [\gamma_5 \otimes 1]
  \end{align}
  Similarly, let us multiply $[1 \otimes \xi_5]$ on both sides of
  Eq.~\eqref{eq:wi-thm-3-1}.
  \begin{align}
    [1 \otimes \xi_5] [\gamma_5 \otimes \xi_5]
    &= [1  \otimes \xi_5]^2 [\gamma_5 \otimes 1]
    = [\gamma_5 \otimes 1]
  \end{align}
  This completes a proof of the second part of Theorem \ref{thm-4}.
  \\ (Q.E.D.)
\end{proof}



\section{Examples for the leakage pattern of zero modes}
\label{app:ex-zero}

Let us begin with the case $Q=-2$.
In Fig.~\ref{fig:qm2-leak-g5-f1set}, we show leakage patterns of the
chirality operator for the first set of zero modes at $Q=-2$.
In Fig.~\ref{fig:qm2-leak-xi5-f1set}, we present the leakage patterns
of the shift operator for the first set of zero modes at $Q=-2$.
By comparing Fig.~\ref{fig:qm2-leak-g5-f1set} with
Fig.~\ref{fig:qm2-leak-xi5-f1set}, we find that the chiral Ward
identities of Eqs.~\eqref{eq:wi-12} and \eqref{eq:wi-13} are
well-respected.

\begin{figure}[htb]
%
  \subfigure[$|\Gamma_5|^i_1 = |\Gamma_5 (\lambda_i, \lambda_{1})| $]{
    \includegraphics[width=\linewidth]{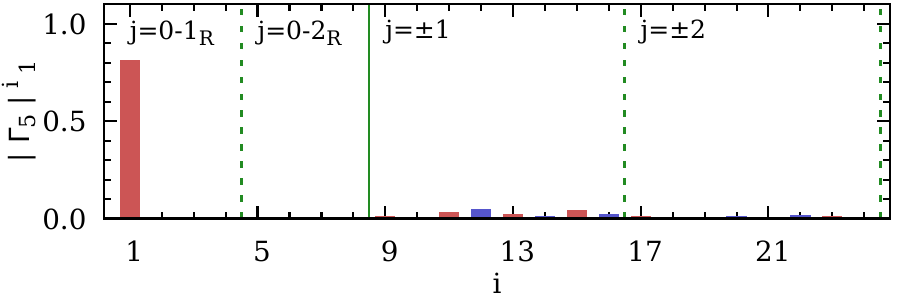}
    \label{sfig:qm2-leak-g5-f1}
  }
  \\
  \subfigure[$|\Gamma_5|^i_3 = |\Gamma_5 (\lambda_i, \lambda_{3})| $]{
    \includegraphics[width=\linewidth]{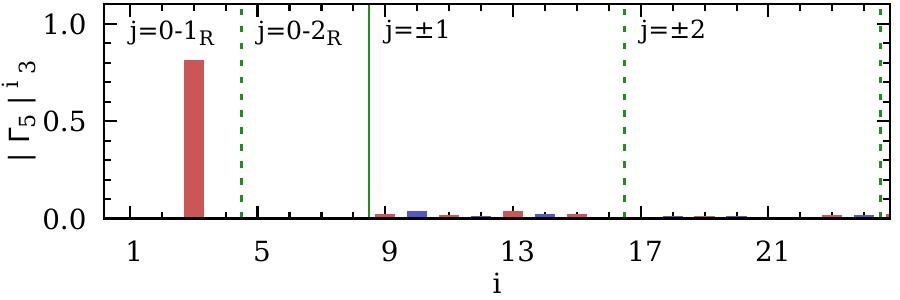}
    \label{sfig:qm2-leak-g5-f3}
  }
  \caption{$[\gamma_5 \otimes 1]$ leakage pattern for the first quartet
      of would-be zero modes at $Q=-2$.}
  \label{fig:qm2-leak-g5-f1set}
\end{figure}

\begin{figure}[h]
%
  \subfigure[$|\Xi_5|^i_1 = |\Xi_5 (\lambda_i, \lambda_{1})| $]{
    \includegraphics[width=\linewidth]{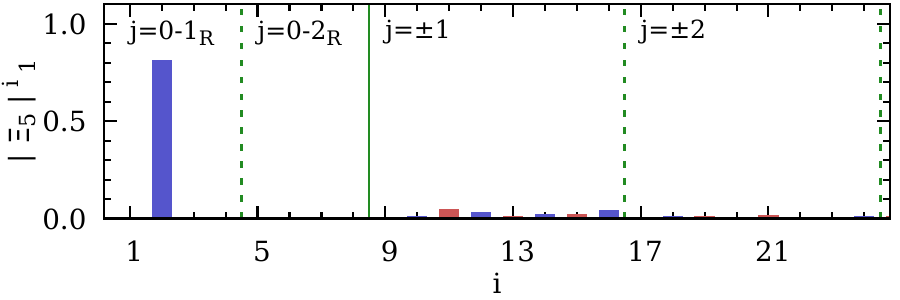}
    \label{sfig:qm2-leak-xi5-f1}
  }
  \\
  \subfigure[$|\Xi_5|^i_3 = |\Xi_5 (\lambda_i, \lambda_{3})| $]{
    \includegraphics[width=\linewidth]{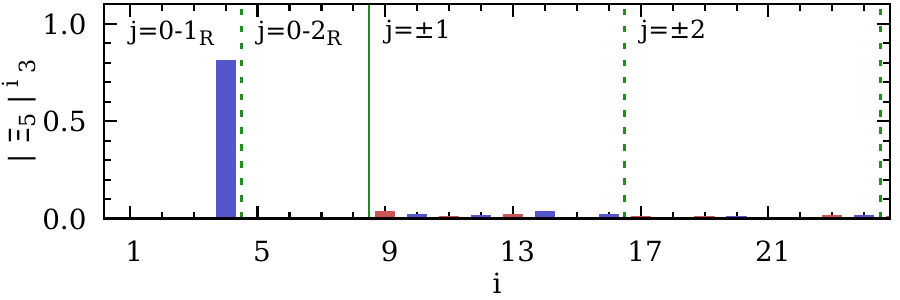}
    \label{sfig:qm2-leak-xi5-f3}
  }
  \caption{$[1 \otimes \xi_5]$ leakage pattern for the first quartet of
    would-be zero modes at $Q=-2$.}
  \label{fig:qm2-leak-xi5-f1set}
\end{figure}

In Fig.~\ref{fig:qm2-leak-g5-f5set}, we show leakage patterns of
the chirality operator for the second set of zero modes at $Q=-2$.
In Fig.~\ref{fig:qm2-leak-xi5-f5set}, we present the leakage patterns
of the shift operator for the second set of zero modes at $Q=-2$.
By comparing Fig.~\ref{fig:qm2-leak-g5-f5set} with
Fig.~\ref{fig:qm2-leak-xi5-f5set}, we find that the chiral Ward
identities of Eqs.~\eqref{eq:wi-12} and \eqref{eq:wi-13} are
well-preserved.

\begin{figure}[h]
%
  \subfigure[$|\Gamma_5|^i_5 = |\Gamma_5 (\lambda_i, \lambda_{5})| $]{
    \includegraphics[width=\linewidth]{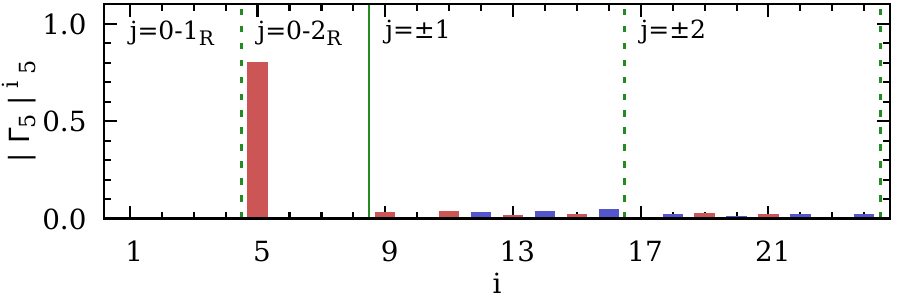}
    \label{sfig:qm2-leak-g5-f5}
  }
  \\
  \subfigure[$|\Gamma_5|^i_7 = |\Gamma_5 (\lambda_i, \lambda_{7})| $]{
    \includegraphics[width=\linewidth]{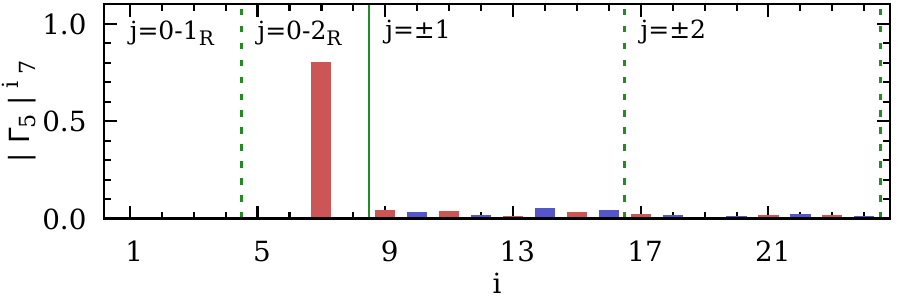}
    \label{sfig:qm2-leak-g5-f7}
  }
  \caption{$[\gamma_5 \otimes 1]$ leakage pattern for the second quartet of
    would-be zero modes at $Q=-2$.}
  \label{fig:qm2-leak-g5-f5set}
\end{figure}

\begin{figure}[h]
%
  \subfigure[$|\Xi_5|^i_5 = |\Xi_5 (\lambda_i, \lambda_{5})| $]{
    \includegraphics[width=\linewidth]{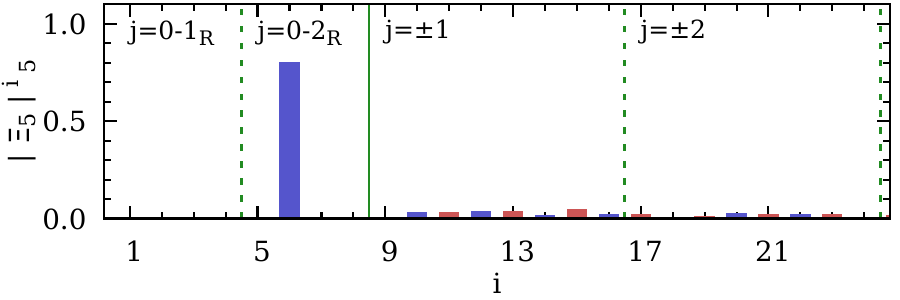}
    \label{sfig:qm2-leak-xi5-f5}
  }
  \\
  \subfigure[$|\Xi_5|^i_7 = |\Xi_5 (\lambda_i, \lambda_{7})| $]{
    \includegraphics[width=\linewidth]{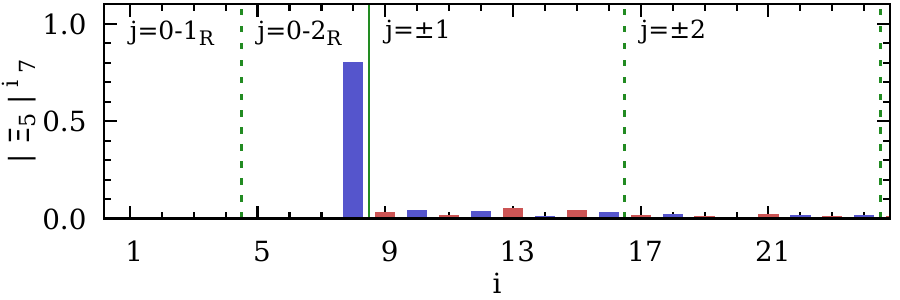}
    \label{sfig:qm2-leak-xi5-f7}
  }
  \caption{$[1 \otimes \xi_5]$ leakage pattern for the second quartet of
    would-be zero modes at $Q=-2$.}
  \label{fig:qm2-leak-xi5-f5set}
\end{figure}

\begin{figure}[tbhp]
%
  \subfigure[$|\Gamma_5|^i_9 = |\Gamma_5 (\lambda_i, \lambda_{9})| $]{
    \includegraphics[width=\linewidth]{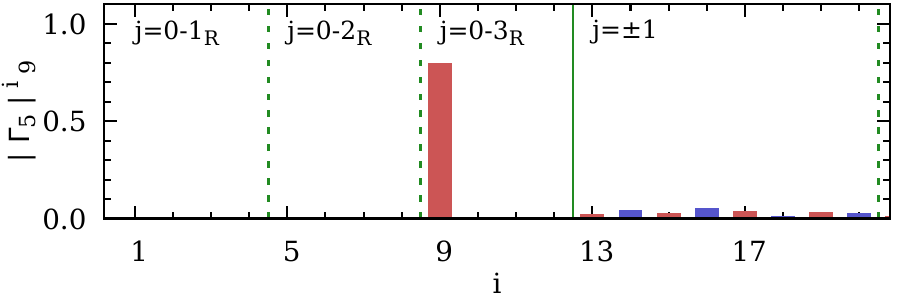}
    \label{sfig:qm3-leak-g5-f9}
  }
  \\
  \subfigure[$|\Gamma_5|^i_{11} = |\Gamma_5 (\lambda_i, \lambda_{11})| $]{
    \includegraphics[width=\linewidth]{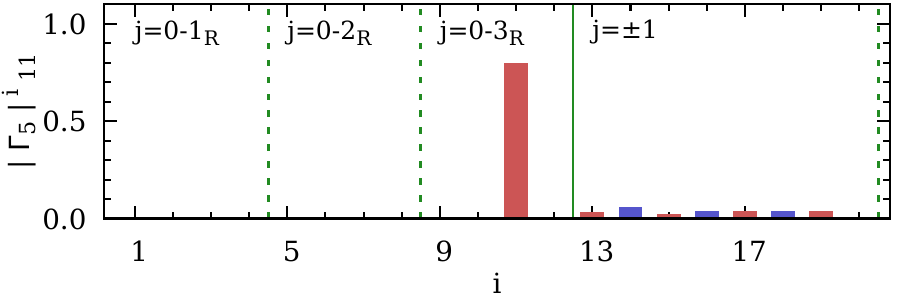}
    \label{sfig:qm3-leak-g5-f11}
  }
  \caption{$[\gamma_5 \otimes 1]$ leakage pattern for the third quartet of
    would-be zero modes at $Q=-3$.}
  \label{fig:qm3-leak-g5-f9set}
\end{figure}

Now let us consider an example with $Q=-3$.
The leakage patterns for the first and second sets of zero modes are
similar to those at $Q=-2$.
Hence, we choose the third set of zero modes as our example.
In Fig.~\ref{fig:qm3-leak-g5-f9set}, we show leakage patterns of
the chirality operator for the third set of zero modes at $Q=-3$.
In Fig.~\ref{fig:qm3-leak-xi5-f9set}, we present the leakage pattern
of the shift operator for the third set of zero modes at $Q=-3$.
By comparing Fig.~\ref{fig:qm3-leak-g5-f9set} with
Fig.~\ref{fig:qm3-leak-xi5-f9set}, we find that the chiral Ward
identities of Eqs.~\eqref{eq:wi-12} and \eqref{eq:wi-13} are
well-preserved.

\begin{figure}[tbhp]
%
  \subfigure[$|\Xi_5|^i_9 = |\Xi_5 (\lambda_i, \lambda_{9})| $]{
    \includegraphics[width=\linewidth]{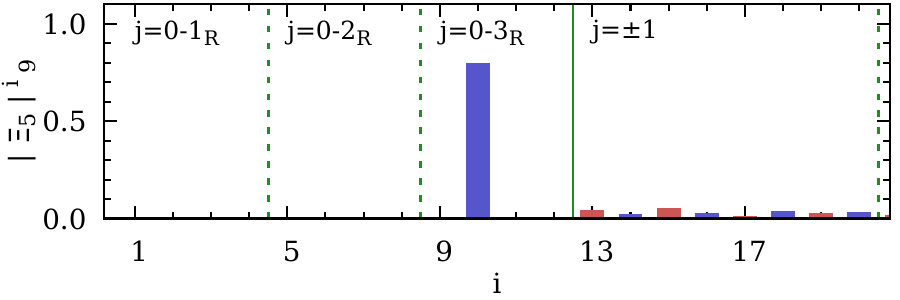}
    \label{sfig:qm3-leak-xi5-f9}
  }
  \\
  \subfigure[$|\Xi_5|^i_{11} = |\Xi_5 (\lambda_i, \lambda_{11})| $]{
    \includegraphics[width=\linewidth]{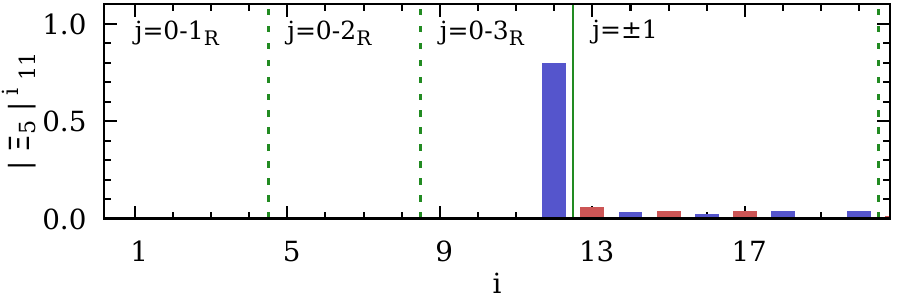}
    \label{sfig:qm3-leak-xi5-f11}
  }
  \caption{$[1 \otimes \xi_5]$ leakage pattern for the third quartet of
    would-be zero modes at $Q=-3$.}
  \label{fig:qm3-leak-xi5-f9set}
\end{figure}


\section{Examples for the leakage pattern of non-zero modes}
\label{app:ex-nonzero}
Let us begin with an example with $Q=0$.
Since the gauge configuration with $Q=0$ usually has no zero mode
($n_- = n_+ = 0$), it is relatively easy to study non-zero modes.
In Fig.~\ref{fig:q0-leak-g5-f1set}, we present leakage patterns of the
chirality operator $\Gamma_5 = [ \gamma_5 \otimes 1 ]$ for non-zero modes
$\{ \lambda_1, \lambda_3, \lambda_5, \lambda_7 \} = \{ \lambda_{j,m} |\;
j=+1,\; m=1,2,3,4 \}$ in the $j = +1$ quartet when $Q=0$.
The results show that the $\Gamma_5$ leakages for non-zero modes
$\lambda_{+1,m}$ mostly go into their parity partners $\{ \lambda_2,
\lambda_4, \lambda_6, \lambda_8 \} = \{ \lambda_{j,m} |\; j=-1,\;
m=1,2,3,4 \}$ in the $j = -1$ quartet.
Meanwhile, the leakages to other quartets such as $j = \pm 2, \pm 3$
are negligibly small compared to those of the $j = -1$ quartet
elements.
This observation is consistent with that for $Q = -1$ in
Fig.~\ref{fig:qm1-leak-g5-f5set}.

\begin{figure}[t!]
  \subfigure[$|\Gamma_5|^i_1 = |\Gamma_5 (\lambda_i, \lambda_1)| $]{
    \includegraphics[width=\linewidth]{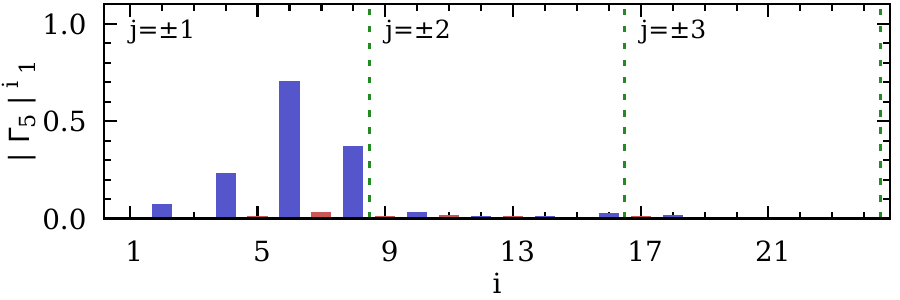}
    \label{sfig:q0-leak-g5-f1}
  }
  \\
  \subfigure[$|\Gamma_5|^i_3 = |\Gamma_5 (\lambda_i, \lambda_3)| $]{
    \includegraphics[width=\linewidth]{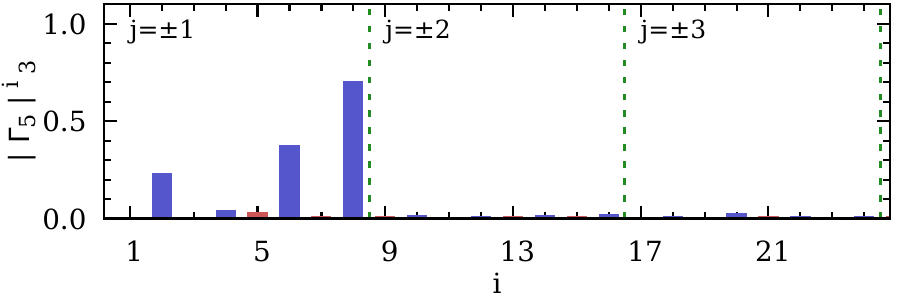}
    \label{sfig:q0-leak-g5-f3}
  }
  \\
  \subfigure[$|\Gamma_5|^i_5 = |\Gamma_5 (\lambda_i, \lambda_5)| $]{
    \includegraphics[width=\linewidth]{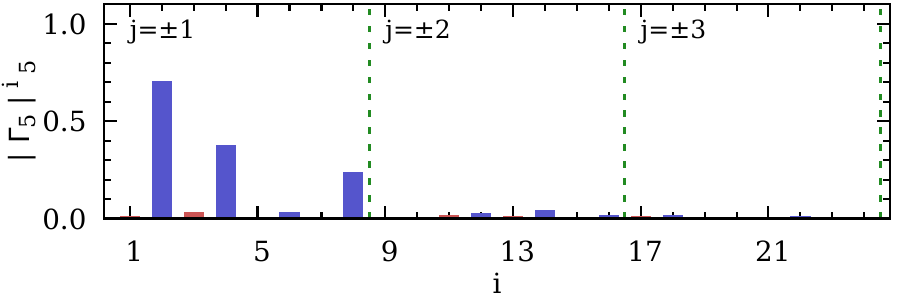}
    \label{sfig:q0-leak-g5-f5}
  }
  \\
  \subfigure[$|\Gamma_5|^i_7 = |\Gamma_5 (\lambda_i, \lambda_{7})|$]{
    \includegraphics[width=\linewidth]{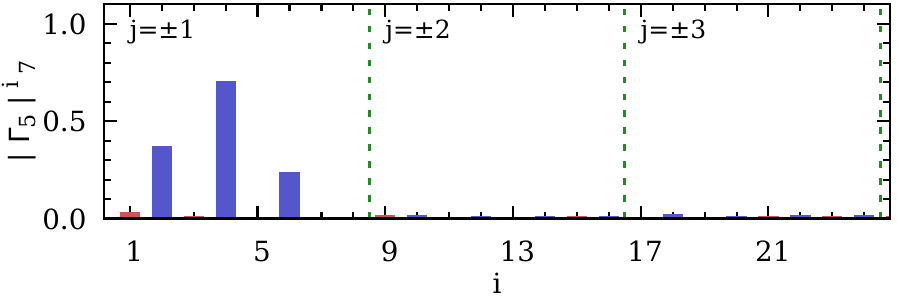}
    \label{sfig:q0-leak-g5-f7}
  }
  \caption{$[\gamma_5 \otimes 1]$ leakage pattern for the first quartet
      of non-zero modes at $Q=0$.}
  \label{fig:q0-leak-g5-f1set}
\end{figure}

\begin{figure}[tbhp]
  \subfigure[$|\Xi_5|^i_1 = |\Xi_5 (\lambda_i, \lambda_1)| $]{
    \includegraphics[width=\linewidth]{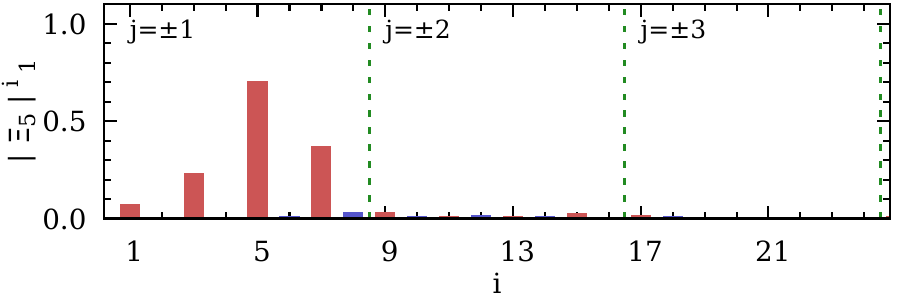}
    \label{sfig:q0-leak-xi5-f1}
  }
  \\
  \subfigure[$|\Xi_5|^i_3 = |\Xi_5 (\lambda_i, \lambda_3)| $]{
    \includegraphics[width=\linewidth]{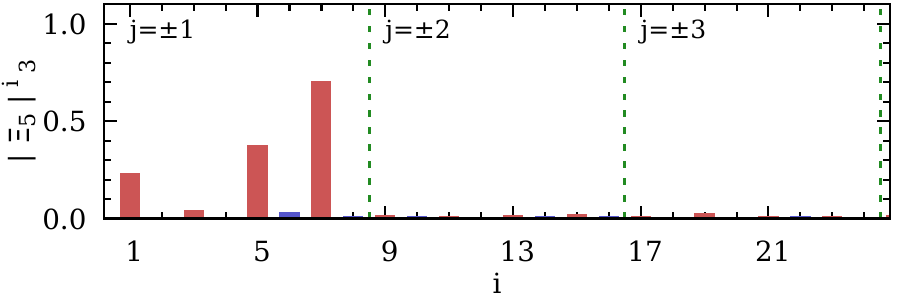}
    \label{sfig:q0-leak-xi5-f3}
  }
  \\
  \subfigure[$|\Xi_5|^i_5 = |\Xi_5 (\lambda_i, \lambda_5)| $]{
    \includegraphics[width=\linewidth]{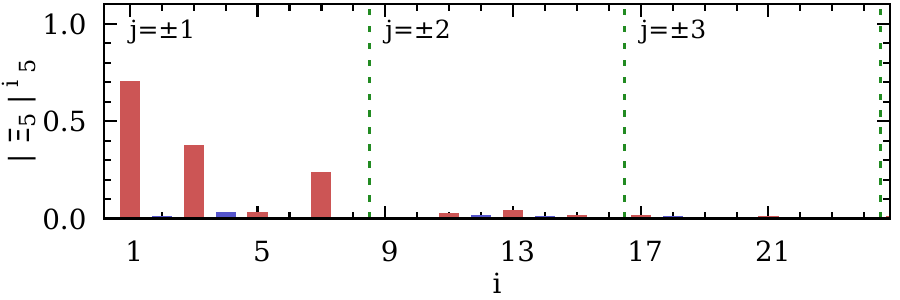}
    \label{sfig:q0-leak-xi5-f5}
  }
  \\
  \subfigure[$|\Xi_5|^i_7 = |\Xi_5 (\lambda_i, \lambda_{7})|$]{
    \includegraphics[width=\linewidth]{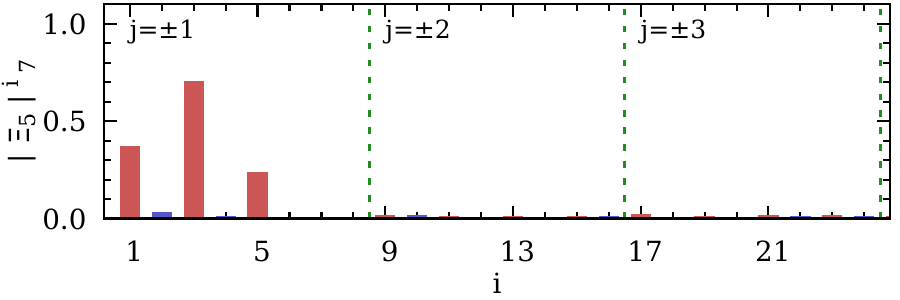}
    \label{sfig:q0-leak-xi5-f7}
  }
  \caption{$[1 \otimes \xi_5]$ leakage pattern for the first quartet of
    non-zero modes at $Q=0$.}
  \label{fig:q0-leak-xi5-f1set}
\end{figure}

In Fig.~\ref{fig:q0-leak-xi5-f1set}, we present leakage patterns of
the shift operator $\Xi_5 = [ 1 \otimes \xi_5 ]$ for the non-zero
modes $\{ \lambda_1, \lambda_3, \lambda_5, \lambda_7 \}$ of
$\lambda_{+1,m}$ in the $j = +1$ quartet when $Q=0$.
For the $\Xi_5$ operator, we find the great part of leakages are from
non-zero modes $\lambda_{+1,m}$ to other elements within the $j=+1$
quartet.
Meanwhile, there are only negligible leakages to parity partner
quartet elements ($j = -1$) and other quartets with $j = \pm 2, \pm
3$, and so on.
This observation corresponds to the case $Q = -1$ in
Fig.~\ref{fig:qm1-leak-xi5-f5set}.
We also find that the leakages of $\Gamma_5$ in
Fig.~\ref{fig:q0-leak-g5-f1set} and $\Xi_5$ in
Fig.~\ref{fig:q0-leak-xi5-f1set} are related to each other by the Ward
identity of Eq.~\eqref{eq:ward_jm}.
%

\begin{figure}[!t]
  \subfigure[$|\Gamma_5|^i_{9} = |\Gamma_5 (\lambda_i, \lambda_{9})| $]{
    \includegraphics[width=\linewidth]{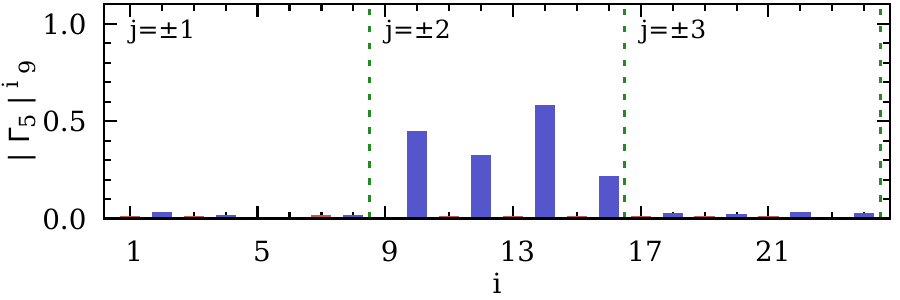}
    \label{sfig:q0-leak-g5-f9}
  }
  \\
  \subfigure[$|\Gamma_5|^i_{11} = |\Gamma_5 (\lambda_i, \lambda_{11})| $]{
    \includegraphics[width=\linewidth]{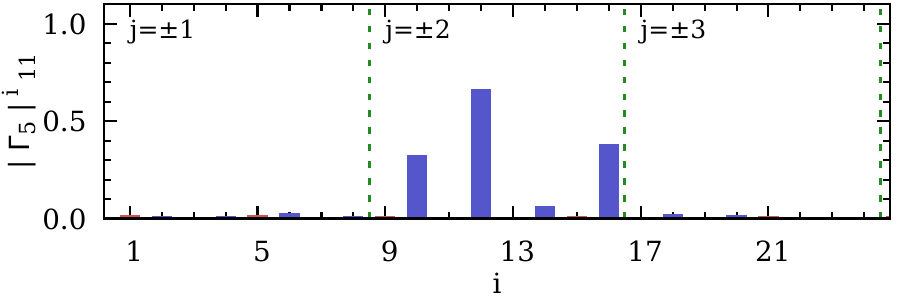}
    \label{sfig:q0-leak-g5-f11}
  }
  \\
  \subfigure[$|\Gamma_5|^i_{13} = |\Gamma_5 (\lambda_i, \lambda_{13})| $]{
    \includegraphics[width=\linewidth]{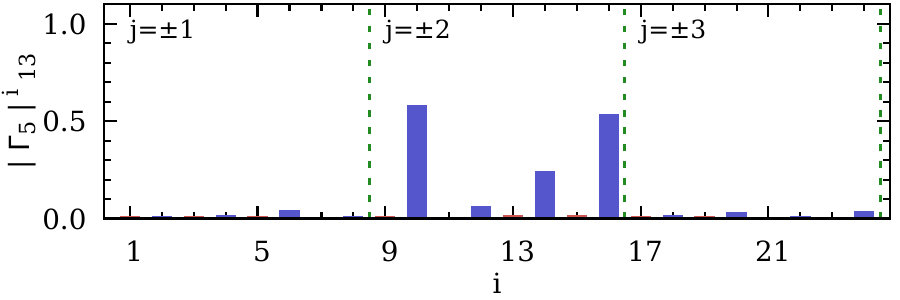}
    \label{sfig:q0-leak-g5-f13}
  }
  \\
  \subfigure[$|\Gamma_5|^i_{15} = |\Gamma_5 (\lambda_i, \lambda_{15})|$]{
    \includegraphics[width=\linewidth]{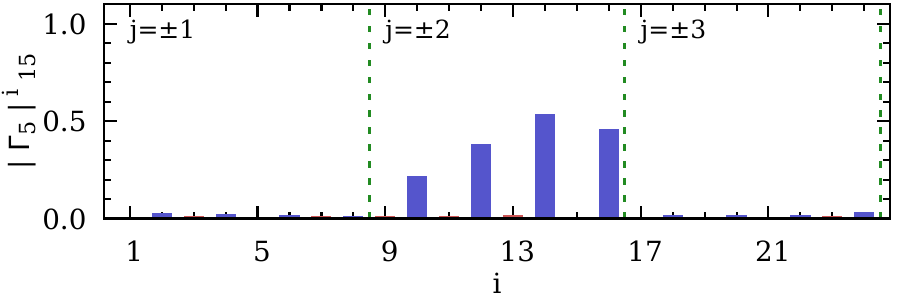}
    \label{sfig:q0-leak-g5-f15}
  }
  \caption{$[\gamma_5 \otimes 1]$ leakage pattern for the second quartet of
    non-zero modes at $Q=0$.}
  \label{fig:q0-leak-g5-f9set}
\end{figure}

\begin{figure}[tbhp]
  \subfigure[$|\Xi_5|^i_{9} = |\Xi_5 (\lambda_i, \lambda_{9})| $]{
    \includegraphics[width=\linewidth]{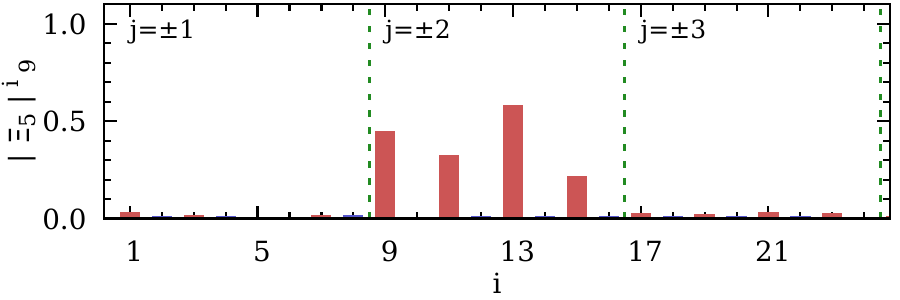}
    \label{sfig:q0-leak-xi5-f9}
  }
  \\
  \subfigure[$|\Xi_5|^i_{11} = |\Xi_5 (\lambda_i, \lambda_{11})| $]{
    \includegraphics[width=\linewidth]{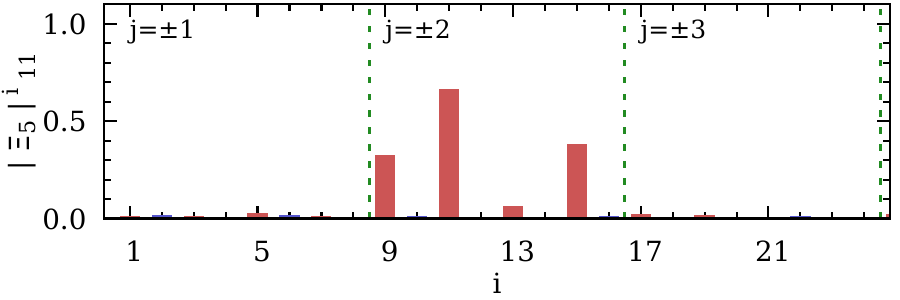}
    \label{sfig:q0-leak-xi5-f11}
  }
  \\
  \subfigure[$|\Xi_5|^i_{13} = |\Xi_5 (\lambda_i, \lambda_{13})| $]{
    \includegraphics[width=\linewidth]{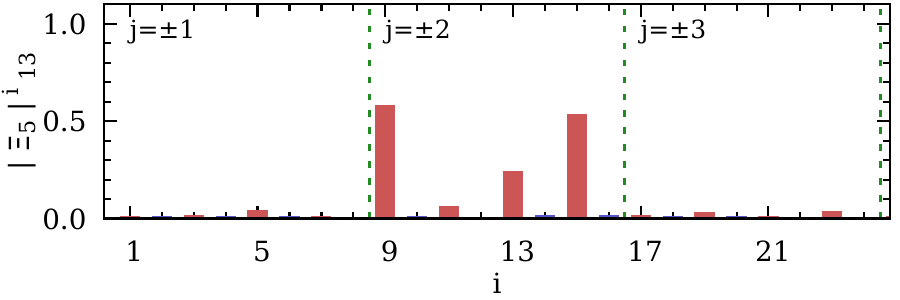}
    \label{sfig:q0-leak-xi5-f13}
  }
  \\
  \subfigure[$|\Xi_5|^i_{15} = |\Xi_5 (\lambda_i, \lambda_{15})|$]{
    \includegraphics[width=\linewidth]{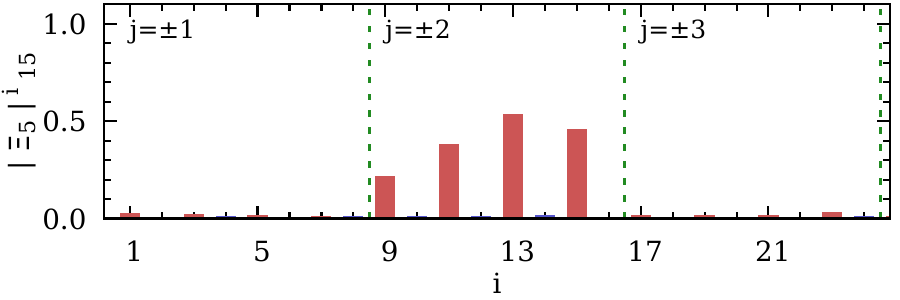}
    \label{sfig:q0-leak-xi5-f15}
  }
  \caption{$[1 \otimes \xi_5]$ leakage pattern for the second quartet of
    non-zero modes at $Q=0$.}
  \label{fig:q0-leak-xi5-f9set}
\end{figure}

In Figs.~\ref{fig:q0-leak-g5-f9set} and \ref{fig:q0-leak-xi5-f9set},
we present leakage patterns of the $\Gamma_5$ and $\Xi_5$ operators,
respectively, for non-zero modes $\{ \lambda_9, \lambda_{11},
\lambda_{13}, \lambda_{15} \} = \{ \lambda_{j,m} |\; j=+2,\; m=1,2,3,4
\}$ in the $j = +2$ quartet when $Q=0$.
Similar to the above cases for $j = +1$, $\Gamma_5$ leakages for
non-zero modes of $j = +2$ mostly go to their parity partner quartet
elements of $j = -2$: $\{ \lambda_{10}, \lambda_{12}, \lambda_{14},
\lambda_{16} \} = \{ \lambda_{j,m} |\; j=-2,\; m=1,2,3,4 \}$, and
$\Xi_5$ leakages mostly go to members within the $j = +2$ quartet: $\{
\lambda_9, \lambda_{11}, \lambda_{13}, \lambda_{15} \}$.
There are only negligible leakages to other quartets for both
operators.
%

\begin{figure}[h]
  \subfigure[$|\Gamma_5|^i_{9} = |\Gamma_5 (\lambda_i, \lambda_{9})| $]{
    \includegraphics[width=\linewidth]{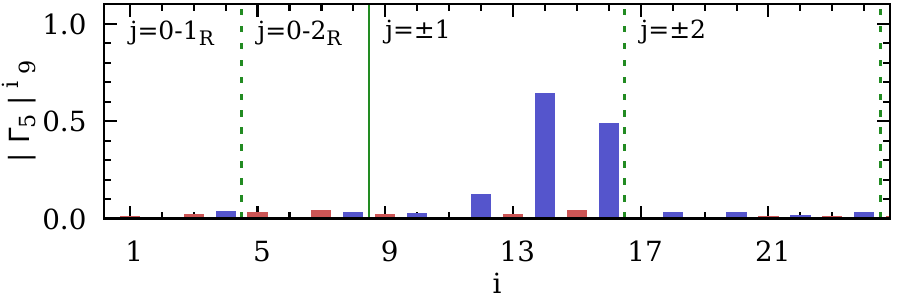}
    \label{sfig:qm2-leak-g5-f9}
  }
  \\
  \subfigure[$|\Gamma_5|^i_{11} = |\Gamma_5 (\lambda_i, \lambda_{11})| $]{
    \includegraphics[width=\linewidth]{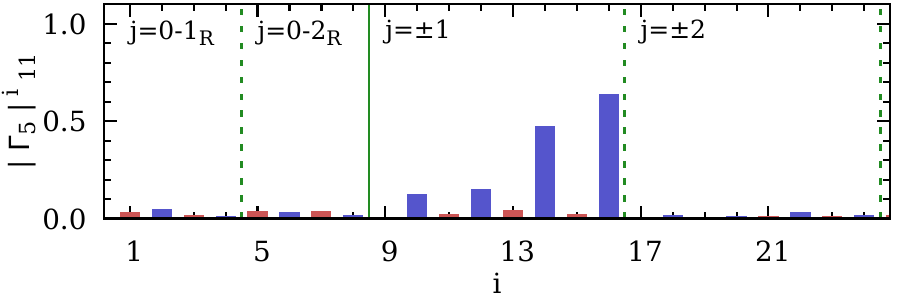}
    \label{sfig:qm2-leak-g5-f11}
  }
  \\
  \subfigure[$|\Gamma_5|^i_{13} = |\Gamma_5 (\lambda_i, \lambda_{13})| $]{
    \includegraphics[width=\linewidth]{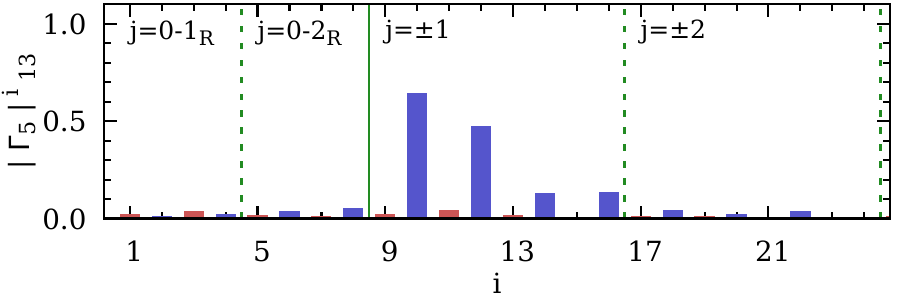}
    \label{sfig:qm2-leak-g5-f13}
  }
  \\
  \subfigure[$|\Gamma_5|^i_{15} = |\Gamma_5 (\lambda_i, \lambda_{15})|$]{
    \includegraphics[width=\linewidth]{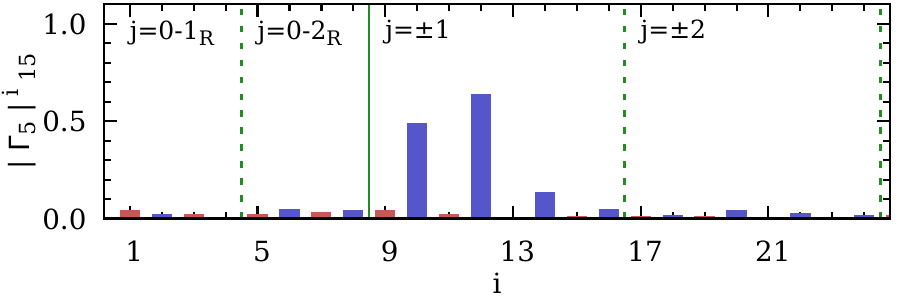}
    \label{sfig:qm2-leak-g5-f15}
  }
  \caption{$[\gamma_5 \otimes 1]$ leakage pattern for the first quartet of
    non-zero modes at $Q=-2$.}
  \label{fig:qm2-leak-g5-f9set}
\end{figure}

Now let us examine the leakage patterns when would-be zero modes exist
($Q \ne 0$).
In Figs.~\ref{fig:qm2-leak-g5-f9set} and \ref{fig:qm2-leak-xi5-f9set},
we present leakage patterns of the $\Gamma_5$ and $\Xi_5$ operators,
respectively, for non-zero modes $\{ \lambda_9, \lambda_{11},
\lambda_{13}, \lambda_{15} \}$ in the $j = +1$ quartet when $Q = -2$.
There are two quartets of right-handed would-be zero modes where $j =
0-1_R$ and $0-2_R$, which corresponds to $n_-=0$ and $n_+=2$ with $Q =
-2$ by the index theorem ($Q = n_- - n_+$).

\begin{figure}[h]
  \subfigure[$|\Xi_5|^i_{9} = |\Xi_5 (\lambda_i, \lambda_{9})| $]{
    \includegraphics[width=\linewidth]{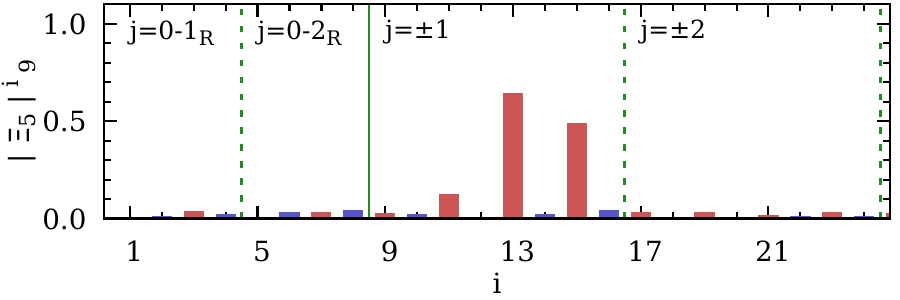}
    \label{sfig:qm2-leak-xi5-f9}
  }
  \\
  \subfigure[$|\Xi_5|^i_{11} = |\Xi_5 (\lambda_i, \lambda_{11})| $]{
    \includegraphics[width=\linewidth]{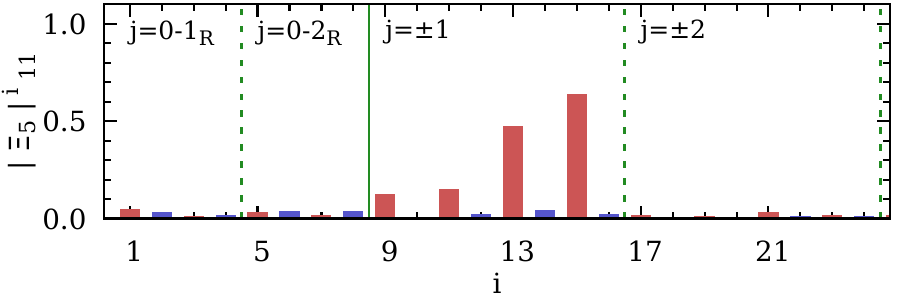}
    \label{sfig:qm2-leak-xi5-f11}
  }
  \\
  \subfigure[$|\Xi_5|^i_{13} = |\Xi_5 (\lambda_i, \lambda_{13})| $]{
    \includegraphics[width=\linewidth]{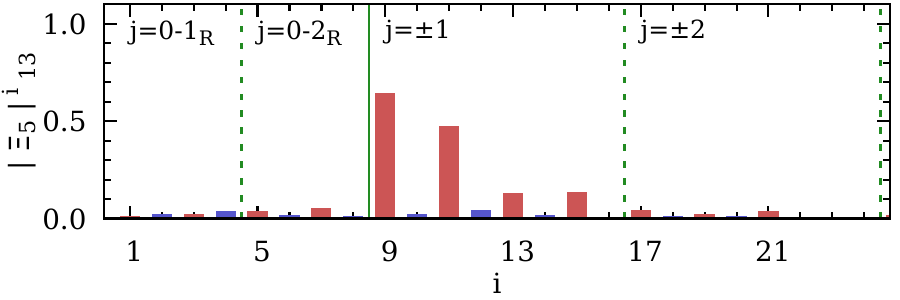}
    \label{sfig:qm2-leak-xi5-f13}
  }
  \\
  \subfigure[$|\Xi_5|^i_{15} = |\Xi_5 (\lambda_i, \lambda_{15})|$]{
    \includegraphics[width=\linewidth]{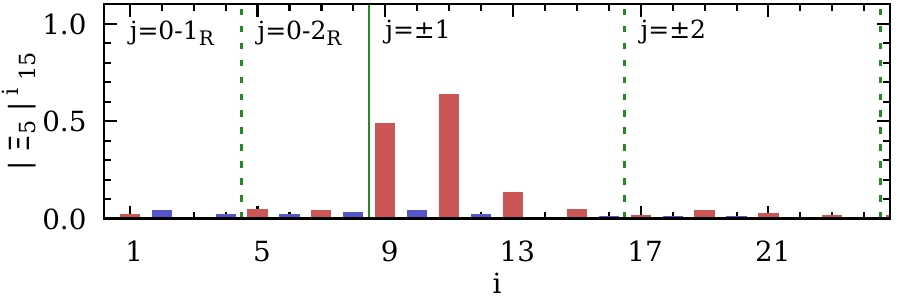}
    \label{sfig:qm2-leak-xi5-f15}
  }
  \caption{$[1 \otimes \xi_5]$ leakage pattern for the first quartet of
    non-zero modes at $Q=-2$.}
  \label{fig:qm2-leak-xi5-f9set}
\end{figure}

%
As in the cases $Q = -1$ (Figs.~\ref{fig:qm1-leak-g5-f5set} and
\ref{fig:qm1-leak-xi5-f5set}) and $Q = 0$
(Figs.~\ref{fig:q0-leak-g5-f1set} and \ref{fig:q0-leak-xi5-f1set}),
$\Gamma_5$ leakages from non-zero modes of $j = +1$ mostly go to the
parity partner $j = -1$ quartet, and $\Xi_5$ leakages from non-zero
modes of $j = +1$ mostly go within the $j = +1$ quartet itself.
Leakages to other non-zero mode quartets and would-be zero mode quartets
are negligibly small.
We also find that the Ward identity between the two leakage patterns
holds.
%

\begin{figure}[t]
  \subfigure[$|\Gamma_5|^i_{13} = |\Gamma_5 (\lambda_i, \lambda_{13})| $]{
    \includegraphics[width=\linewidth]{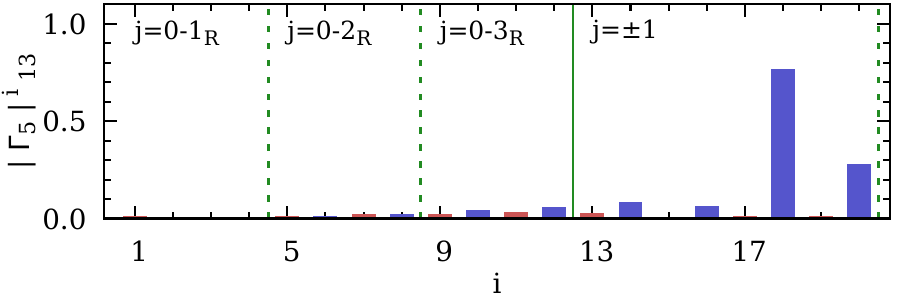}
    \label{sfig:qm3-leak-g5-f13}
  }
  \\
  \subfigure[$|\Gamma_5|^i_{15} = |\Gamma_5 (\lambda_i, \lambda_{15})| $]{
    \includegraphics[width=\linewidth]{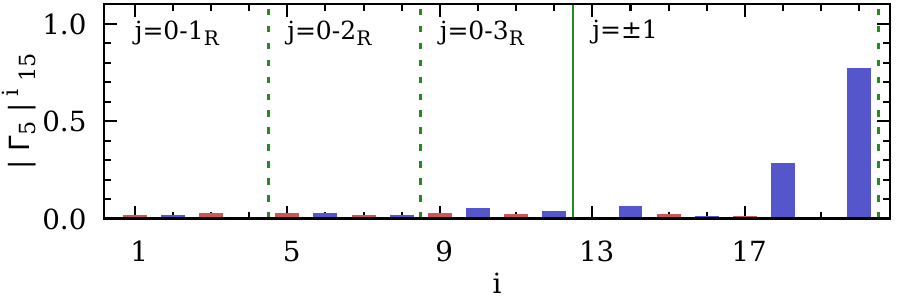}
    \label{sfig:qm3-leak-g5-f15}
  }
  \\
  \subfigure[$|\Gamma_5|^i_{17} = |\Gamma_5 (\lambda_i, \lambda_{17})| $]{
    \includegraphics[width=\linewidth]{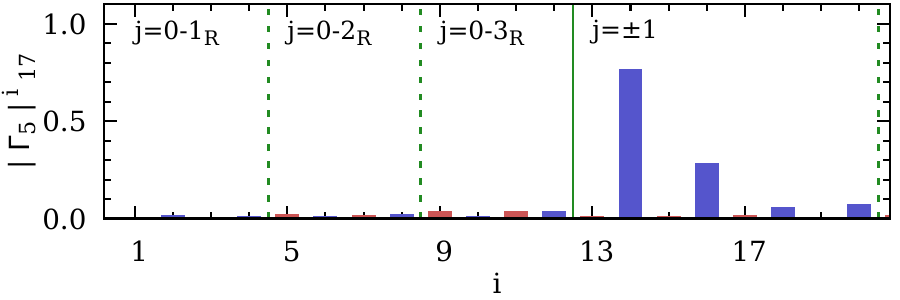}
    \label{sfig:qm3-leak-g5-f17}
  }
  \\
  \subfigure[$|\Gamma_5|^i_{19} = |\Gamma_5 (\lambda_i, \lambda_{19})|$]{
    \includegraphics[width=\linewidth]{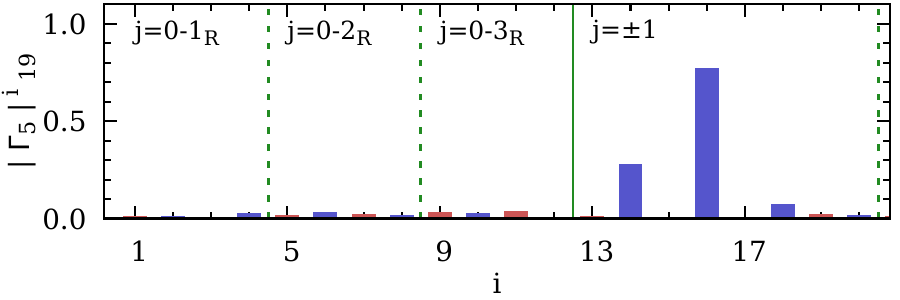}
    \label{sfig:qm3-leak-g5-f19}
  }
  \caption{$[\gamma_5 \otimes 1]$ leakage pattern for the first quartet of
    non-zero modes at $Q=-3$.}
  \label{fig:qm3-leak-g5-f13set}
\end{figure}

\begin{figure}[t]
  \subfigure[$|\Xi_5|^i_{13} = |\Xi_5 (\lambda_i, \lambda_{13})| $]{
    \includegraphics[width=\linewidth]{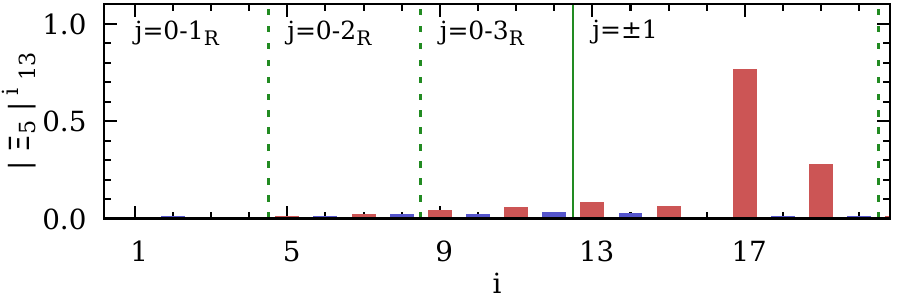}
    \label{sfig:qm3-leak-xi5-f13}
  }
  \\
  \subfigure[$|\Xi_5|^i_{15} = |\Xi_5 (\lambda_i, \lambda_{15})| $]{
    \includegraphics[width=\linewidth]{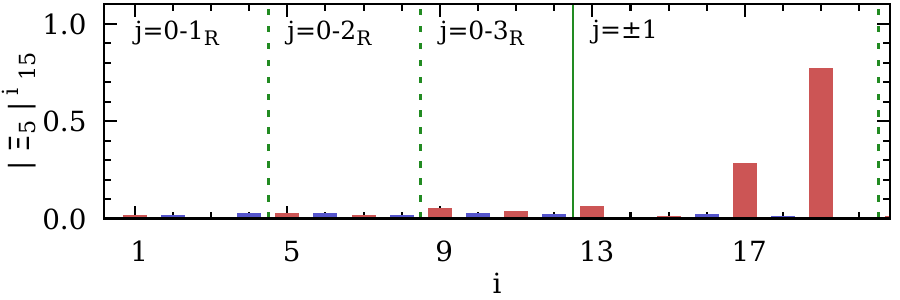}
    \label{sfig:qm3-leak-xi5-f15}
  }
  \\
  \subfigure[$|\Xi_5|^i_{17} = |\Xi_5 (\lambda_i, \lambda_{17})| $]{
    \includegraphics[width=\linewidth]{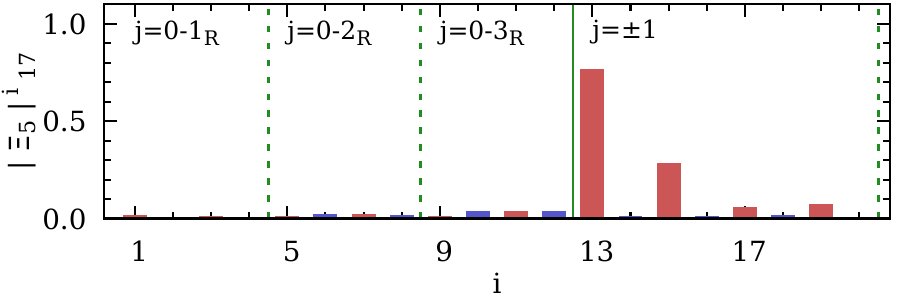}
    \label{sfig:qm3-leak-xi5-f17}
  }
  \\
  \subfigure[$|\Xi_5|^i_{19} = |\Xi_5 (\lambda_i, \lambda_{19})|$]{
    \includegraphics[width=\linewidth]{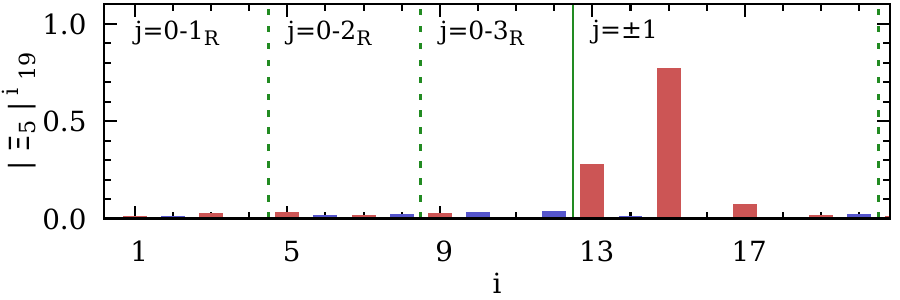}
    \label{sfig:qm3-leak-xi5-f19}
  }
  \caption{$[1 \otimes \xi_5]$ leakage pattern for the first quartet of
    non-zero modes at $Q=-3$.}
  \label{fig:qm3-leak-xi5-f13set}
\end{figure}

In Figs.~\ref{fig:qm3-leak-g5-f13set} and
\ref{fig:qm3-leak-xi5-f13set}, we present leakage patterns for the
$\Gamma_5$ and $\Xi_5$ operators, respectively, for non-zero modes $\{
\lambda_{13}, \lambda_{15}, \lambda_{17}, \lambda_{19} \}$ in the $j =
+1$ quartet when $Q = -3$.
Their leakage patterns are also consistent with those for $Q=0,\,-1,\,
-2$ in our previous discussion.


\bibliography{ref} 
\end{document}